\theoremstyle{plain}
\newtheorem{theorem}{Theorem}
\newtheorem{prop}[theorem]{Proposition}
\newtheorem{thm}{Theorem}[section]
\newtheorem{lem}[thm]{Lemma}
\theoremstyle{definition}
\theoremstyle{remark}
\newcommand{\argmin}{\operatornamewithlimits{argmin}}
\newcommand{\Var}{\operatornamewithlimits{Var}}
\newcommand{\Cov}{\operatornamewithlimits{Cov}}
\title{Low Rank  and Structured Modeling of High-dimensional Vector Autoregressions}
\author[1]{Sumanta Basu \footnote{These authors contributed equally} \thanks{Email: sumbose@cornell.edu}}
\author[2]{Xianqi Li $^*$ \thanks{Email: xianqili@ufl.edu}}
\author[3]{George Michailidis \thanks{Email: gmichail@ufl.edu}}
\affil[1]{Department of Statistical Science, Cornell University}
\affil[2]{ Department of Mathematics and the Informatics Institute, University of Florida}
\affil[3]{Departments of Statistics and Computer Science and the Informatics Institute, University of Florida}
\date{ }
\begin{document}
\maketitle

\begin{abstract}
\noindent
Network modeling of high-dimensional time series data is a key learning task due to its widespread use in a number of application areas, including macroeconomics, finance and neuroscience. While the problem of \textit{ sparse} modeling based on vector autoregressive models (VAR) has been investigated in depth
in the literature, more complex network structures that involve low rank and group sparse components  have received considerably less attention, despite their presence in data. Failure to account for low-rank structures results in spurious connectivity among the observed time series, which may lead practitioners to draw incorrect conclusions about pertinent scientific or policy questions. In order to accurately estimate a network of Granger causal interactions after accounting for latent effects, we introduce a novel approach for estimating \textit{ low-rank and structured sparse} high-dimensional VAR models. We introduce a regularized framework involving a combination of nuclear norm and lasso (or group lasso) penalty. Further, and subsequently establish non-asymptotic upper bounds on the estimation error rates of the low-rank and the structured sparse components. We also introduce a fast estimation algorithm and finally demonstrate  the performance of the proposed modeling framework over standard sparse VAR estimates through numerical experiments on synthetic and real data.
\end{abstract}

Keywords: lasso, group lasso, nuclear norm, low rank, vector autoregression, probabilistic bounds, identifiability, fast algorithm.

\section{Introduction}\label{sec:intro-lowrank}
The problem of learning the network structure among a large set of time series arises in many signal processing, economic, finance and biomedical applications. Examples include processing signals
obtained from radars \cite{Swindlehurst1998likelihood,Roman2000radar},
macroeconomic policy making and forecasting
\cite{lin2017regularized}, assessing connectivity among financial firms \cite{basu2017system},
reconstructing gene regulatory interactions from time-course genomic data \cite{michailidis2013autoregressive}
and understanding connectivity between brain regions from fMRI measurements \cite{van2010exploring}. Vector
Autoregressive (VAR) models provide a principled framework for these tasks.

Formally, a VAR model for $p$-dimensional time series $X_t$ is defined in its simplest possible form involving a single time-lag as
\begin{equation}\label{eqn:simple-VAR}
X^t=B'X^{t-1}+\epsilon^t,  \ \  \ t=1,\cdots, T,
\end{equation}
where $B$ is a $p\times p$ transition matrix specifying the lead-lag cross dependencies among the $p$ time series and $\{\epsilon^t\}$ is a zero mean error process. VAR models for small number of time series (low-dimensional) have been thoroughly studied in the literature \cite{lutkepohl2005new}. However, the above mentioned applications, where dozens to hundreds of time series are involved, created the need for the study of VAR models under high dimensional scaling and the assumption that their interactions are \textit{ sparse} to compensate for the possible lack of adequate number of time points (samples;
see \cite{basu2015regularized} and references therein).
There has been a growing body of literature on sparse estimation of large scale VAR models \cite{basu2014modeling}, including alternative penalties beyond the popular $\ell_1$ penalty (lasso), such as the
Berhu regularization introduced in \cite{he2013network},
group lasso type penalties employed in \cite{basu2015network,melnyk2016estimating}, as well as non-convex penalties akin to a square-lasso one investigated in \cite{jiang2017sparse}. Further, \cite{she2015network}
examine estimation of the transition matrix and the inverse covariance matrix of the error process through a joint sparse penalty. Note that the problem of sparse estimation of these two model parameters separately from a least squares and maximum likelihood viewpoints is addressed in \cite{basu2015regularized,lin2017regularized}, respectively, where in addition probabilistic finite sample error bounds for the obtained estimates are obtained.

Nevertheless, there are occasions where the sparsity assumption
may not be sufficient. For example, during financial crisis periods, returns on assets move together in a more concerted manner
\cite{basu2017system,billio2012econometric}, while transcription factors regulate a large number of genes that may lead to hub-node network structures \cite{tan2014learning}. Similarly, in brain connectivity networks, particular tasks activate a number of regions that cross talk in a collaborative manner \cite{sharaev2016effective}. Hence, it is of interest to study VAR models under high dimensional scaling where the transition matrix governing the temporal dynamics exhibits a more complex structure; e.g. it is \textit{ low rank} and/or \textit{ (group) sparse}.

In a low-dimensional regime, where the number of time
points scales to infinity, but the number of time series under study remains fixed, \cite{velu1986reduced} examined asymptotic properties of VAR models, where the parameters exhibit reduced rank structure and also discussed connections with canonical correlation analysis of such models presented in \cite{box1977canonical}. Specifically, the transition matrix $B$ in \eqref{eqn:model-sparse-lowrank} can be written as the product of two rank-$k$ matrices $\Phi, \Psi$, i.e. $B=\Phi \Psi'$, so that in the resulting model specification of the original $p$ time series is expressed as linear combinations $Z^t=\Psi X^t$ of the original ones, and $\Phi$ specifies the dependence between $X^t$ and $Z^t$; namely $X^t=\Phi' Z^{t-1}+\epsilon^t$. Hence, to obtain $\Phi$ and $\Psi$ \cite{velu1986reduced} suggest to estimate the parameters of the original model in \eqref{eqn:simple-VAR} under the constraint that $B=\Phi\Psi$ and that rank$(B)=k$. Other works include low rank approximations of Hankel matrices that represent the input-output structure of a linear time invariant systems and were studied in \cite{fazel2003log,chandrasekaran2011rank}. Finally, a brief mention to the possibility that the VAR transition matrix may exhibit such a structure appeared as a motivating example in \cite{agarwal2012}.

On the other hand, there is a mature literature on imposing low rank plus sparse, or pure group sparse structure for many
learning tasks for independent and identically distributed (i.i.d.) data. Examples include group sparsity in regression
and graphical modeling \cite{yang2017sparse+}, low rank and sparse matrix approximations for dimension reduction \cite{chandrasekaran2011rank}, etc. However, as shown in \cite{basu2015regularized}, the presence of temporal dependence across observations induces intricate dependencies between both rows and columns of the design matrix of the corresponding least squares estimation problem, as well as between the design matrix and the error term, that require careful handling to establish consistency properties for the model parameters under sparsity and high dimensional scaling. These issues are further compounded when more complex regularizing norms are involved, as discussed in \cite{melnyk2016estimating}.  In this paper, the authors model grouping structures within each column of $B$, but do not consider a low-rank component. In contrast, we focus on groups potentially spanning across different columns and allow a low-rank component in $B$.

{ More recently, \cite{zorzi2016ar} and \cite{zorzi2017sparse} extended the framework of  \cite{chandrasekaran2011rank} beyond decomposition of an observable matrix to Gaussian process identification by assuming a low-rank plus sparse structure on the inverse spectral density and the transfer function of a general VAR($d$) system, respectively. Our work is complementary to this recent developments. We directly model the transition matrix of a VAR(1) process that enables us to identify \textit{a directed network of} (group) sparse Granger causal relationships that are of interest in a number of applications; e.g. in  financial economics where firms with higher out-degree are of particular interest in measuring systemic risk \cite{billio2012econometric, basu2017system}. Further, our approach explicitly address identifiability issues for extracting the respective low-rank and sparse components, which in turn are leveraged to obtain probabilistic error bounds that characterize the quality of their estimates. The latter provide insights to the practitioner on sample size requirements and tuning parameter selection for real data applications. Finally, note that our approach to the issue of identifiability builds on \cite{agarwal2012}, wherein we characterize the degree of unidentifiability which guides in an explicit manner the selection of the tuning parameters used in the proposed optimization algorithm.}



Further, to estimate the posited model in \eqref{eqn:simple-VAR} with $B$ being low-rank and structured sparse (henceforth indicating that it could be either \textit{ pure sparse or group sparse or both}), we also introduce a fast accelerated proximal gradient algorithm, inspired by \cite{tseng2008accelerated,chen2017accelerated}, for the corresponding optimization problems. The key idea is that instead of searching for the local Lipschitz constant of the gradient of the smooth component of the objective function, the proposed algorithm utilizes a safeguarded Barzilai-Borwein (BB) initial stepsize \cite{barzilai1988two} and employs relaxed line search conditions to achieve better performance in practice. The latter enables the selection of  more ``aggressive" stepsizes,  while preserving the accelerated convergence rate of $\mathcal{O}(\frac{1}{k^2})$, where $k$ denotes the number of iterations required until convergence. Finally, the performance of the model parameters under different structures together with the associated estimation procedure based on the accelerated proximal gradient algorithm are calibrated on synthetic data, and illustrated on three data sets examining realized volatilities of  stock prices of $75$ large financial firms before, during and after the $2007-09$ US financial crisis.


\textit{\textbf{Notation:}} Throughout the paper, we employ the following notation: $\|.\|$, $\|.\|_2$ and $\|.\|_F$ denote the $\ell_2$-norm of a vector, the spectral norm and the Frobenius norm of a matrix, respectively. For a $p \times p$ matrix $B$,   the symbol $\|B\|_*$ is used to denote the nuclear norm, i.e. $\sum_{j=1}^{p}\sigma_{j}(B)$, the sum of the singular values of a matrix, while $B^\dagger$ denotes the conjugate transpose of a matrix $B$. For any matrix $B$, we use $\|B\|_0$ to denote $card(vec(B))$, $\|B\|_1$ for $\|vec(B)\|_1$ and $\|B\|_{\max}$ to denote $\|vec(B)\|_{\infty}$. Further, if $\{G_1, G_2, \ldots, G_K \}$ denote a partition of $\{1, 2, \cdots, p^2\}$ into $K$ non-overlapping groups, then we use $\|B\|_{2,1}$ to denote $\sum_{k=1}^{K}\|(B)_{G_k}\|_F$,  $\|B\|_{2,\max}$ for $\max_{k=1,2,...K}\|(B)_{G_k}\|_F$, while $\|B\|_{2,0}$ denotes the number of nonzero groups in $B$. Here, with a little abuse of notation, we use $B_{G_k}$ to denote $vec(B)_{G_k}$. In addition, $\Lambda_{\max}(.)$, $\Lambda_{\min}(.)$ denote the maximum and minimum eigenvalues of a symmetric or Hermitian matrix. For any integer $p \ge 1$, we use $\mathbb{S}^{p-1}$ to denote the unit ball $\{v \in \mathbb{R}^p : \|v\|=1\}$. We also use $\{e_1, e_2, \ldots \}$ generically to denote unit vectors in $\mathbb{R}^p$, when $p$ is clear from the context.  Finally, for positive real numbers $A, B$, we write $B \succsim A$ if there exists an absolute positive constant $c$, independent of the model parameters, such that $B \ge cA$.


\section{Model Formulation and Estimation Procedure}\label{sec:model-est}
Consider a VAR(1) model where the transition matrix $B$ is low-rank plus structured sparse given by
\begin{eqnarray}\label{eqn:model-sparse-lowrank}
&~& X^t = B^{'} X^{t-1} + \epsilon^t, ~~~~~~\epsilon^t \stackrel{i.i.d.}{\sim} N(0, \Sigma_\epsilon),\\
&~&~B = L^{*} +R^{*},~rank(L^{*})=r,
\end{eqnarray}
where $L^*$ corresponds to the low rank component and $R^*$ represents either a sparse $S^*$, or group-sparse component
$G^*$. It is further assumed that the number of non-zero elements in the sparse case is $\|S^{*}\|_0=s$, while in the group sparse case the number of non-zero groups is $\|G^{*}\|_{2,0}=g$, with $r \ll p, s \ll p^2$ and $g \ll p^2$.  The matrix $L^{*}$ captures persistence structure across \textit{ all} $p$ time series and enables the model to be applicable in settings where there are strong cross-autocorrelations, a feature that the standard sparse VAR model is not equipped to handle. The sparse or group sparse component captures additional cross-sectional autocorrelation structure among the time series. Finally, it is assumed that the error terms are serially uncorrelated. Our objective is to estimate $L^{*}$ and $R^{*}$ accurately based on a relatively small number of samples $N \ll p^2$.

\textit{Stability.} In order to ensure consistent estimation, we assume that the posited VAR model in \eqref{eqn:model-sparse-lowrank} is stable; i.e. its characteristic polynomial  $\mathcal{B}(z):= I_p - B^{'} z$ satisfies $\det(\mathcal{B}(z)) \neq 0$ on the unit circle of the complex plane $\{z \in \mathcal{C}: |z| = 1 \}$. This is a common assumption in the literature of multivariate time series \cite{lutkepohl2005new}, required for consistency and asymptotic normality of low-dimensional VAR models.  This assumption also ensures that the spectral density of the VAR model
\begin{equation}\label{eqn:spectral-density}
f_X(\theta) = \frac{1}{2\pi} \left(\mathcal{B}^{-1}\left(e^{i\theta}\right) \right) \Sigma_{\epsilon} \left(\mathcal{B}^{-1}\left(e^{i\theta}\right) \right)^\dagger, ~~~ \theta \in [-\pi, \pi],
\end{equation}
is bounded above in spectral norm.

It was shown in \cite{basu2015regularized} that this condition is sufficient to establish consistency of some regularized VAR estimates of a sparse transition matrix. Further, the following quantities play a central role in the error bounds of the regularized estimates:
\begin{eqnarray}\label{eqn:measure-stability}
\mathcal{M}(f_X) = \sup_{\theta \in [-\pi, \pi]} \Lambda_{\max}(f_X(\theta)), \nonumber \\
\EuFrak{m}(f_X) = \sup_{\theta \in [-\pi, \pi]} \Lambda_{\min} (f_X(\theta)), \nonumber \\
\mu_{\max}(\mathcal{B}) = \max_{|z| = 1} \Lambda_{\max} \left(\mathcal{B}^\dagger(z) \mathcal{B}(z) \right), \nonumber \\
\mu_{\min}(\mathcal{B}) = \min_{|z| = 1} \Lambda_{\min} \left(\mathcal{B}^\dagger(z) \mathcal{B}(z) \right).
\end{eqnarray}

As shown in \cite{basu2015regularized}, $\mathcal{M}(f_X)$ and $\EuFrak{m}(f_X)$ together capture the narrowness of the spectral density of a time series. Processes with stronger temporal and cross-sectional dependence have narrower spectra that in turn lead to slower convergence rates for the regularized estimates. For VAR models, $\mathcal{M}(f_X)$ and $\EuFrak{m}(f_X)$ are related to $\mu_{\max}(\mathcal{B})$ and $\mu_{\min}(\mathcal{B})$ as follows:
\begin{equation}\label{eqn:bound-stability-measures}
\EuFrak{m}(f_X) \ge \frac{1}{2\pi} \frac{\Lambda_{\min}(\Sigma_{\epsilon})}{\mu_{\max}(\mathcal{B})}, ~~~
\mathcal{M}(f_X) \le \frac{1}{2\pi} \frac{\Lambda_{\max}(\Sigma_{\epsilon})}{\mu_{\min}(\mathcal{B})}.
\end{equation}

Proposition 2.2 in \cite{basu2015regularized} provides a lower bound on $\mu_{\min}(\mathcal{B})$. For the special structure of the models considered here, we can get an improved upper bound on $\mu_{\max}(\mathcal{B})$, as shown in the following lemma:
\begin{lem}
For a stable VAR(1) model of the class \eqref{eqn:model-sparse-lowrank}, we have
\begin{equation}
\mu_{\max}(\mathcal{B}) \le \left[1 + l + (v_{in}+v_{out})/2 \right]^2
\end{equation}
 where $l$ is the largest singular value of $L^{*}$, $v_{in} = \max_{1 \le j \le p} \sum_{i=1}^p |R^{*}_{ij}|$ and $v_{out} = \max_{1 \le i \le p} \sum_{j=1}^p |R^{*}_{ij}|$.
\end{lem}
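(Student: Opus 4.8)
The plan is to reduce $\mu_{\max}(\mathcal{B})$ to the squared spectral norm of $B$ and then control that norm by splitting it across the low-rank and sparse components. First I would note that for any $z$ on the unit circle the matrix $\mathcal{B}^\dagger(z)\mathcal{B}(z)$ is Hermitian positive semidefinite, so that $\Lambda_{\max}\!\left(\mathcal{B}^\dagger(z)\mathcal{B}(z)\right) = \|\mathcal{B}(z)\|_2^2$; hence
\[
\mu_{\max}(\mathcal{B}) = \max_{|z|=1}\big\|I_p - B' z\big\|_2^2.
\]
Applying the triangle inequality for the spectral norm together with $\|I_p\|_2 = 1$, $|z|=1$, and $\|B'\|_2 = \|B\|_2$, I obtain $\|I_p - B'z\|_2 \le 1 + \|B\|_2$ uniformly in $z$, so it suffices to prove the deterministic bound $\|B\|_2 \le l + (v_{in}+v_{out})/2$.

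Next I would decompose $B = L^{*} + R^{*}$ and use subadditivity of the spectral norm to write $\|B\|_2 \le \|L^{*}\|_2 + \|R^{*}\|_2 = l + \|R^{*}\|_2$, since $\|L^{*}\|_2 = l$ is its largest singular value by definition. The remaining task is to control $\|R^{*}\|_2$ purely in terms of the row and column $\ell_1$ masses $v_{out}$ and $v_{in}$. For this I would invoke the standard interpolation-type bound $\|R^{*}\|_2 \le \sqrt{v_{in}\, v_{out}}$ relating the spectral norm to the induced $1$- and $\infty$-operator norms. Concretely, $\|R^{*}\|_2^2 = \Lambda_{\max}\!\left((R^{*})' R^{*}\right)$ is at most the induced $\infty$-norm $\|(R^{*})'R^{*}\|_{\infty\to\infty}$ (the spectral radius is dominated by any induced norm), which by submultiplicativity factors as $\|(R^{*})'\|_{\infty\to\infty}\,\|R^{*}\|_{\infty\to\infty}$. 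Here $\|R^{*}\|_{\infty\to\infty}$ is the maximum absolute row sum $v_{out}$, while $\|(R^{*})'\|_{\infty\to\infty}$ is the maximum absolute row sum of the transpose, i.e. the maximum absolute column sum of $R^{*}$, namely $v_{in}$. Finally, the arithmetic--geometric mean inequality gives $\sqrt{v_{in}\, v_{out}} \le (v_{in}+v_{out})/2$.

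Combining the three estimates yields $\|B\|_2 \le l + (v_{in}+v_{out})/2$, and squaring $1 + \|B\|_2$ gives the claimed bound. The argument is essentially a short chain of norm inequalities; the only step requiring genuine care is the passage from $\|R^{*}\|_2$ to the row/column masses, where one must correctly identify the induced $1$- and $\infty$-operator norms of $R^{*}$ with $v_{in}$ and $v_{out}$ and track the transpose so that the \emph{column} sums (rather than the row sums) enter through $(R^{*})'$. Everything else is routine, and the bound holds identically in the sparse and group-sparse cases, since it only depends on the entries of $R^{*}$ and not on any particular structure among them.
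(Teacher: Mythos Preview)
Your proof is correct and follows essentially the same route as the paper: reduce to $\mu_{\max}(\mathcal{B})=\max_{|z|=1}\|\mathcal{B}(z)\|_2^2$, apply the triangle inequality $\|\mathcal{B}(z)\|_2\le 1+\|L^*\|_2+\|R^*\|_2$, and identify $\|L^*\|_2=l$. The only difference is that the paper leaves the bound $\|R^*\|_2\le (v_{in}+v_{out})/2$ implicit, whereas you supply the standard argument via $\|R^*\|_2\le\sqrt{\|R^*\|_1\|R^*\|_\infty}=\sqrt{v_{in}v_{out}}$ and AM--GM; this is a welcome clarification rather than a departure in method.
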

\begin{proof}
$\|\mathcal{B}(z)\| = \|I - (L^{*}+R^{*})z \| \le \|I\| + \|L^{*}\| + \|R^{*}\|$ for any $z \in \mathbb{C}$ with $|z|=1$. The result follows from the fact that $\mu_{\max}(\mathcal{B})  = \max_{|z|=1} \|\mathcal{B}(z)\|^2$.
\end{proof}

\subsection{Estimation Procedure}
The estimation of VAR model parameters is based on the following regression formulation (see \cite{lutkepohl2005new}). Given $T+1$ consecutive observations $\{X^0, \cdots, X^{T}\}$ from the VAR model, we work with the autoregressive design as follows:
\begin{eqnarray}\label{eqn:data-sparse-lowrank}
\underbrace{\left[\begin{array}{c} (X^T)' \\ \vdots \\ (X^{1})' \end{array} \right]}_{\mathcal{Y}}
& = & \underbrace{\left[ \begin{array}{c}(X^{T-1})'\\
						    \vdots \\
						 (X^{0})\end{array} \right]}_{\mathcal{X}}
	B
	+ \underbrace{\left[ \begin{array}{c} (\epsilon^T)' \\ \vdots \\ (\epsilon^1)' \end{array}\right]}_{E}.
\end{eqnarray}
This is a standard regression problem with $N\equiv T$ samples and $q = p^2$ variables. Our goal is to estimate $L^{*}$, $R^{*}$ with high accuracy when $N \ll p^2$.

There is an \textit{inherent identifiability} issue in the estimation of the components $L^*$ and $R^*$. Suppose the low-rank component $L^{*}$ itself is $s$-sparse or $g$-group sparse and the sparse or group-sparse component $R^{*}$ is of rank $r$. In that scenario, we can not hope for any method to estimate $L^{*}$ and $R^{*}$ separately  without imposing any further constraints. So, a minimal condition for low-rank and sparse or group-sparse recovery is that the low rank part should not be too sparse and the sparse or group-sparse part should not be low-rank.

This issue has been rigorously addressed in the literature
(e.g. \cite{chandrasekaran2011rank})
for independent and identically distributed data and resolved by imposing an \textit{ incoherence} condition. Such a condition is sufficient for \textit{ exact} recovery of the low rank and the sparse or group-sparse component by solving a convex program. In a recent paper, \cite{agarwal2012} showed that in a noisy setting where exact recovery of the two components is impossible, it is still possible to achieve good estimation error under a comparatively mild assumption. In particular, they formulated a general measure for the \textit{radius of nonidentifiability} of the problem under consideration and established a non-asymptotic upper bound on the estimation error $\|\hat{L} - L^{*} \|^2_F + \| \hat{R} - R^{*} \|^2_F$, which depend on this radius. The key idea is to allow for simultaneously sparse (or group-sparse) and low-rank matrices in the model, and control for the error introduced. We refer the readers to the above paper for a more detailed discussion on this notion of nonidentifiability. In this work, the low-rank plus sparse or group-sparse decomposition problem under restrictions on the radius of nonidentifiability takes the form
\begin{equation*}
(\hat{L}, \hat{R}) = \argmin_{\substack{
   L,R\in \mathbb{R}^{p \times p}  \\
   L \in \Omega
  }} l(L, R),
\end{equation*}
\begin{equation}\label{eqn:opt-sparse-lowrank}
l(L, R) := \frac{1}{2} \left\| \mathcal{Y} - \mathcal{X} (L+R) \right\|^2_F + \lambda_N \|L\|_* + \mu_N \|R\|_{\diamond}.
\end{equation}
Here $\Omega = \left\{L\in \mathbb{R}^{p \times p}: \|L\|_{\max}\leq \frac{\alpha}{p}\right\}$ (for sparse) or $\left\{L\in \mathbb{R}^{p \times p}: \|L\|_{2,\max}\leq \frac{\beta}{\sqrt{K}}\right\}$ (for group sparse), $\|\cdot\|_{\diamond}$ represents $\|\cdot\|_1$ or $\|\cdot\|_{2,1}$ depending on sparsity or group sparsity of $R$, and $\lambda_N$ and $\mu_N$ are non-negative tuning parameters controlling the regularizations of low-rank and sparse/group-sparse parts. { The parameters $\alpha$ and $\beta$ control for the degree of nonidentifiability of the matrices allowed in the model class.  For instance, larger values of $\alpha$ provide sparser estimates of $S$ and allow simultaneously sparse and low-rank components to be absorbed in $\hat{L}$. A smaller value of $\alpha$, on the other hand, tends to produce a matrix $L$ with smaller and pushes the simultaneously low-rank and sparse components to be absorbed in $\hat{S}$.
In practice, we recommend choosing $\alpha$ and $\beta$ in the range $[1,p]$ and $[1,K]$, respectively. The issue of selecting them robustly in practice is discussed in Section \ref{sec:num-exps}.}

\textit{Remark.}
On certain occasions, it may be useful to have both sparse \textit{m and} group-sparse structures in the model, in addition to the low rank structure. We then have
$R^* = S^* + G^*$ in \eqref{eqn:opt-sparse-lowrank} with $\|R\|_{\diamond} =  \|S\|_1 + \frac{\nu_N}{\mu_N}\|G\|_{2,1}$. However, to guarantee the \textit{ simultaneous identifiability}
of the sparse and group-sparse components from the low-rank component, stronger conditions need to be imposed
on $L$; namely, $\Omega = \left\{L\in \mathbb{R}^{p \times p}: \|L\|_{\max}\leq \frac{\alpha}{p} ~~\&~~ \|L\|_{2,\max}\leq \frac{\beta}{\sqrt{K}}\right\}$.

{ \textit{Remark.} Note that the estimated VAR model is not guaranteed to be stable, although the error bound analysis in section \ref{sec:theory} ensures its stability with high probability as long as the sample size is large enough and the true generative model is stable. For network reconstruction and visualization purposes, stability of the estimated VAR is not strictly required. However, enforcing stability is essential for forecasting purposes. When there is a small deviation of the estimated model from stability (e.g the spectral radius of the estimated $\hat{B}$ is a little over $1$), stability can be ensured through a post-processing step of shrinking the moduli of  eigenvalues of $\hat{B}$ below  $1$ while keeping its eigenvectors unchanged. This type of projection argument is common in covariance and correlation matrix estimation with missing data for ensuring positive definiteness of the estimates  \cite{higham2002computing}. However, in case of moderate to large deviation from stability, a closer look at the individual time series is recommended to re-assess the validity of the VAR formulation. For example, in macroeconomics, it is customary to use suitable transformations of the component time series to ensure that each of the individual time series and the resulting VAR model is stable, as opposed to modeling the individual and the joint time series (without transformation) as unit root and co-integrating processes.   For instance, see \cite{banburra2010large, giannone2015prior} for specific recommendations on useful transformations for macroeconomic time series.}

\section{Theoretical Properties}\label{sec:theory}
Next, we derive non-asymptotic upper bounds on the estimation errors of the low-rank plus structured sparse components of the transition matrix $B$. The main result shows that consistent estimation is possible with a sample size of the order $N \asymp p \, \mathcal{M}^2(f_X)/\EuFrak{m}^2(f_X)$, as long as the process $\{X^t\}$ is stable and the radius of nonidentifiability, as measured by $\|L^*\|_{\max}$ and/or $\|L^*\|_{2,\max}$, is small in an appropriate sense detailed next.

To establish the results, we first consider fixed realizations of $\mathcal{X}$ and $E$ and impose the
following assumptions: \\
1) \textit{Restricted Strong Convexity (RSC)}: There exist $\zeta > 0$ and $\tau_N > 0$ such that
$$
\frac{1}{2}\|\mathcal{X}\Delta\|_{\text{F}}^2 \geq \frac{\zeta}{2}\|\Delta\|_{\text{F}}^2-\tau_N\Phi^2(\Delta), \ \ \text{for all} \ \ \Delta \in \Re^{p\times p}
$$
where $\Phi(\Delta)=\underset{L+R=\Delta}{\text{inf}}\{\lambda_N \|L\|_* + \mu_N \|R\|_{\diamond}\}$,
and \\
2) \textit{Deviation Conditions}: There exist a deterministic function $\phi(B, \Sigma_\epsilon)$ of the model parameters $B$ and $\Sigma_\epsilon$ such that
$$
\|\mathcal{X}'E/N\|_{2} \leq \phi(B, \Sigma_{\epsilon}) \sqrt{p/N}, \ \ \text{and}
$$
$$
\|\mathcal{X}'E/N\|_{\max} \leq \phi(B,\Sigma_{\epsilon})\sqrt{\frac{2\log p}{N}}, \ \ \text{and}
$$
$$
\|\mathcal{X}'E/N\|_{2,\max} \leq \phi(B,\Sigma_{\epsilon})\frac{\sqrt{m \log K}}{\sqrt{N}},
$$
where $m$ is the size of the largest group  $\max_{1 \le k \le K} card(G_k)$. 

Later on, we show that assumptions 1) and 2) are indeed satisfied with high probability when the data are generated from the model \eqref{eqn:model-sparse-lowrank}.

Next, we present the non-asymptotic upper bounds on the estimation errors of the low-rank plus structured sparse components, respectively.
\begin{prop}\label{CLplusS}
(a)
Suppose that the matrix $L^*$ has rank at most $r$, while the matrix $S^*$ has at most $s$ nonzero entries. Then, for any $\lambda_N \geq 4\|\mathcal{X}^{'}E\|_{2}$ and $\mu_N \geq 4\|\mathcal{X}^{'}E\|_{\max}+\frac{4\zeta\alpha }{p}$, any solution $(\hat{L}, \hat{S})$ of \eqref{eqn:opt-sparse-lowrank} satisfies
$$
\|\hat{L}-L^{*}\|_F^2+\|\hat{S}-S^{*}\|_F^2  \leq
\frac{4}{\zeta^2}(\frac{9}{2}\lambda_N^2r+4\mu_N^2s).
$$
(b)
Suppose that the matrix $L^*$ has rank at most $r$, while the matrix $G^*$ has at most $g$ non-zero groups. Then, for any $\lambda_N \geq 4\|\mathcal{X}^{'}E\|_2$ and $\mu_N \geq 4\|\mathcal{X}^{'}E\|_{2,\max}+\frac{4\zeta\beta }{\sqrt{K}}$, any solution $(\hat{L}, \hat{G})$ of \eqref{eqn:opt-sparse-lowrank} satisfies
$$
\|\hat{L}-L^{*}\|_F^2+\|\hat{G}-G^{*}\|_F^2  \leq
\frac{4}{\zeta^2}(\frac{9}{2}\lambda_N^2r+4\mu_N^2g)
$$
\end{prop}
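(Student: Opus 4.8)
The plan is to run the primal--dual/decomposability machinery for regularized $M$-estimators, in the two-component form used for noisy matrix decomposition in \cite{agarwal2012}. I will carry out part (a) in detail; part (b) is identical once spectral/nuclear duality on the sparse side is replaced by the group analogue ($\|\cdot\|_{2,\max}$ paired with $\|\cdot\|_{2,1}$) and the box $\|L\|_{\max}\le\alpha/p$ by $\|L\|_{2,\max}\le\beta/\sqrt{K}$. First I would note that the radius-of-nonidentifiability hypothesis guarantees $L^*\in\Omega$, so $(L^*,S^*)$ is feasible for \eqref{eqn:opt-sparse-lowrank}; optimality of $(\hat{L},\hat{S})$ then gives $l(\hat{L},\hat{S})\le l(L^*,S^*)$. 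Writing $\Delta_L=\hat{L}-L^*$, $\Delta_S=\hat{S}-S^*$, $\Delta=\Delta_L+\Delta_S$ and substituting $\mathcal{Y}=\mathcal{X}(L^*+S^*)+E$, the squared-loss terms telescope to the basic inequality
\[
\tfrac12\|\mathcal{X}\Delta\|_F^2 \le \langle \mathcal{X}'E,\Delta\rangle + \lambda_N\bigl(\|L^*\|_*-\|\hat{L}\|_*\bigr) + \mu_N\bigl(\|S^*\|_1-\|\hat{S}\|_1\bigr).
\]

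Next I would split the noise inner product as $\langle\mathcal{X}'E,\Delta_L\rangle+\langle\mathcal{X}'E,\Delta_S\rangle$ and bound the pieces by the dual pairings $\|\mathcal{X}'E\|_2\|\Delta_L\|_*$ and $\|\mathcal{X}'E\|_{\max}\|\Delta_S\|_1$. The tuning choices $\lambda_N\ge4\|\mathcal{X}'E\|_2$ and $\mu_N\ge4\|\mathcal{X}'E\|_{\max}+4\zeta\alpha/p$ turn these into fractions $\tfrac14\lambda_N\|\Delta_L\|_*$ and $\tfrac14\mu_N\|\Delta_S\|_1$ (the surplus $4\zeta\alpha/p$ being deliberately held in reserve). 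Decomposability of the nuclear norm around the rank-$r$ model subspace $\mathcal{M}$ of $L^*$, and of $\ell_1$ around the support $\mathcal{S}$ of $S^*$, then yield $\|L^*\|_*-\|\hat{L}\|_*\le\|\Pi_{\mathcal{M}}\Delta_L\|_*-\|\Pi_{\mathcal{M}^\perp}\Delta_L\|_*$ and $\|S^*\|_1-\|\hat{S}\|_1\le\|\Delta_{S,\mathcal{S}}\|_1-\|\Delta_{S,\mathcal{S}^c}\|_1$. Combining, the complement-subspace pieces acquire strictly negative coefficients, which both produces a \emph{cone condition} (the off-subspace mass of $\Delta$ is controlled by the signal-subspace mass) and leaves a right-hand side of the form $\tfrac54\lambda_N\|\Pi_{\mathcal{M}}\Delta_L\|_*+\tfrac54\mu_N\|\Delta_{S,\mathcal{S}}\|_1$. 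Using $\|\Pi_{\mathcal{M}}\Delta_L\|_*\le\sqrt{2r}\,\|\Delta_L\|_F$ and $\|\Delta_{S,\mathcal{S}}\|_1\le\sqrt{s}\,\|\Delta_S\|_F$ recasts it as $\lesssim\lambda_N\sqrt{r}\,\|\Delta_L\|_F+\mu_N\sqrt{s}\,\|\Delta_S\|_F$.

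I would then apply the RSC bound $\tfrac12\|\mathcal{X}\Delta\|_F^2\ge\tfrac{\zeta}{2}\|\Delta\|_F^2-\tau_N\Phi^2(\Delta)$; the cone condition bounds $\Phi(\Delta)$ by a multiple of $\lambda_N\sqrt{r}\,\|\Delta_L\|_F+\mu_N\sqrt{s}\,\|\Delta_S\|_F$, rendering $\tau_N\Phi^2(\Delta)$ a higher-order term that is reabsorbed once $\tau_N(\lambda_N^2 r+\mu_N^2 s)$ is small relative to $\zeta$, which holds under the stated sample-size scaling. This delivers a bound on $\tfrac{\zeta}{2}\|\Delta\|_F^2$. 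The crux---and where this argument departs from a single-regularizer analysis---is passing from $\|\Delta\|_F^2$ to the quantity we actually want, $\|\Delta_L\|_F^2+\|\Delta_S\|_F^2$, which differ by the cross term $2\langle\Delta_L,\Delta_S\rangle$. Here the box constraint is decisive: since $\hat{L},L^*\in\Omega$ force $\|\Delta_L\|_{\max}\le2\alpha/p$, the $\|\cdot\|_{\max}$--$\|\cdot\|_1$ duality gives $|\langle\Delta_L,\Delta_S\rangle|\le(2\alpha/p)\|\Delta_S\|_1$, whence $\|\Delta_L\|_F^2+\|\Delta_S\|_F^2\le\|\Delta\|_F^2+(4\alpha/p)\|\Delta_S\|_1$. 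Splitting the correction over $\mathcal{S}$ and $\mathcal{S}^c$, its off-support part is dominated by the $-\tfrac34\mu_N\|\Delta_{S,\mathcal{S}^c}\|_1$ slack---this is precisely what the reserved $4\zeta\alpha/p$ in $\mu_N$ pays for, keeping the cone intact---while its on-support part merges into the $\sqrt{s}\,\|\Delta_S\|_F$ coefficient.

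Finally, collecting terms produces an inequality of the form $\tfrac{\zeta}{2}\bigl(\|\Delta_L\|_F^2+\|\Delta_S\|_F^2\bigr)\lesssim\lambda_N\sqrt{r}\,\|\Delta_L\|_F+\mu_N\sqrt{s}\,\|\Delta_S\|_F$; applying AM--GM termwise against $\|\Delta_L\|_F$ and $\|\Delta_S\|_F$ to clear the linear terms into the quadratic, and solving the resulting quadratic in $(\|\Delta_L\|_F,\|\Delta_S\|_F)$, yields the stated $\tfrac{4}{\zeta^2}\bigl(\tfrac92\lambda_N^2 r+4\mu_N^2 s\bigr)$, the constants $\tfrac92$ and $4$ tracking the $\tfrac54$ factors and the cross-term bookkeeping. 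I expect the main obstacle to be exactly that bookkeeping in the third step: arranging the incoherence/box-constraint cross-term to cancel cleanly against the reserved $4\zeta\alpha/p$ surplus while simultaneously keeping the $\tau_N\Phi^2(\Delta)$ correction subdominant. This is the step that consumes the radius-of-nonidentifiability assumption and that a naive single-penalty argument does not encounter.
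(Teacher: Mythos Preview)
Your proposal is correct and follows essentially the same route as the paper: basic inequality from optimality, decomposability of the nuclear and $\ell_1$ norms, dual-norm bounds on $\langle\mathcal{X}'E,\Delta\rangle$, RSC to lower-bound $\tfrac12\|\mathcal{X}\Delta\|_F^2$, the box constraint $\|\Delta_L\|_{\max}\le 2\alpha/p$ to control the cross term $\langle\Delta_L,\Delta_S\rangle$ (which is exactly what the reserved $4\zeta\alpha/p$ in $\mu_N$ absorbs), and a Cauchy--Schwarz/quadratic argument to finish. The one point of divergence is your treatment of the $\tau_N\Phi^2(\Delta)$ slack: you propose to control it via the cone condition and a sample-size requirement, whereas the paper's proof simply invokes RSC in the form $\tfrac12\|\mathcal{X}\Delta\|_F^2\ge\tfrac{\zeta}{2}\|\Delta\|_F^2$, i.e.\ with $\tau_N=0$ --- this is what Proposition~\ref{prop:conc-sparse-lowrank}(2) actually delivers (a genuine minimum-eigenvalue bound), so no cone/reabsorption step is needed at the level of Proposition~\ref{CLplusS}, and indeed Proposition~\ref{CLplusS} carries no sample-size hypothesis. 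Your version is not wrong, just more elaborate than necessary here.
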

\textit{Remark.} It should be noted that if each group in $G^*$ has only one element, then we have $K=p^2$ and $g$ non-zero entries. For such cases, part (b) of Proposition \ref{CLplusS} becomes identical to part (a).


As a byproduct,  we also give the estimation error bound of the transition matrix which can be characterized by  the sparse plus group-sparse and the low-rank plus sparse and group-sparse components, respectively, under the assumption that the strength of the connections in the group-sparse component $G$ is weak; i.e. $G \in \Psi$ with $\Psi = \left\{G \in \mathbb{R}^{p \times p}: \|G\|_{\max}\leq \frac{\gamma}{p}\right\}$, where $\gamma \in [1, p]$.
\begin{prop}\label{CSplusGS}
(a) Suppose that the matrix $S^*$ has at most $s$ nonzero entries, while the matrix $G^*$ has at most $g$ non-zero groups. Then, for any $\mu_N \geq 4\|\mathcal{X}^{'}E\|_{\max}+\frac{4\zeta\gamma }{p}$ and $\nu_N \geq 4\|\mathcal{X}^{'}E\|_{2,\max}$, any solution $(\hat{S}, \hat{G})$ of \eqref{eqn:opt-sparse-lowrank} satisfies
$$
\|\hat{S}+\hat{G}-S^{*}-G^{*}\|_F^2  \leq
\frac{4}{\zeta^2}(8\mu_N^2s+9\nu_N^2g).
$$
(b) Suppose that the matrix $L$ has rank at most $r$, while the matrix $S$ has at most $s$ nonzero entries and the matrix $G$ has at most $g$ non-zero groups. Then, for any $\lambda_N \geq 4\|\mathcal{X}^{'}E\|_2$, $\mu_N \geq 4\|\mathcal{X}^{'}E\|_{\max}+\frac{4\zeta\alpha }{p}+\frac{4\zeta\gamma}{p}$, and $\nu_N\geq 4\|\mathcal{X}^{'}E\|_{2,\max}+\frac{4\zeta\beta}{\sqrt{K}}$,  any solution $(\hat{L}, \hat{S}, \hat{G})$ of \eqref{eqn:opt-sparse-lowrank} satisfies
$$
\|\hat{L}-L^*\|_F^2+\|\hat{S}+\hat{G}-S^{*}-G^{*}\|_F^2 \leq
\frac{4}{\zeta^2}(9\lambda_N^2r+\frac{25}{2}\mu_N^2s+8\nu_N^2g).
$$
\end{prop}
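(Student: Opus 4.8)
The plan is to adapt the $M$-estimation argument behind Proposition~\ref{CLplusS} to the three-component model $B=L^*+S^*+G^*$, using the split $\|R\|_\diamond=\|S\|_1+\frac{\nu_N}{\mu_N}\|G\|_{2,1}$ from the Remark after \eqref{eqn:opt-sparse-lowrank}; part (a) is then the specialization $L^*=0$. Writing $\Delta_L=\hat L-L^*$, $\Delta_S=\hat S-S^*$, $\Delta_G=\hat G-G^*$ and $\Delta=\Delta_L+\Delta_S+\Delta_G$, I would first use optimality of $(\hat L,\hat S,\hat G)$ against the feasible point $(L^*,S^*,G^*)$. Since $\mathcal{Y}-\mathcal{X}(\hat L+\hat S+\hat G)=E-\mathcal{X}\Delta$, expanding the squared Frobenius norm and cancelling $\tfrac12\|E\|_F^2$ gives the basic inequality
$$
\tfrac12\|\mathcal{X}\Delta\|_F^2\le\langle\mathcal{X}'E,\Delta\rangle+\lambda_N(\|L^*\|_*-\|\hat L\|_*)+\mu_N(\|S^*\|_1-\|\hat S\|_1)+\nu_N(\|G^*\|_{2,1}-\|\hat G\|_{2,1}).
$$

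Next I would bound the right-hand side. The noise term splits into $\langle\mathcal{X}'E,\Delta_L\rangle+\langle\mathcal{X}'E,\Delta_S\rangle+\langle\mathcal{X}'E,\Delta_G\rangle$ and is controlled by $\|\mathcal{X}'E\|_2\|\Delta_L\|_*+\|\mathcal{X}'E\|_{\max}\|\Delta_S\|_1+\|\mathcal{X}'E\|_{2,\max}\|\Delta_G\|_{2,1}$ through the dualities $(\|\cdot\|_*,\|\cdot\|_2)$, $(\|\cdot\|_1,\|\cdot\|_{\max})$ and $(\|\cdot\|_{2,1},\|\cdot\|_{2,\max})$. Decomposability of each penalty, relative to the rank-$r$ subspace of $L^*$, the size-$s$ support $A$ of $S^*$ and the $g$-group support $J$ of $G^*$, turns the penalty differences into projected-minus-complement forms such as $\|S^*\|_1-\|\hat S\|_1\le\|(\Delta_S)_A\|_1-\|(\Delta_S)_{A^c}\|_1$, with analogous statements for the nuclear and group norms. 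Combined with the deviation-based thresholds this produces cone inequalities; bounding the projected pieces by $\|(\Delta_S)_A\|_1\le\sqrt s\,\|\Delta_S\|_F$, $\|(\Delta_G)_J\|_{2,1}\le\sqrt g\,\|\Delta_G\|_F$ and the nuclear projection by $\sqrt{2r}\,\|\Delta_L\|_F$ leaves a right-hand side linear in $\|\Delta_L\|_F,\|\Delta_S\|_F,\|\Delta_G\|_F$ with coefficients of order $\lambda_N\sqrt r,\mu_N\sqrt s,\nu_N\sqrt g$.

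The step I expect to be the crux is reconciling the norms: RSC controls $\|\Delta\|_F^2=\|\Delta_L+\Delta_S+\Delta_G\|_F^2$, the right-hand side involves the three individual Frobenius norms, yet the statement bounds $\|\Delta_L\|_F^2+\|\Delta_S+\Delta_G\|_F^2$ (with $S$ and $G$ kept together, since only their sum is identifiable). Passing among these groupings introduces the inner products $\langle\Delta_L,\Delta_S\rangle$, $\langle\Delta_L,\Delta_G\rangle$ and $\langle\Delta_S,\Delta_G\rangle$, and each is handled by a H\"older bound pairing a max-type norm made small by feasibility with a dual norm of the other factor. Concretely, $\hat L,L^*\in\Omega$ give $\|\Delta_L\|_{\max}\le 2\alpha/p$ and $\|\Delta_L\|_{2,\max}\le 2\beta/\sqrt K$, while $\hat G,G^*\in\Psi$ give $\|\Delta_G\|_{\max}\le 2\gamma/p$, so that $|\langle\Delta_L,\Delta_S\rangle|\le\tfrac{2\alpha}{p}\|\Delta_S\|_1$, $|\langle\Delta_S,\Delta_G\rangle|\le\tfrac{2\gamma}{p}\|\Delta_S\|_1$ and $|\langle\Delta_L,\Delta_G\rangle|\le\tfrac{2\beta}{\sqrt K}\|\Delta_G\|_{2,1}$. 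These are exactly the extra slacks $\tfrac{4\zeta\alpha}{p}$ and $\tfrac{4\zeta\gamma}{p}$ folded into $\mu_N$, and $\tfrac{4\zeta\beta}{\sqrt K}$ folded into $\nu_N$: carrying this slack keeps the cone conditions intact while absorbing the cross terms, which is precisely how the nonidentifiability radius propagates into the final constants. In part (a) there is no low-rank part, so only the $S$--$G$ interaction survives and only the $\tfrac{4\zeta\gamma}{p}$ correction to $\mu_N$ is required, with $\nu_N$ needing no slack. The main obstacle is the careful bookkeeping here, ensuring every cross term lands inside the prescribed threshold slack rather than inflating the rate.

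Finally I would apply the RSC inequality $\tfrac{\zeta}{2}\|\Delta\|_F^2-\tau_N\Phi^2(\Delta)\le\tfrac12\|\mathcal{X}\Delta\|_F^2$, bound $\Phi(\Delta)$ above using the cone relations so that $\tau_N\Phi^2(\Delta)$ is dominated by $\tfrac{\zeta}{4}\|\Delta\|_F^2$ under the stated sample-size scaling, and substitute into the basic inequality. What remains is a quadratic inequality of the form $\tfrac{\zeta}{4}\big(\|\Delta_L\|_F^2+\|\Delta_S+\Delta_G\|_F^2\big)\le c\big(\lambda_N\sqrt r\,\|\Delta_L\|_F+\mu_N\sqrt s\,\|\Delta_S\|_F+\nu_N\sqrt g\,\|\Delta_G\|_F\big)$; after using the cross-term bounds to re-express $\|\Delta_S\|_F,\|\Delta_G\|_F$ through $\|\Delta_S+\Delta_G\|_F$ and completing the square, one obtains the prefactor $\tfrac{4}{\zeta^2}$ together with the explicit constants $9,\tfrac{25}{2},8$ in part (b) and $8,9$ in part (a).
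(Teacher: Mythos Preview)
Your proposal is correct and follows essentially the same route as the paper: the basic inequality from optimality, H\"older on the noise term via the three dual pairs, decomposability of each penalty, absorption of the cross terms $\langle\Delta_L,\Delta_S\rangle$, $\langle\Delta_L,\Delta_G\rangle$, $\langle\Delta_S,\Delta_G\rangle$ into the threshold slacks using the feasibility constraints, and a final Cauchy--Schwarz to close the quadratic. Two minor bookkeeping differences: the paper simply invokes strong convexity $\tfrac12\|\mathcal{X}\Delta\|_F^2\ge\tfrac{\zeta}{2}\|\Delta\|_F^2$ with $\tau_N=0$ (justified a posteriori by the minimum-eigenvalue bound of Proposition~\ref{prop:conc-sparse-lowrank}) rather than carrying the $\tau_N\Phi^2(\Delta)$ term, and at the end it first bounds $\|\Delta_L\|_F^2+\|\Delta_S\|_F^2+\|\Delta_G\|_F^2$ and then passes to $\|\Delta_L\|_F^2+\|\Delta_S+\Delta_G\|_F^2$ via the trivial inequality $\|\Delta_S+\Delta_G\|_F^2\le 2(\|\Delta_S\|_F^2+\|\Delta_G\|_F^2)$, rather than re-expressing in the opposite direction as you suggest.
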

\noindent See Appendix A for the detailed proof of Propositions \ref{CLplusS} and \ref{CSplusGS}.

Note that the objective in Proposition \ref{CSplusGS}
is not the accurate recovery of the $S^*$ and $G^*$ components \textit{separately}. The latter can be in principle achieved, if one sets $\gamma$ to a very small value.

In order to obtain meaningful results in the context of our problem, we need  upper bounds on $\|\mathcal{X}'E\|_2$, $\|\mathcal{X}'E \|_{\max}$ and $\|\mathcal{X}'E \|_{2,\max}$ and a lower bound on $\Lambda_{\min}(\mathcal{X}'\mathcal{X})$ that hold with high probability. For the case of independent and identically distributed data, such high-probability deviation bounds are established in \cite{agarwal2012}. However, for time series data all entries of the  $\mathcal{X}$ matrix are dependent on each other, and hence it is a non-trivial technical task to establish such deviation bounds. A key technical contribution of this work is to derive these deviation bounds, which lead to meaningful analysis for VAR models. The results rely on the measure of stability defined in \eqref{eqn:measure-stability} and an analysis of the joint spectrum of $\{X^{t-1}\}$ and $\{\epsilon^t\}$ undertaken next.

\begin{prop}\label{prop:conc-sparse-lowrank}
Consider a random realization of $\{X^0, \ldots, X^T \}$ generated according to a stable VAR(1) process \eqref{eqn:model-sparse-lowrank} and form the autoregressive design \eqref{eqn:data-sparse-lowrank}. Define
\begin{eqnarray*}
\phi(B, \Sigma_{\epsilon}) = \Lambda_{\max}(\Sigma_{\epsilon}) \left[1+ \frac{1+\mu_{\max}(\mathcal{B})}{\mu_{\min}(\mathcal{B})} \right]
\end{eqnarray*}
Then, there exist universal positive constants $c_i > 0$ such that
\begin{enumerate}
\item for $N \succsim p$,
\begin{equation*}
\mathbb{P}\left[ \| \frac{\mathcal{X}'E}{N}\|_{2} > c_0 \phi(B, \Sigma_{\epsilon}) \sqrt{\frac{p}{N}} \right] \le c_1 \exp \left[-c_2 \log p \right]
\end{equation*}
and for any $N \succsim \log p$,
\begin{equation*}
\mathbb{P}\left[ \| \frac{\mathcal{X}'E}{N}\|_{\max} > c_0 \phi(B, \Sigma_{\epsilon}) \sqrt{\frac{\log p}{N}} \right]
\le c_1 \exp \left[-c_2 \log p \right]
\end{equation*}
and for $N \succsim m \log p$,
\begin{equation*}
\mathbb{P}\left[ \| \frac{\mathcal{X}'E}{N}\|_{2,\max} > c_0 \phi(B, \Sigma_{\epsilon}) \frac{\sqrt{m \log p}}{\sqrt{N}} \right]
\le c_1 \exp \left[-c_2\log p \right]
\end{equation*}
\item for $N \succsim p \mathcal{M}^2(f_X)/\EuFrak{m}^2(f_X)$,
\begin{equation*}
\mathbb{P}\left[ \Lambda_{\min}(\frac{\mathcal{X}'\mathcal{X}}{N}) > \frac{\Lambda_{\min}(\Sigma_{\epsilon})}{2 \mu_{\max}(\mathcal{B})} \right] \le c_1 \exp \left[-c_2 \log p \right]
\end{equation*}
\end{enumerate}
\end{prop}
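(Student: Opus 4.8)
The plan is to exploit that, since the innovations are Gaussian, the stacked vectors $\mathrm{vec}(\mathcal{X})$ and $\mathrm{vec}(E)$ are jointly Gaussian; hence every quantity in the statement is a quadratic or bilinear form in a single Gaussian vector. Each tail bound then reduces to one concentration inequality for Gaussian quadratic forms (of Hanson--Wright type), whose variance proxy is governed by the stability measures in \eqref{eqn:measure-stability}, followed by an $\epsilon$-net and union-bound step whose entropy fixes the sample-size thresholds. The four conditions ($N \succsim p$, $N \succsim \log p$, $N \succsim m\log p$, $N \succsim p\,\mathcal{M}^2(f_X)/\EuFrak{m}^2(f_X)$) are precisely what places each form in the sub-Gaussian (small-deviation) regime of the tail bound once the relevant entropy is accounted for.

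The technical backbone, which I expect to be the main obstacle, is the bridge from the spectral density to the covariance of finite stretches of the process. For a fixed unit vector $v$ the scalar series $v'X^t$ has spectral density $v'f_X(\theta)v$, so a standard Toeplitz argument bounds the $N \times N$ covariance matrix $Q_v$ of $(v'X^0, \ldots, v'X^{T-1})$ by $2\pi\EuFrak{m}(f_X) \le \Lambda_{\min}(Q_v) \le \Lambda_{\max}(Q_v) \le 2\pi\mathcal{M}(f_X)$. For the cross term $\mathcal{X}'E$ I would instead bound the spectral norm of the joint spectral density of the augmented process $(X^{t-1}, \epsilon^t)$; invoking the stability bounds \eqref{eqn:bound-stability-measures}, which relate $\mathcal{M}(f_X), \EuFrak{m}(f_X)$ to $\mu_{\max}(\mathcal{B}), \mu_{\min}(\mathcal{B})$ and $\Lambda_{\max}(\Sigma_\epsilon)$, this is what produces exactly the proxy $\phi(B, \Sigma_\epsilon)$ of the statement. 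Controlling this joint covariance, and in particular the cross-covariance between $\mathcal{X}$ and $E$ — which are dependent across time even though $\epsilon^t \perp X^{t-1}$ for each fixed $t$ — is the delicate part; everything downstream is routine Gaussian concentration.

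For the three deviation bounds on $\mathcal{X}'E$, I would apply the polarization identity $u'(\mathcal{X}'E)v = \tfrac14\big(\|\mathcal{X}u + Ev\|^2 - \|\mathcal{X}u - Ev\|^2\big)$, writing each bilinear form as a difference of squared norms of Gaussian vectors and invoking the quadratic-form tail bound with proxy $\phi$. Taking $u = e_i, v = e_j$ and a union bound over the $p^2$ entries yields the $\|\cdot\|_{\max}$ bound, which needs only $N \succsim \log p$ to absorb the $2\log p$ entropy. For $\|\cdot\|_2$ I would take a supremum over an $\epsilon$-net of $\mathbb{S}^{p-1} \times \mathbb{S}^{p-1}$, of log-cardinality $\asymp p$, forcing the threshold $\asymp \phi\sqrt{p/N}$ and the condition $N \succsim p$. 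For $\|\cdot\|_{2,\max}$ I would net each group's $m$-dimensional sphere and union over the $K \le p^2$ groups, so the entropy is $\asymp m + \log K \lesssim m\log p$, giving the threshold $\phi\sqrt{m\log p/N}$ under $N \succsim m\log p$.

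Finally, for $\Lambda_{\min}(\mathcal{X}'\mathcal{X}/N) = \inf_v \|\mathcal{X}v\|^2/N$ there is no cross term: $\|\mathcal{X}v\|^2$ is directly a Gaussian quadratic form whose mean $\mathrm{tr}(Q_v)/N \ge 2\pi\EuFrak{m}(f_X) \ge \Lambda_{\min}(\Sigma_\epsilon)/\mu_{\max}(\mathcal{B})$, by the lower eigenvalue bound on $Q_v$ and \eqref{eqn:bound-stability-measures}, while its Hanson--Wright deviation scales with $\Lambda_{\max}(Q_v) \le 2\pi\mathcal{M}(f_X)$. Controlling the infimum over an $\epsilon$-net of $\mathbb{S}^{p-1}$ (entropy $\asymp p$) to within a constant fraction of this mean requires the deviation-to-mean ratio to be small, i.e. $N \succsim p\,\mathcal{M}^2(f_X)/\EuFrak{m}^2(f_X)$, and the resulting high-probability lower bound $\tfrac12\,\Lambda_{\min}(\Sigma_\epsilon)/\mu_{\max}(\mathcal{B})$ is precisely the stated $\Lambda_{\min}(\Sigma_\epsilon)/(2\mu_{\max}(\mathcal{B}))$.
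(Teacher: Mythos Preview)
Your proposal is correct and follows essentially the same route as the paper: a single-pair concentration bound for $u'(\mathcal{X}'E/N)v$ via Gaussian quadratic forms with variance proxy controlled by the joint spectral density of $(X^{t-1},\epsilon^t)$ (this is exactly what the paper invokes from \cite{basu2015regularized}, and your polarization identity is the standard way to unpack that result), followed by union bounds over canonical basis vectors for $\|\cdot\|_{\max}$, over $\epsilon$-nets of $\mathbb{S}^{p-1}$ for $\|\cdot\|_2$ and $\Lambda_{\min}$, and over groups for $\|\cdot\|_{2,\max}$. The only minor deviation is in the $\|\cdot\|_{2,\max}$ step: the paper uses the crude bound $\|(\mathcal{X}'E/N)_{G_k}\|_2 \le \sqrt{m_k}\,\|(\mathcal{X}'E/N)_{G_k}\|_\infty$ and a union over the $m_k$ coordinates, yielding exactly the stated rate $\sqrt{m\log p/N}$, whereas your $\epsilon$-net on each group's sphere gives the slightly sharper $\sqrt{(m+\log K)/N}$, which of course still implies the proposition as stated.
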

\noindent See  Appendix B for the  detailed  proof  of  Proposition \ref{prop:conc-sparse-lowrank}.

Using the above deviation bounds in the non-asymptotic errors of Propositions \ref{CLplusS}, 
we obtain the final  result for approximate recovery of the low-rank and the structured sparse components using nuclear and $\ell_1/\ell_{2,1}$ norm relaxations, as we show next.
\begin{prop}\label{prop:main-result-low-rank}
Consider the setup of Proposition \ref{prop:conc-sparse-lowrank}. There exist universal positive constants $c_i>0$ such that for $N \succsim p \mathcal{M}^2(f_X)/\EuFrak{m}^2(f_X)$, and $\|L^{*}\|_{\max} \le \alpha/p$, any solution $(\hat{L}, \hat{S})$ of the program \eqref{eqn:opt-sparse-lowrank} satisfies, with probability at least $1-c_1 \exp[-c_2 \log p]$,
\begin{equation}
\|\hat{S}-S^{*}\|^2_F + \|\hat{L}-L^{*}\|^2_F \le \frac{c_0 \phi^2(B, \Sigma_{\epsilon}) \mu_{\max}^2(\mathcal{B})}{\Lambda_{\min}^2(\Sigma_\epsilon)}
 \frac{(rp + s \log p)}{N} + \frac{32 \Lambda^2_{\min}(\Sigma_{\epsilon})}{\mu^2_{\max}(\mathcal{B})}\frac{s \alpha^2}{p^2}.
\end{equation}
\end{prop}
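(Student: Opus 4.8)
The plan is to read Proposition \ref{prop:main-result-low-rank} as an instantiation of the deterministic bound in Proposition \ref{CLplusS}(a) on the high-probability event supplied by Proposition \ref{prop:conc-sparse-lowrank}; since both of those results may be assumed, the real content is isolating the restricted-strong-convexity constant $\zeta$ and then tracking constants when the tuning parameters are substituted.

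First I would verify the RSC condition and pin down $\zeta$. The structural observation specific to VAR is that, although the regression \eqref{eqn:data-sparse-lowrank} has $q=p^2$ nominal variables, the coefficient $\Delta$ is a $p\times p$ matrix acting on $p$-dimensional regressors, so the relevant Gram matrix $\mathcal{X}'\mathcal{X}$ is only $p\times p$. Hence $\tfrac12\|\mathcal{X}\Delta\|_F^2 \ge \tfrac12\Lambda_{\min}(\mathcal{X}'\mathcal{X})\|\Delta\|_F^2$ for every $\Delta\in\mathbb{R}^{p\times p}$, i.e. RSC holds with $\tau_N=0$ and $\zeta=\Lambda_{\min}(\mathcal{X}'\mathcal{X})$. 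Part 2 of Proposition \ref{prop:conc-sparse-lowrank} then yields, once $N\succsim p\,\mathcal{M}^2(f_X)/\EuFrak{m}^2(f_X)$, the bound $\zeta \ge N\Lambda_{\min}(\Sigma_\epsilon)/(2\mu_{\max}(\mathcal{B}))$ on an event of probability at least $1-c_1\exp[-c_2\log p]$. Because $\mathcal{M}(f_X)\ge\EuFrak{m}(f_X)$, this same sample-size requirement dominates the $N\succsim p$ and $N\succsim\log p$ thresholds needed for part 1.

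Next, on the event where the spectral and the $\max$ deviation bounds of part 1 hold (the $\|\cdot\|_{2,\max}$ bound is not needed for the sparse case), I would fix $\zeta=N\Lambda_{\min}(\Sigma_\epsilon)/(2\mu_{\max}(\mathcal{B}))$ and choose the tuning parameters at their smallest admissible values $\lambda_N=4\|\mathcal{X}'E\|_2$ and $\mu_N=4\|\mathcal{X}'E\|_{\max}+4\zeta\alpha/p$; feasibility of the truth is guaranteed by $\|L^*\|_{\max}\le\alpha/p$, i.e. $L^*\in\Omega$. Converting the normalized bounds to the unnormalized quantities of Proposition \ref{CLplusS} gives $\|\mathcal{X}'E\|_2\le c_0\phi\sqrt{Np}$ and $\|\mathcal{X}'E\|_{\max}\le c_0\phi\sqrt{N\log p}$ with $\phi=\phi(B,\Sigma_\epsilon)$. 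Substituting into $\tfrac{4}{\zeta^2}\big(\tfrac92\lambda_N^2 r+4\mu_N^2 s\big)$ and splitting $\mu_N^2\le 2(4\|\mathcal{X}'E\|_{\max})^2+2(4\zeta\alpha/p)^2$, the $\lambda_N^2 r$ term and the first piece of $\mu_N^2 s$ both carry a factor $\phi^2 N/\zeta^2$; inserting the value of $\zeta$ converts them into the statistical rate $\tfrac{\phi^2\mu_{\max}^2(\mathcal{B})}{\Lambda_{\min}^2(\Sigma_\epsilon)}\cdot\tfrac{rp+s\log p}{N}$. The second piece is the $N$-free nonidentifiability bias of order $\tfrac{s\alpha^2}{p^2}$, matching the displayed correction term. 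A union bound over the RSC event and the two deviation events preserves the overall failure probability $c_1\exp[-c_2\log p]$ after adjusting constants.

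The main obstacle is the constant bookkeeping in that last substitution, in particular the clean separation of $\mu_N^2 s$ into a statistical part that inherits the $1/N$ decay through $1/\zeta^2$ and a nonidentifiability part arising from the $4\zeta\alpha/p$ summand, where one must use the guaranteed lower bound on $\zeta$ for the $1/\zeta^2$ prefactor while keeping the value of $\zeta$ used inside $\mu_N$ consistent; reproducing the exact model-dependent prefactor displayed in front of $\tfrac{s\alpha^2}{p^2}$ is the delicate point. The one genuinely VAR-specific input is the justification that $\tau_N=0$ is admissible: it is precisely the $p\times p$ (rather than $p^2\times p^2$) structure of the Gram matrix that makes a full minimum-eigenvalue bound, rather than a restricted one, available, so the RSC correction term never interferes with the bound.
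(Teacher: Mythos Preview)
Your proposal is correct and follows essentially the same route as the paper: the paper's own proof consists of the single sentence ``setting $\zeta$ to the lower bound on $\EuFrak{m}(f_X)$ as in \eqref{eqn:bound-stability-measures} satisfies the RSC; combining the estimates of Proposition~\ref{CLplusS} and Proposition~\ref{prop:conc-sparse-lowrank} leads to the result after simple algebraic computation,'' and you have unpacked exactly that computation, including the key observation that the $p\times p$ Gram structure yields RSC with $\tau_N=0$. Your flagging of the bias-term prefactor as the delicate bookkeeping point is also accurate; the paper does not spell out how the $\Lambda_{\min}^2(\Sigma_\epsilon)/\mu_{\max}^2(\mathcal{B})$ factor arises in front of $s\alpha^2/p^2$, and indeed a straightforward substitution along your lines produces an absolute constant there instead.
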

\textit{Remark:} The error bound presented in the above proposition consists of two key terms. The first term is the error of estimation emanating from randomness in the data and limited sample capacity. For a given model, this error goes to zero as the sample size increases. The second term represents the error due to the unidentifiability of the problem. This is more fundamental to the structure of the true low-rank and structured sparse components, and depends only on the model parameters and does not vanish, even with infinite sample size.

Further, the estimation error is a product of two terms - the second term $(rp+s \log p)/N$ involves the dimensionality parameters and matches the parametric convergence rate for independent observations. The effect of dependence in the data is captured through the first part of the term: $ \frac{c_0 \phi^2(B, \Sigma_{\epsilon}) \mu_{\max}^2(\mathcal{B})}{\Lambda_{\min}^2(\Sigma_\epsilon)}$. As discussed in \cite{basu2015regularized}, this term is larger when the spectral density is more spiky, indicating a stronger temporal and cross-sectional dependence in the data.

\begin{prop}\label{prop:main-result-low-rank+G}
Consider the setup of Proposition \ref{prop:conc-sparse-lowrank}. There exist universal positive constants $c_i>0$ such that for $N \succsim p \mathcal{M}^2(f_X)/\EuFrak{m}^2(f_X)$, for any $G^0$ with $\|L^{*}\|_{2,\max} \le \beta/\sqrt{K}$, any solution $(\hat{L}, \hat{G})$ of the program \eqref{eqn:opt-sparse-lowrank} satisfies, with probability at least $1-c_1 \exp[-c_2 \log p]$,
\begin{equation}
\|\hat{G}-G^{*}\|^2_F + \|\hat{L}-L^{*}\|^2_F  \le \frac{c_0 \phi^2(B, \Sigma_{\epsilon}) \mu_{\max}^2(\mathcal{B})}{\Lambda_{\min}^2(\Sigma_\epsilon)}
\frac{(rp + g (m\log p))}{N} + \frac{32 \Lambda^2_{\min}(\Sigma_{\epsilon})}{\mu^2_{\max}(\mathcal{B})}\frac{g \beta^2}{K}.
\end{equation}
\end{prop}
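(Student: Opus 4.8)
The plan is to derive this bound by feeding the high-probability guarantees of Proposition~\ref{prop:conc-sparse-lowrank} into the deterministic oracle inequality of Proposition~\ref{CLplusS}(b), following exactly the template used for the low-rank plus sparse case in Proposition~\ref{prop:main-result-low-rank}, with the coordinatewise max norm $\|\cdot\|_{\max}$ replaced throughout by the group max norm $\|\cdot\|_{2,\max}$. First I would define the event $\calE$ on which all three deviation inequalities of Proposition~\ref{prop:conc-sparse-lowrank} hold simultaneously together with the minimum-eigenvalue bound $\Lambda_{\min}(\mathcal{X}'\mathcal{X}/N)\ge \Lambda_{\min}(\Sigma_\epsilon)/(2\mu_{\max}(\mathcal{B}))$. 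Since each of these four events fails with probability at most $c_1\exp[-c_2\log p]$ once $N\succsim p\,\mathcal{M}^2(f_X)/\EuFrak{m}^2(f_X)$, a union bound (absorbing constants into $c_1,c_2$) shows $\PP(\calE)\ge 1-c_1\exp[-c_2\log p]$. All subsequent reasoning is carried out deterministically on $\calE$.

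The key structural observation, and the reason the sample size scales like $p$ rather than $\log p$, is that the design $\mathcal{X}\in\RR^{N\times p}$ is shared across all $p$ response coordinates, so the effective dimension per regression is $p$, not $p^2$. Consequently, on $\calE$ with $N\succsim p$ the matrix $\mathcal{X}'\mathcal{X}$ is bounded below, and the Restricted Strong Convexity assumption holds \emph{globally} with $\tau_N=0$ and curvature $\zeta=\Lambda_{\min}(\mathcal{X}'\mathcal{X})\ge N\Lambda_{\min}(\Sigma_\epsilon)/(2\mu_{\max}(\mathcal{B}))$, because $\tfrac12\|\mathcal{X}\Delta\|_F^2\ge\tfrac12\Lambda_{\min}(\mathcal{X}'\mathcal{X})\|\Delta\|_F^2$ for every $\Delta\in\RR^{p\times p}$. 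This pins down the constant $\zeta$ that enters Proposition~\ref{CLplusS}(b).

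Next I would set the tuning parameters at their smallest admissible values, $\lambda_N=4\|\mathcal{X}'E\|_2$ and $\mu_N=4\|\mathcal{X}'E\|_{2,\max}+4\zeta\beta/\sqrt{K}$, which satisfy the hypotheses of Proposition~\ref{CLplusS}(b) on $\calE$; here the assumption $\|L^{*}\|_{2,\max}\le\beta/\sqrt{K}$ guarantees $L^{*}\in\Omega$, so the true decomposition is feasible. Using the deviation bounds $\|\mathcal{X}'E\|_2\lesssim\phi(B,\Sigma_\epsilon)\sqrt{Np}$ and $\|\mathcal{X}'E\|_{2,\max}\lesssim\phi(B,\Sigma_\epsilon)\sqrt{Nm\log p}$, I would substitute into $\frac{4}{\zeta^2}\bigl(\tfrac92\lambda_N^2 r+4\mu_N^2 g\bigr)$ and expand $\mu_N^2$ through $(a+b)^2\le 2a^2+2b^2$. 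Inserting $1/\zeta^2\le 4\mu_{\max}^2(\mathcal{B})/(N^2\Lambda_{\min}^2(\Sigma_\epsilon))$, the $\lambda_N^2 r$ contribution and the $\|\mathcal{X}'E\|_{2,\max}^2\,g$ contribution collapse into the statistical term $\frac{c_0\phi^2(B,\Sigma_\epsilon)\mu_{\max}^2(\mathcal{B})}{\Lambda_{\min}^2(\Sigma_\epsilon)}\frac{rp+g\,m\log p}{N}$, while the remaining $\zeta^2\beta^2/K$ piece (whose $\zeta^2$ cancels against the leading $1/\zeta^2$) produces the non-vanishing identifiability term of order $g\beta^2/K$, as asserted.

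The union bound and the collection of numerical constants are routine. The only step requiring genuine care is the clean separation of the two error terms, namely verifying that the slack $4\zeta\beta/\sqrt{K}$ built into the threshold for $\mu_N$ is exactly what converts the group identifiability radius $\|L^{*}\|_{2,\max}\le\beta/\sqrt{K}$ into the irreducible $g\beta^2/K$ contribution, independent of the sample size. All of the substantive analysis, namely the deviation bounds for the temporally dependent design $\mathcal{X}$ in Proposition~\ref{prop:conc-sparse-lowrank} and the deterministic inequality of Proposition~\ref{CLplusS}(b), is already established, so this proposition reduces to careful bookkeeping of those two inputs.
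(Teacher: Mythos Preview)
Your proposal is correct and follows essentially the same approach as the paper: the paper's own proof simply states that setting $\zeta$ to the lower bound coming from $\EuFrak{m}(f_X)$ verifies RSC, and that combining the deterministic bound of Proposition~\ref{CLplusS}(b) with the deviation bounds of Proposition~\ref{prop:conc-sparse-lowrank} yields the result ``after simple algebraic computation.'' Your write-up supplies exactly those omitted details---the union bound over the four events, the identification of $\zeta$ with $\Lambda_{\min}(\mathcal{X}'\mathcal{X})$ so that RSC holds with $\tau_N=0$, the choice of $\lambda_N,\mu_N$ at their minimal admissible values, and the splitting of $\mu_N^2$ via $(a+b)^2\le 2a^2+2b^2$ to separate the statistical and identifiability contributions---and is in fact more explicit than the paper itself.
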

\noindent See  Appendix B for  the  detailed  proof  of  Proposition \ref{prop:main-result-low-rank} and \ref{prop:main-result-low-rank+G}.

\textit{Remark.} Based on Proposition \ref{prop:main-result-low-rank+G}, similar conclusions can be obtained as that for the low rank plus sparse case.

\section{Computational Algorithm and its Convergence Properties}\label{sec:algorithm}
Next, we introduce a fast algorithm for estimating the transition matrix $B$ from data. For ease of presentation and to convey the key ideas clearly, we first present the algorithm for $B$ representing a single structure (e.g. only low rank, or only group sparse, or only sparse), and in addition establish its convergence properties.
Subsequently, we modify the algorithm to handle the composite structures considered in this paper and also establish its convergence.

The fast network structure learning (FNSL) Algorithm \ref{alg:fnsl} is described next.
A safeguarded BB initial value is selected, as the initial choice of the nominal step $\eta_i$, i.e.
\begin{equation}\label{eq30}
\eta_{0,i} = \max \left \{\eta_{\min}, \frac{\|\mathcal{X}(B_i-B_{i-1})\|^2_F}{\|B_i-B_{i-1}\|^2_F}\right\} \ \ \text{for} \ \ i> 1.
\end{equation}
For notational convenience, the penalty term for estimating the transition matrix $B$ is denoted by $P_B(B, \lambda)$, where $\lambda > 0$ represents the tuning parameter.
\begin{algorithm*}
	\caption{ Fast Network Structure Learning (FNSL) method}
    \label{alg:fnsl}
	\begin{algorithmic}
		\STATE Choose $C\geq 0, \sigma > 1, \eta_{0,1} \geq\eta_{\min}$.
               Set $\alpha_1=1, B_{1}^{ag}=B_1$, and $Q_1=0$.
		\STATE \textbf{For} $i = 1,2, \ldots, k$,
		\begin{enumerate}[\hspace{.2cm}1.\hspace{.5cm}]
			\item[] \slash\slash {\it { Backtracking} }
			\item Set $\eta_i=\alpha_i\eta_{0,i}$, where $\eta_{0,i}$ is from \eqref{eq30}. Solve $\alpha_{i}$ from
                 $\frac{1}{\alpha_{i-1}\eta_{i-1}}=\frac{1-\alpha_{i}}{\alpha_{i}\eta_{i}}$ for $i>1$. Compute
			\begin{align*}
			B_i^{md}     = & (1 - \alpha_i)B_i^{ag} + \alpha_i B_i,
			\\
			B_{i+1} = & \underset{B}{\arg\min}\left\{\langle \nabla l(B_i^{md}), B\rangle+\frac{\eta_i}{2}\|B-B_i\|^{2}_F
			+ P_B(B, \lambda) \right\},
			\\
			\Gamma_i=&\|B_{i+1}-B_{i}\|^2-\frac{\alpha_i}{\eta_i}\|\mathcal{X}(B_{i+1}-B_{i})\|^2_F,
			\\
			Q_{i+1}=& \beta_iQ_i+\Gamma_i,\ \text{ where }0\leq \beta_i \leq (1-\frac{1}{i})^2.
			\end{align*}
			\item
			If $Q_{i+1}< -{C}/{i^2}$, then replace $\eta_{0,i}$ by $\sigma\eta_{0,i}$ and return to step 1.
			
			\item[] \slash\slash {\it { Updating iterates} }
			\item Compute
            \begin{align*}
            B_{i+1}^{ag} = (1-\alpha_i)B_i^{ag}+\alpha_iB_{i+1}.
			\end{align*}
		\end{enumerate}
		\STATE \textbf{EndFor}
		\STATE \textbf{Output} $B_{k+1}^{ag}$.
	\end{algorithmic}
\end{algorithm*}
The specific $B_{i+1}$ update depends on the employed penalty term; for an $\ell_1$ penalty inducing sparsity, it corresponds to soft-thresholding
\cite{daubechies2004iterative}, for a group sparse penalty to group soft-thresholding \cite{yang2017sparse+}, while for a nuclear norm penalty to singular value thresholding \cite{cai1956singular}.

It can also been seen in  Algorithm \ref{alg:fnsl}, that for $\alpha_i \equiv 1$ for $\forall i\geq 1$, then $B_i^{md} = B_i$ and $B_{i+1}^{ag} = B_{i+1}$,  which leads to the traditional gradient descent algorithm. Indeed, Algorithm \ref{alg:fnsl} is obtained by incorporating an efficient backtracking strategy into the accelerated multi-step scheme by \cite{nesterov2004introductory,tseng2008accelerated}. It provides a different way to look for a larger stepsize by employing a relaxed line search condition, instead of searching for the gradient Lipschitz constant of the data fidelity term. { Steps 1 and 2 constitute the backtracking ones. Both $B_i^{md}$ and $B_i^{ag}$ are linear combinations of all past iterations of $B_i$, but based on different weights as can be seen from their updates, i.e.  $B_i^{ag}=(1-\alpha_{i-1})B_{i-1}^{ag}+\alpha_{i-1} B_i$ and $B_i^{md}=(1-\alpha_i)B_{i}^{ag}+\alpha_i B_i$. Here `ag' simply denotes `aggregate'.  The data fidelity term in the cost function is linearized at $B_i^{md}$. Since it is used after we have obtained $B_i^{ag}$ and before we obtain  $B_{i+1}$,  we use `md' to denote a `middle' update. Further, $\Gamma_i$ and $Q_i$ are the parameters most closely related to our line search conditions. Intuitively, if we set $\Gamma_i \geq 0$, we are  certainly able to guarantee the accelerated rate of convergence. However, this will render the stepsize smaller. Fortunately, our convergence analysis enables us to relax this condition by utilizing the summing up procedure, with $Q_i$ corresponding to the part of the sum of $\Gamma_i$. Thus, we can impose a relaxed termination condition on $Q_i$ (see step 2 of Algorithm \ref{alg:fnsl}) without impacting the rate of convergence while being able to obtain a more aggressive stepsize}. In fact, parameter $C$ in Step 2 plays an important role, i.e. the number of
the trial steps can be reduced significantly when a relatively larger $C$ is selected. However, the value of $C$ can not be too large either, since it might impair the convergence rate in terms of the objective function value.


The convergence rate of the proposed algorithm \ref{alg:fnsl} is established next.
\begin{prop}\label{ACRNL}
Let $\{B_{k+1}^{ag}\}$ be generated by Algorithm \ref{alg:fnsl}. Then, for any $k \geq 1$
\begin{equation}\label{TR1}
\begin{array}{ll}
l(B_{k+1}^{ag}) - l(\hat{B}) \leq \frac{2\sigma\|\mathcal{X}\|_2^2\|B_0 - \hat{B}\|^2_F+\tilde{C}}{(k+1)^2}
\end{array}
\end{equation}
where $\tilde{C}$ is a finite positive number independent of $k$.
\end{prop}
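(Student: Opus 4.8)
The plan is to prove the $\mathcal{O}(1/k^2)$ rate by a Tseng/Nesterov-type estimate-sequence argument adapted to the relaxed line search encoded by $\Gamma_i$ and $Q_i$. Write $l(B)=f(B)+P_B(B,\lambda)$ with $f(B)=\tfrac12\|\mathcal{Y}-\mathcal{X}B\|_F^2$; the single most useful structural fact is that $f$ is \emph{exactly} quadratic, so that $f(B')=f(B)+\langle\nabla f(B),B'-B\rangle+\tfrac12\|\mathcal{X}(B'-B)\|_F^2$ holds with equality and $\nabla f$ has global Lipschitz constant $\|\mathcal{X}\|_2^2$. First I would record the identity $B_{i+1}^{ag}-B_i^{md}=\alpha_i(B_{i+1}-B_i)$, which follows at once from the two aggregation updates, and expand $f(B_{i+1}^{ag})$ about $B_i^{md}$ via the exact quadratic formula. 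The resulting second-order term $\tfrac{\alpha_i^2}{2}\|\mathcal{X}(B_{i+1}-B_i)\|_F^2$, once paired with the proximal term $\tfrac{\alpha_i\eta_i}{2}\|B_{i+1}-B_i\|_F^2$, rearranges precisely into $-\tfrac{\alpha_i\eta_i}{2}\Gamma_i$ plus $\tfrac{\alpha_i\eta_i}{2}\|B_{i+1}-B_i\|_F^2$; this is exactly how the line-search surrogate $\Gamma_i$ enters, and is the mechanism by which the algorithm tolerates a stepsize below the true Lipschitz constant.

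Second, I would derive the fundamental one-step recursion. Since $B_{i+1}$ minimizes an $\eta_i$-strongly convex objective, its optimality inequality gives, for the comparison point $\hat{B}$,
\[
\langle\nabla f(B_i^{md}),B_{i+1}\rangle+\tfrac{\eta_i}{2}\|B_{i+1}-B_i\|_F^2+P_B(B_{i+1})\le \langle\nabla f(B_i^{md}),\hat{B}\rangle+\tfrac{\eta_i}{2}\|\hat{B}-B_i\|_F^2-\tfrac{\eta_i}{2}\|B_{i+1}-\hat{B}\|_F^2+P_B(\hat{B}).
\]
Combining this with convexity of $P_B$ applied to $B_{i+1}^{ag}=(1-\alpha_i)B_i^{ag}+\alpha_i B_{i+1}$, and with convexity of $f$ at $B_i^{md}$ along the two directions $\hat{B}-B_i^{md}$ and $B_i^{ag}-B_i^{md}$ (the coefficient of $f(B_i^{md})$ cancels), I expect to reach
\[
l(B_{i+1}^{ag})-l(\hat{B})\le (1-\alpha_i)\big[l(B_i^{ag})-l(\hat{B})\big]+\tfrac{\alpha_i\eta_i}{2}\big[\|\hat{B}-B_i\|_F^2-\|B_{i+1}-\hat{B}\|_F^2\big]-\tfrac{\alpha_i\eta_i}{2}\Gamma_i.
\]
Multiplying by the weight $\tilde{A}_i:=1/(\alpha_i\eta_i)=1/(\alpha_i^2\eta_{0,i})$ and using the defining relation of the algorithm, which with $\eta_i=\alpha_i\eta_{0,i}$ is equivalent to $(1-\alpha_i)\tilde{A}_i=\tilde{A}_{i-1}$, converts this into a telescoping inequality in which both the function-value gaps and the squared distances $\|B_i-\hat{B}\|_F^2$ collapse. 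With $\alpha_1=1$ annihilating the boundary term and $B_1=B_0$, summing $i=1,\dots,k$ yields $l(B_{k+1}^{ag})-l(\hat{B})\le \tfrac{1}{\tilde{A}_k}\big[\tfrac12\|B_0-\hat{B}\|_F^2-\tfrac12\sum_{i=1}^k\Gamma_i\big]$.

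Third, two quantitative ingredients finish the estimate. The Barzilai--Borwein initializer satisfies $\eta_{0,i}\le\|\mathcal{X}\|_2^2$ as a Rayleigh-quotient, and the backtracking safeguard keeps it bounded after inflation: once $\eta_{0,i}\ge\|\mathcal{X}\|_2^2$, the bound $\tfrac{\alpha_i}{\eta_i}\|\mathcal{X}(B_{i+1}-B_i)\|_F^2=\tfrac{1}{\eta_{0,i}}\|\mathcal{X}(B_{i+1}-B_i)\|_F^2\le\|B_{i+1}-B_i\|_F^2$ forces $\Gamma_i\ge 0$ and the test $Q_{i+1}\ge -C/i^2$ passes, so $\eta_{0,i}\le\sigma\|\mathcal{X}\|_2^2$. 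Feeding this into the elementary estimate $\sqrt{\tilde{A}_i}-\sqrt{\tilde{A}_{i-1}}\ge 1/(2\sqrt{\eta_{0,i}})\ge 1/(2\sqrt{\sigma}\,\|\mathcal{X}\|_2)$ and summing with $\tilde{A}_1=1/\eta_{0,1}$ gives $\tilde{A}_k\ge (k+1)^2/(4\sigma\|\mathcal{X}\|_2^2)$, i.e. $1/\tilde{A}_k\le 4\sigma\|\mathcal{X}\|_2^2/(k+1)^2$, which already produces the leading term $2\sigma\|\mathcal{X}\|_2^2\|B_0-\hat{B}\|_F^2/(k+1)^2$.

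Fourth, and this is the step I expect to be the main obstacle, I must show $-\sum_{i=1}^k\Gamma_i$ is bounded by a constant independent of $k$, since the relaxed line search permits individual $\Gamma_i<0$. Here the running tally $Q_{i+1}=\beta_iQ_i+\Gamma_i$ does the work: writing $\Gamma_i=Q_{i+1}-\beta_iQ_i$ and summing with $Q_1=0$ gives $-\sum_{i=1}^k\Gamma_i=-Q_{k+1}-\sum_{i=2}^k(1-\beta_i)Q_i$, and the per-step safeguard $Q_i\ge -C/(i-1)^2$ together with $0\le\beta_i\le(1-1/i)^2\le 1$ bounds this by $C/k^2+C\sum_{j\ge 1}1/j^2<\infty$. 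Multiplying this finite constant by $1/\tilde{A}_k\le 4\sigma\|\mathcal{X}\|_2^2/(k+1)^2$ yields the second term $\tilde{C}/(k+1)^2$, completing the bound. The delicate point throughout is precisely that the $\beta_i$ schedule and the $-C/i^2$ threshold are calibrated so that the cumulative line-search error stays summable \emph{without} forcing $\Gamma_i\ge 0$ at every iteration; verifying this interplay rigorously, rather than the otherwise-standard telescoping, is where the real care is required.
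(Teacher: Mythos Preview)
Your proposal is correct and follows essentially the same route as the paper's proof: the exact quadratic expansion of $f$ at $B_i^{md}$ together with the identity $B_{i+1}^{ag}-B_i^{md}=\alpha_i(B_{i+1}-B_i)$, the optimality (three-point) inequality for $B_{i+1}$, division by $\alpha_i\eta_i$ and telescoping via $(1-\alpha_i)/(\alpha_i\eta_i)=1/(\alpha_{i-1}\eta_{i-1})$, the bound $\alpha_k\eta_k\le 4\sigma\|\mathcal{X}\|_2^2/(k+1)^2$, and the control of the accumulated line-search slack through $Q_{i+1}=\beta_iQ_i+\Gamma_i$ with $Q_i\ge -C/(i-1)^2$. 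The only cosmetic difference is that the paper weaves the $Q_i$ terms into the recursion before summing, whereas you telescope first and bound $-\sum_i\Gamma_i$ afterwards; the two bookkeepings are equivalent, and your explicit argument that backtracking terminates with $\eta_{0,i}\le\sigma\|\mathcal{X}\|_2^2$ (because $\beta_i\le(1-1/i)^2$ forces $\beta_iQ_i\ge -C/i^2$ once $\Gamma_i\ge 0$) is in fact cleaner than the paper's.
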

\noindent See Appendix C for the detailed proof. It should be noted that the convergence rate of $\{B_{k+1}\}$ is still an open problem, which needs to be addressed further in future, considering its good performance for some cases.

Next, we enhance the algorithm for solving \eqref{eqn:opt-sparse-lowrank} in the general case. The accelerated convergence rate can be obtained by following the proof for Proposition \ref{ACRNL}.
Similarly to the case in Algorithm \ref{alg:fnsl}, the initial trial step of $\eta_i$ is a safeguarded BB choice
\begin{equation}\label{eq31}
\eta_{0,i} = \max \left \{\eta_{\min}, \frac{\|\mathcal{X}(L_i+R_i-L_{i-1}-R_{i-1})\|^2_F}
{\|L_i+R_i-L_{i-1}-R_{i-1}\|^2_F}\right\}
\end{equation}

The update of the $L$ component is based on singular value thresholding, while that of the $R$ component on (group) soft-thresholding.
Note that the most expensive computational operation corresponds to the singular value decomposition (SVD) when updating
$L_{i+1}$. As mentioned earlier, the proposed algorithm is able to look for larger magnitude step sizes by conducting fewer number of line searches,
due to employing more relaxed line search conditions. Actually, this is an important improvement considering the computational
cost of SVD. Indeed, the efficiency of the proposed algorithm can be enhanced further if we employ
the truncated SVD \cite{lin2011some} instead of the full SVD.

The convergence rate of the proposed algorithm 2 (given in supplement due to limited space) is established next.
\begin{prop}\label{ACRNL2}
Let $(L_{k+1}^{ag}, R_{k+1}^{ag})$ be a sequence of updates generated by Algorithm 2. Then, for any $k \geq 1$
\begin{equation}
\begin{array}{ll}
l(L_{k+1}^{ag}, R_{k+1}^{ag}) - l(\hat{L}, \hat{R}) \leq
\frac{2\sigma\|\mathcal{X}\|_2^2\big(\|L_0 -\hat{L}\|^2_F+\|R_0 - \hat{R}\|^2_F
\big)+\tilde{C}}{(k+1)^2}
\end{array}
\end{equation}
where $\tilde{C}$ is a finite positive number independent of $k$.
\end{prop}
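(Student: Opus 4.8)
The plan is to reduce the composite problem to the single-block case of Proposition \ref{ACRNL} by treating the pair $(L,R)$ as one variable in the product space $\RR^{p\times p}\times\RR^{p\times p}$ equipped with the inner product $\langle (L,R),(L',R')\rangle = \langle L,L'\rangle_F + \langle R,R'\rangle_F$ and the norm $\|(L,R)\|^2 = \|L\|_F^2 + \|R\|_F^2$. Two structural facts make this reduction work: (i) the smooth data-fidelity term $f(L,R) := \frac12\|\mathcal{Y} - \mathcal{X}(L+R)\|_F^2$ depends on $(L,R)$ only through the sum $\Delta := L+R$, so its two gradient blocks coincide, namely $\nabla_L f = \nabla_R f = \mathcal{X}'(\mathcal{X}(L+R) - \mathcal{Y})$; and (ii) the nonsmooth penalty $\lambda_N\|L\|_* + \mu_N\|R\|_{\diamond}$ is separable across the two blocks. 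Since the proximal quadratic $\frac{\eta_i}{2}(\|L - L_i\|_F^2 + \|R - R_i\|_F^2)$ is also separable, the joint proximal subproblem decouples exactly into the singular-value-thresholding update for $L$ and the (group) soft-thresholding update for $R$ (a convex box constraint on $L$, if retained, folds into the separable penalty as an indicator and preserves this decoupling). This is precisely the step performed in Algorithm 2, so its iterates are bona fide minimizers of the joint proximal map.

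First I would record the three-point inequality from first-order optimality of the joint proximal subproblem: for every $(L,R)$, writing $\Delta_i^{md} = L_i^{md} + R_i^{md}$ and letting $P$ denote the separable penalty,
\[
\langle \nabla f(\Delta_i^{md}), \Delta_{i+1} - \Delta\rangle + \eta_i\big[\langle L_{i+1}-L_i, L_{i+1}-L\rangle + \langle R_{i+1}-R_i, R_{i+1}-R\rangle\big] + P(L_{i+1},R_{i+1}) - P(L,R) \le 0 .
\]
Because $f$ is a convex quadratic in $\Delta$, its second-order Taylor expansion is exact, $f(\Delta_{i+1}) = f(\Delta_i^{md}) + \langle \nabla f(\Delta_i^{md}), \Delta_{i+1}-\Delta_i^{md}\rangle + \frac12\|\mathcal{X}(\Delta_{i+1}-\Delta_i^{md})\|_F^2$, which I would combine with the prox inequality to obtain a per-step descent relation involving exactly the slack quantity $\Gamma_i = \|L_{i+1}-L_i\|_F^2 + \|R_{i+1}-R_i\|_F^2 - \frac{\alpha_i}{\eta_i}\|\mathcal{X}(\Delta_{i+1}-\Delta_i)\|_F^2$ appearing in the algorithm. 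Since the curvature of $f$ in the product space is controlled by $\|\mathcal{X}\|_2^2$ acting on the sum-difference, the safeguarded BB choice \eqref{eq31} — whose denominator is precisely $\|L_i+R_i-L_{i-1}-R_{i-1}\|_F^2$ — together with the multiplicative backtracking by $\sigma$ guarantees the stepsize cannot exceed an $O(\sigma\|\mathcal{X}\|_2^2)$ multiple of this curvature.

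Next I would exploit the momentum recursion $\frac{1}{\alpha_{i-1}\eta_{i-1}} = \frac{1-\alpha_i}{\alpha_i\eta_i}$ to telescope the descent relation, following verbatim the Lyapunov/estimate-sequence argument of Proposition \ref{ACRNL}, with the product-space distance $\|L_i-\hat L\|_F^2 + \|R_i-\hat R\|_F^2$ playing the role previously played by $\|B_i-\hat B\|_F^2$. The induced recursion $\frac{1-\alpha_i}{\alpha_i^2} \le \frac{1}{\alpha_{i-1}^2}$ forces $\alpha_k = O(1/k)$, hence $1/(\alpha_k^2\eta_k) = \Theta(k^2)$, which produces the $(k+1)^2$ factor in the denominator. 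The accumulated line-search slack is exactly the object the relaxed termination test $Q_{i+1} \ge -C/i^2$ controls: with $Q_{i+1} = \beta_i Q_i + \Gamma_i$ and $0\le\beta_i\le(1-1/i)^2$, the weighted partial sums of the $\Gamma_i$ remain bounded by a convergent series in $C$, so they contribute only the finite, $k$-independent constant $\tilde C$ to the numerator rather than degrading the rate.

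The main obstacle I anticipate is the bookkeeping of the relaxed line search, that is, showing rigorously that allowing $\Gamma_i<0$ (so long as the accumulated $Q_{i+1}$ stays above $-C/i^2$) leaves the $O(1/k^2)$ rate intact while inflating only the additive constant $\tilde C$. This requires the same weighted-summation device as in Proposition \ref{ACRNL}: one must verify that the weights $1/(\alpha_i^2\eta_i)$ multiplying the negative slacks, summed against $C/i^2$, form a convergent series that is absorbed into $\tilde C$. The one place where the composite structure must be checked rather than copied is the identity $\langle \nabla f(\Delta),\, \delta_L + \delta_R\rangle = \langle \nabla_L f, \delta_L\rangle + \langle \nabla_R f, \delta_R\rangle$ for arbitrary increments $(\delta_L,\delta_R)$, which holds precisely because both gradient blocks equal $\mathcal{X}'(\mathcal{X}\Delta - \mathcal{Y})$; this is what lets every inner product in the single-variable proof be reinterpreted in the product space without spawning cross terms.
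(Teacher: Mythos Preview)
Your approach is exactly what the paper does: its entire proof of this proposition is the single sentence that it ``is a direct extension of Proposition~\ref{ACRNL} and can be obtained by following the roadmap of the proof for the former result,'' and your product-space reduction (coinciding gradient blocks $\nabla_L f=\nabla_R f$, separable penalty plus indicator of $\Omega$, joint prox decoupling into SVT and soft-thresholding) is precisely how that roadmap is carried out. Your outline is in fact more detailed than anything the paper supplies.

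One technical point does not line up, though. The slack that falls out of the product-space three-point identity is
\[
\|L_{i+1}-L_i\|_F^2+\|R_{i+1}-R_i\|_F^2-\tfrac{\alpha_i}{\eta_i}\|\mathcal{X}(\Delta_{i+1}-\Delta_i)\|_F^2,
\]
but the $\Gamma_i$ that Algorithm~2 actually tests in its line search is $\|\Delta_{i+1}-\Delta_i\|_F^2-\tfrac{\alpha_i}{\eta_i}\|\mathcal{X}(\Delta_{i+1}-\Delta_i)\|_F^2$ with $\Delta=L+R$, i.e.\ the squared norm of the \emph{sum} of increments rather than the sum of squared norms. The two differ by $-2\langle L_{i+1}-L_i,\,R_{i+1}-R_i\rangle$, which has no fixed sign, so your assertion that your slack is ``the slack quantity \ldots\ appearing in the algorithm'' is not correct, and the guarantee $Q_{i+1}\ge -C/i^2$ on the algorithm's $\Gamma_i$ does not literally transfer to yours. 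The paper's one-line proof does not confront this either; to close the argument you should either regard the product-space $\Gamma_i$ as the intended line-search quantity (which is the natural choice given the separate prox steps, and for which backtracking terminates once $\eta_{0,i}\ge 2\|\mathcal{X}\|_2^2$ since $\|\mathcal{X}(a+b)\|_F^2\le 2\|\mathcal{X}\|_2^2(\|a\|_F^2+\|b\|_F^2)$, costing only a constant), or explicitly bridge the two quantities before invoking the $Q$-telescoping.
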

\noindent Proposition \ref{ACRNL2} is a direct extension of Proposition \ref{ACRNL} and can be obtained by following the roadmap of the proof for the former result.

\section{Performance Evaluation}\label{sec:num-exps}
Next, we present experimental results on both synthetic and real data. Specifically, the first two experiments focus on large-scale network learning with single penalty term to show the efficiency and effectiveness of the proposed algorithms, while the remaining ones assess the accuracy of recovering low rank plus structured sparse transition matrices $B$.

\subsection{Performance metrics and experimental settings}
We introduce the performance metrices used in the numerical work. For network estimation, we use the true positive rate (TPR) and false alarm rate (FAR) defined as:
\begin{itemize}
\item  $\text{TPR} := \frac{\sharp\{\hat{b}_{ij}  \neq0 \ \text{and} \ b_{ij}\neq0\}}{\sharp\{b_{ij}\neq0\}}$
\item  $\text{FAR} := \frac{\sharp\{b_{ij}=0 \ \text{and} \ \hat{b}_{ij} \neq 0\}}{\sharp\{b_{ij}=0\}}$
\end{itemize}
where $b_{ij}$ and $\hat{b}_{ij}$ are the correspnding elemnets in $B$ and $\hat{B}$, respectively.
The estimation error (EE) and out-of-sample prediction error (PE) are defined as
\begin{itemize}
\item $\text{EE} := \frac{\|\hat{B}-B\|_F}{\|B\|_F}$
\item $\text{PE} : = \|\hat{\mathcal{Y}}-\mathcal{Y}\|^2_F/\|\mathcal{Y}\|^2_F$
\end{itemize}

To select the optimal value of the tuning parameters, we combine the three (or two or one, respectively) -dimensional grid search method with the AIC/BIC/forward cross-validation criteria. We will specify the criterion on a case by case basis for the following experiments. In examples B and C, the tuning parameter $\lambda$ is selected by the AIC criterion. A grid of 100 values in the interval $[0, \|\mathcal{X}'\mathcal{Y}\|_{\max}]$ is used for $\lambda$. In examples D, E and F, we utilize a two/three-dimensional grid search to select the optimal values of $\lambda_N$, $\mu_N$ and/or $\nu_N$ as that for $\lambda$. For the experiments employing synthetic data, the tuning parameters are selected by assuming the rank of the true low-rank transition matrix and/or the non-zero group-sparse components of the true group-sparse transition matrix are known. We will specify the forward cross-validation procedure for the real data case in example G.

For all the experiments, the parameters used in the proposed algorithms are depicted in table \ref{table:para1}.
Also we set $\Sigma_{\epsilon} = \epsilon^2I$ and $\epsilon^2 = 1$.  We rescale the entries
of $B$ to ensure stability of the process (the spectral radius $\rho \in (0.45, 0.95)$. All the results are based on
50 replications. Finally, all algorithms are run in the MATLAB R2015a environment on a PC equipped with 12GB memory.
\begin{table}[t!]
\centering 
\begin{tabular}{|c|c|c|c|c|} 
\hline 
Parameters & $\sigma$ & $\eta_{\min}$ & $C$ & $\beta_k$  \\ \hline
Values & 2 & $\|\mathcal{X}'\mathcal{X}\|_2/10$ & 100 & 1/k  \\ [1ex] 
\hline 
\end{tabular}
\caption{Parameter settings in the proposed algorithms for all the experiments.} 
\label{table:para1} 
\end{table} 

\subsection{Large-scale sparse network learning}\label{subsec:sparse}
We start by comparing the performance of the proposed algorithm \ref{alg:fnsl} with FISTA with line search \cite{beck2009fast} to solve problem \eqref{eqn:opt-sparse-lowrank} with a sparse transition matrix.

We consider three different VAR(1) models with $p=800, \, 900$ and $1000$ variables. For each of these models, we generate $N=1000, 1500$, and $2000$ observations from a Gaussian VAR(1) process \eqref{eqn:model-sparse-lowrank}.
The $p \times p$ transition matrix $B$ with sparsity is generated in the following way. First,
the topology is generated from a directed random graph $G(p,\xi)$, where the edge from one node to
another node occurs independently with probability $\xi=10/p$. Then, the strength of the edges is generated
independently from a Gaussian distribution. This process is repeated until we obtain a transition matrix $B$ with
a desired spectral radius $\rho$. We compare TPR, FAR, EE, and computational time (denoted by $T$).

\begin{table}[]
\centering
\begin{tabular}{|c|c|c|c|c|c|}
\hline
p                 & N                 & method  & (TPR, FAR)(\%) & EE & T  \\ \hline
\multirow{6}{*}{800} & \multirow{2}{*}{1000} & FISTA   & (88.5, 14.4) & 0.22 & 11.9  \\ \cline{3-6}
                  &                   & FNSL   & (88.6, 13.9) & 0.22 & $\mathbf{6.1}$   \\ \cline{2-6}
                  & \multirow{2}{*}{1500} & FISTA   &  (90.8, 12.3)   & 0.17   &  14.7    \\ \cline{3-6}
                  &                   & FNSL   &   (90.7, 12.0) &  0.17  & $\mathbf{8.2}$     \\ \cline{2-6}
                  & \multirow{2}{*}{2000} & FISTA   &  (91.3, 11.0)  & 0.14 & 17.6  \\ \cline{3-6}
                  &                   & FNSL   &  (91.3, 11.3) & 0.14 & $\mathbf{10.8}$   \\ \hline
\multirow{6}{*}{900} & \multirow{2}{*}{1000} & FISTA   & (87.8, 15.0)  & 0.24   & 13.1  \\ \cline{3-6}
                  &                   & FNSL   &   (87.8, 14.7)  &  0.24  &  $\mathbf{7.2}$    \\ \cline{2-6}
                  & \multirow{2}{*}{1500} & FISTA   &  (89.1, 10.7)  & 0.19   &  16.3  \\ \cline{3-6}
                  &                   & FNSL   &   (89.1, 10.5) & 0.19 & $\mathbf{8.7}$  \\ \cline{2-6}
                  & \multirow{2}{*}{2000} & FISTA   &  (90.9, 12.0)  & 0.16  & 20.7   \\ \cline{3-6}
                  &                   & FNSL   & (90.9, 11.8)  &  0.16 & $\mathbf{13.5}$  \\ \hline
\multirow{6}{*}{1000} & \multirow{2}{*}{1000} & FISTA   &  (88.5, 15.2)        & 0.25   &  18.6   \\ \cline{3-6}
                  &                   & FNSL   &   (88.4, 14.8) &  0.25  & $\mathbf{11.2}$     \\ \cline{2-6}
                  & \multirow{2}{*}{1500} & FISTA   &  (90.3, 14.1)        &  0.20  &  21.5    \\ \cline{3-6}
                  &                   & FNSL   &   (90.2, 13.8) & 0.20  &  $\mathbf{12.8}$  \\ \cline{2-6}
                  & \multirow{2}{*}{2000} & FISTA   &  (91.2, 12.1)   &  0.16 & 24.7  \\ \cline{3-6}
                  &                   & FNSL   &    (91.2, 12.4)      & 0.16  &  $\mathbf{15.1}$   \\ \hline
\end{tabular}
\caption{Performance comparison of FNSL with a variant of FISTA on large-scale sparse network structure learning problem.}
\label{my-tabel0}
\end{table}

Table \ref{my-tabel0} shows the experimental results for sparse network structure with different network size $p$ and sample size $N$. It can be seen that the proposed algorithm performs similarly to FISTA in terms of TPR, FAR, and estimation error. To show the efficiency of the proposed algorithm, we also compare the computational time in seconds in terms of the convergence of the objective function value. Clearly, the proposed algorithm outperforms FISTA in efficiency, especially when the network and sample size become larger. This is mainly due to the relaxed line search scheme, as previously discussed. To further support our claim, we also show the graphs of the decreasing objective function value vs. CPU, see Figure \ref{fig:fun-value-sparse} when $p=1000$ and $N=2000$.

Table \ref{tbl:Computationcost} shows comparisons between a variant of FISTA and FNSL in terms of objective function value, CPU time in seconds, and the number of matrix products for a network of size $p=1000$ with sample size 1000, 1500 and 2000, respectively. For each data set, FISTA needs around 30 iterations to reach convergence and the total number of line searches is 3 for all iterations, while FNSL needs no more than 20 iterations and the total number of line search is no more than 4 for all iterations. This illustrates the computational savings of the proposed algorithm.

\begin{figure}[tbp]
    \centering
		\includegraphics[width=.8\linewidth]{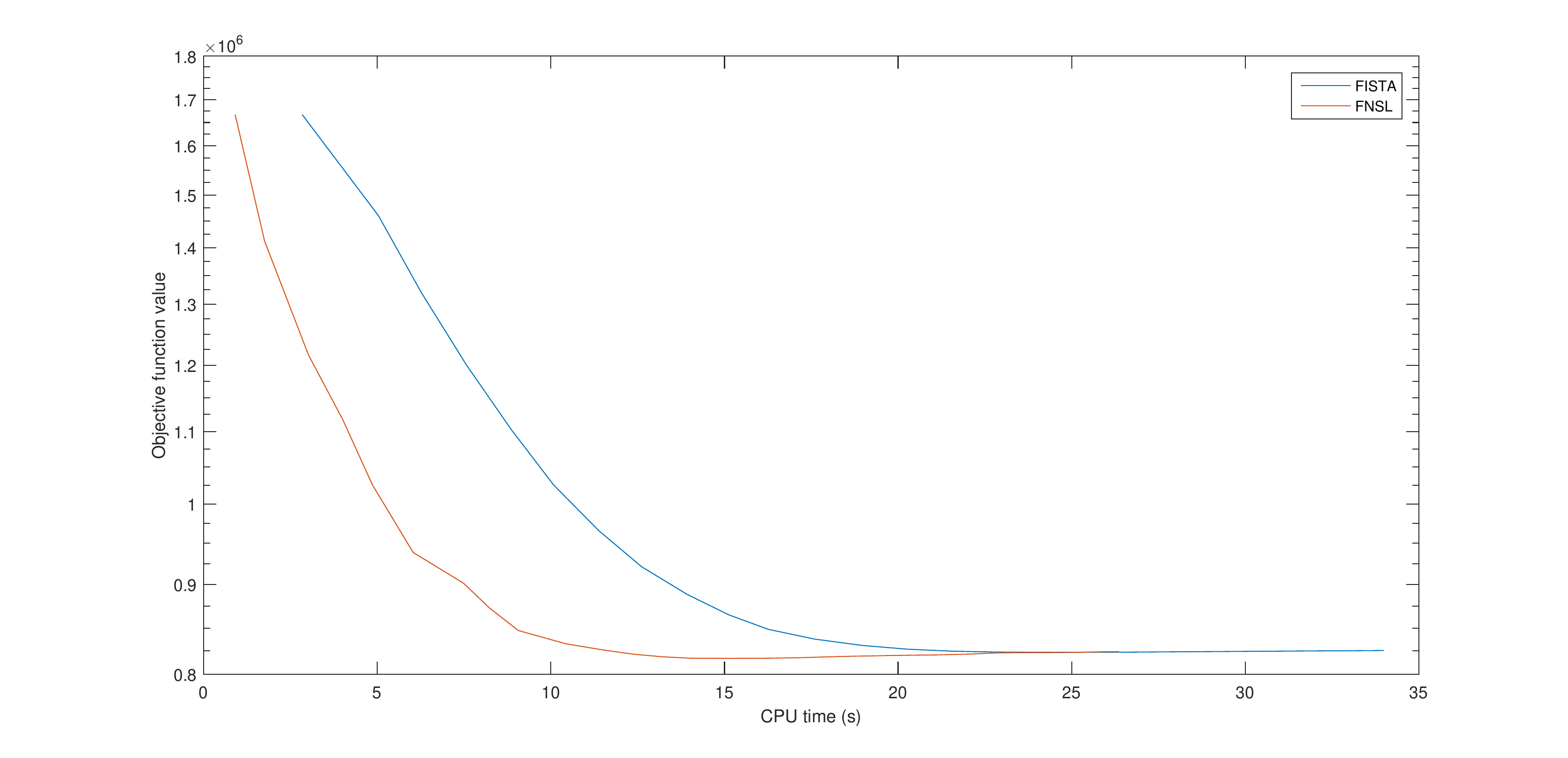}
	\caption{{Performance comparison of a variant of FISTA and the proposed algorithm}: the objective function values vs. CPU time for sparse network learning problem with $p=1000$ and $N=2000$.}
    \label{fig:fun-value-sparse}
\end{figure}


\begin{table}
\centering
 \begin{tabular}{c c c c}
 \hline
 {Algorithms} & {Objective value} & {CPU} & {AX}  \\ [0.5ex]
 \hline
  & $p = 1000, N = 1000$  &  &  \\
 \hline\hline
 FISTA & 4.140e+5 & 18.6 & 66  \\
 FNSL & 4.089e+5 & 11.2 & 44    \\
 \hline
 & $p = 1000, N = 1500$  &  &  \\
 \hline\hline
 FISTA & 6.137e+5 & 21.5 & 62  \\
 FNSL & 6.071e+5 & 12.8 & 38  \\
 \hline
 & $p = 1000, N = 2000$  &  &  \\
 \hline\hline
 FISTA & 8.239e+5 & 24.7 & 66  \\
 FNSL & 8.169e+5 & 15.1 & 41  \\
 \hline
\end{tabular}
\caption{{Comparison of objective function value,  CPU time in seconds, and the number
of matrix products (AX) for a variant of FISTA and FNSL on sparse network structure with different sample size}.}
\label{tbl:Computationcost}
\end{table}
\subsection{Network learning with a low-rank transition matrix}\label{subsec:lowrank}
To further show the efficiency of the proposed algorithms, we compare the performance of a variant of FISTA \cite{ji2009accelerated} and FNSL on estimating low-rank transition matrices.

We consider three different VAR(1) models with $p=200, \, 300$ and $400$ variables. For each of these models, we generate $N=400, 1200$, and $2000$ observations from a Gaussian VAR(1) process \eqref{eqn:model-sparse-lowrank}. The $p \times p$ low-rank transition matrix $B$ is generated with rank $\lfloor p/25 \rfloor+1$. Subsequently, we rescale the entries
of $B$ to ensure the spectral radius $\rho$ lies in $(0.45, 0.95)$. We compare the rank of the estimated transition matrix, denoted by $\hat{r}$, EE, and computational time $T$.

\begin{table}[]
\centering
\begin{tabular}{|c|c|c|c|c|c|}
\hline
p                 & N                 & method & $\hat{r}$ & EE & T \\ \hline
\multirow{6}{*}{200} & \multirow{2}{*}{400} & FISTA    & 8 & 0.80 & 4.9  \\ \cline{3-6}
                  &                   &  FNSL  & 8 & 0.80 & $\mathbf{3.4}$  \\ \cline{2-6}
                  & \multirow{2}{*}{1200} & FISTA   & 8 & 0.63 & 9.7 \\ \cline{3-6}
                  &                   &  FNSL  & 8 & 0.63 &  $\mathbf{6.9}$ \\ \cline{2-6}
                  & \multirow{2}{*}{2000} & FISTA    & 8 & 0.57 & 14.2 \\ \cline{3-6}
                  &                   &  FNSL   & 8 & 0.57 & $\mathbf{10.5}$ \\ \hline
\multirow{6}{*}{300} & \multirow{2}{*}{400} &   FISTA   & 12 & 0.84 &  7.8  \\ \cline{3-6}
                  &                   & FNSL   & 12 & 0.84 &  $\mathbf{5.7}$ \\ \cline{2-6}
                  & \multirow{2}{*}{1200} & FISTA    & 12 & 0.72 & 16.1  \\ \cline{3-6}
                  &                   & FNSL   & 12 & 0.72 & $\mathbf{11.7}$ \\ \cline{2-6}
                  & \multirow{2}{*}{2000} & FISTA    & 12 & 0.68 & 18.2  \\ \cline{3-6}
                  &                   & FNSL    & 12 & 0.68 & $\mathbf{13.8}$  \\ \hline
\multirow{6}{*}{400} & \multirow{2}{*}{400} &   FISTA   & 16 & 0.87 &  8.5  \\ \cline{3-6}
                  &                   & FNSL   & 16 & 0.87 &  $\mathbf{5.9}$ \\ \cline{2-6}
                  & \multirow{2}{*}{1200} & FISTA    & 16 & 0.82 & 20.5  \\ \cline{3-6}
                  &                   & FNSL   & 16 & 0.82 & $\mathbf{17.4}$ \\ \cline{2-6}
                  & \multirow{2}{*}{2000} & FISTA    & 16 & 0.75 &  31.4 \\ \cline{3-6}
                  &                   & FNSL    & 16 & 0.75 & $\mathbf{25.1}$  \\ \hline
\end{tabular}
\caption{{Performance comparison of a variant of FISTA and FNSL on estimation of low-rank transition matrices problems}. }
\label{table:my-labellr}
\end{table} 
Table \ref{table:my-labellr} shows the experimental results for low-rank network structure with different network and sample size. Both FISTA and the proposed algorithm achieve good recovery of the transition matrix $B$ with the correct rank and they have similar performance in terms of estimation error. Clearly, the proposed algorithm outperforms FISTA in efficiency for this case as well. The graphs of the decreasing objective function value vs. CPU are depicted in Figure \ref{fig:fun-value-lr}.
\begin{figure}[tbp]
    \centering
		\includegraphics[width=.8\linewidth]{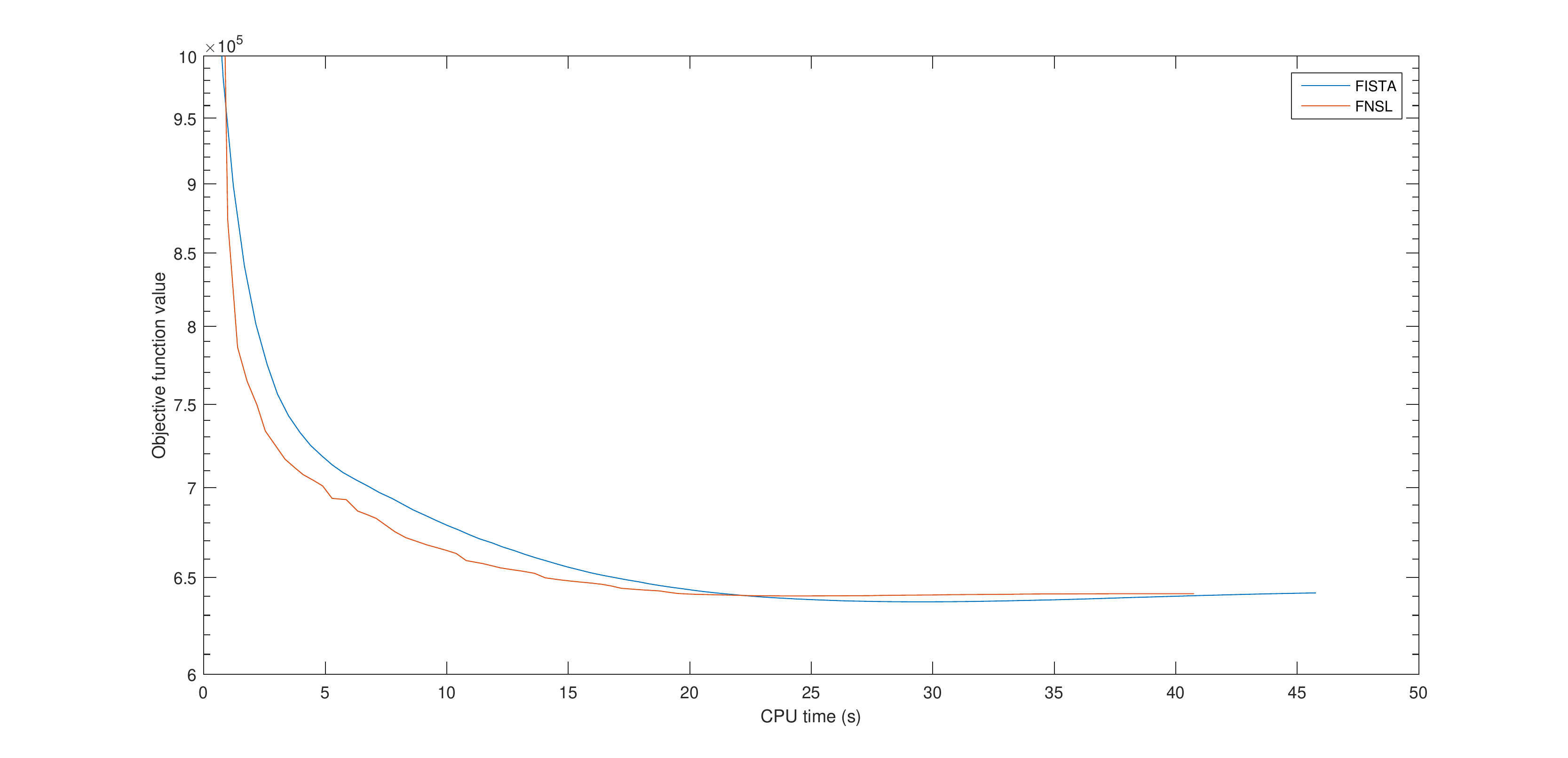}
	\caption{{Performance comparison of a variant of FISTA and the proposed algorithm}: the objective function values vs. CPU time for low-rank
transition matrix estimation problem with with $p=400$ and $N=2000$.}
    \label{fig:fun-value-lr}
\end{figure}

The first two experiments focused on computational efficiency of the proposed algorithms, while retaining good network
estimation properties. Next, we demonstrate their accuracy for learning structured sparse networks.

\subsection{Sparse plus low-rank network learning}\label{sec:splusl}
Next, we investigate estimation of sparse plus low-rank transition matrices and compare it to ordinary least square (OLS) and lasso estimates.

We consider three different VAR(1) models with $p=50, \, 75$ and $100$ variables. For each of these models, we generate $N=100$ and $200$ observations from the model defined in \eqref{eqn:model-sparse-lowrank} where $B$ can be decomposed into a low-rank matrix $L$ of rank $\lfloor p/25 \rfloor+1$ and a sparse matrix $S$ with $2-4\%$ non-zero entries. We rescale the entries of $B$ to ensure stability of the process (the spectral radius is set to $\rho(B)=0.7$). We compare the estimation and out-of-sample prediction errors. The number of out of samples is set to 10.

\begin{table}[t!]
\centering
\begin{tabular}{|c|c|c|}
\hline
$\alpha$ & (TPR, FAR)(\%)  \\ \hline
p/8 & (84.5, 17.4)  \\ \hline
p/4 & (82.4, 17.2)  \\ \hline
p/2 & (80.4, 17.5) \\ \hline
p   & (73.2, 17.1)   \\ \hline
2p  & (54.7, 17.0)   \\ \hline
4p  & (40.2, 17.8)   \\ \hline
8p  & (22.7, 17.4)   \\ \hline
\end{tabular}
\caption{True positive rate and false alarm rate of the L+S model on identifying the sparse component S with different $\alpha$.}
\label{my-label-alpha}
\end{table} 
First,  we study the influence of $\alpha$ in \eqref{eqn:opt-sparse-lowrank} on this learning problem
with $p=50$ and $N=200$. From Table \ref{my-label-alpha}, that a smaller $\alpha$ parameter leads to markedly improved identification of all the true nonzero entries in the sparse component, which consequently leads to better separating the sparse component $S$ from the low-rank component $L$.

\begin{table}[t!]
\centering
\begin{tabular}{|c|c|c|c|c|c|}
\hline
p                 & N                 & model  & (TPR, FAR)(\%) & EE & PE  \\ \hline
\multirow{6}{*}{50} & \multirow{3}{*}{100} & OLS  & (-, -) & 0.84 &  0.72    \\ \cline{3-6}
                  &                   & Lasso & (73.2, 30.0)  & 0.69  & 0.53    \\ \cline{3-6}
                  &                   & L+S  & ($\mathbf{76.3}$, $\mathbf{18.9}$) & $\mathbf{0.48}$  & $\mathbf{0.47}$  \\ \cline{2-6}
                  & \multirow{3}{*}{200} & OLS   & (-, -)  & 0.52  &   0.41   \\ \cline{3-6}
                 &                   & Lasso   & (77.3, 35.0) & 0.57 &  0.45    \\ \cline{3-6}
                  &                   & L+S  & ($\mathbf{80.4}$, $\mathbf{17.5}$)  &  $\mathbf{0.31}$ &  $\mathbf{0.36}$       \\ \hline
\multirow{6}{*}{75} & \multirow{3}{*}{100} & OLS & (-, -)   &  0.75   & 0.37    \\ \cline{3-6}
                 &                   & Lasso & (71.0, 24.7) & 0.75  &  0.37  \\ \cline{3-6}
                  &                   & L+S  & ($\mathbf{79.0}$, $\mathbf{18.0}$)   & $\mathbf{0.51}$  &  $\mathbf{0.29}$   \\ \cline{2-6}
                  & \multirow{3}{*}{200} & OLS  & (-, -) & 0.53    &  0.18  \\ \cline{3-6}
                  &                   & Lasso & (77.0, 28.6)  & 0.67 &  0.22 \\  \cline{3-6}
                  &                   & L+S  & ($\mathbf{83.8}$, $\mathbf{18.3}$)  &  $\mathbf{0.36}$    &    $\mathbf{0.16}$  \\ \hline
\multirow{6}{*}{100} & \multirow{3}{*}{100} & OLS & (-, -) & 3.7  &  4.0 \\ \cline{3-6}
                 &                   & Lasso & (57.3, 29.0) & 1.06   &  1.05   \\ \cline{3-6}
                  &                   & L+S  & ($\mathbf{52.3}$, $\mathbf{20.1}$) &  $\mathbf{0.92}$   & $\mathbf{1.0}$    \\ \cline{2-6}
                  & \multirow{3}{*}{200} & OLS  & (-, -)  & 2.07   &  1.73    \\ \cline{3-6}
                 &                   & Lasso & (59.4, 25.5) &  0.86 &  0.95     \\ \cline{3-6}
                  &                   & L+S & ($\mathbf{60.4}$, $\mathbf{20.5}$)   & $\mathbf{0.72}$  &  $\mathbf{0.90}$   \\ \hline
\end{tabular}
\caption{Performance comparison of L+S with OLS and Lasso.}
\label{my-tabelLS}
\end{table} 
The corresponding estimation errors are reported in Table \ref{my-tabelLS}. In all three settings, we find that the low-rank plus sparse VAR estimates outperform the estimates using ordinary least-squares (OLS) and lasso, as expected. We observe that as the ratio of $N/p$ increases, OLS may produce lower estimation error than lasso, even though the OLS model is not interpretable for this case.  {Also, we observe that the estimation errors of all three methods decrease with increasing sample sizes as expected and predicted by theory. Further, we illustrate how the squared Frobenius norm error in Proposition \ref{prop:main-result-low-rank} of the VAR model with low rank plus sparse transition matrix scales with the sample size $N$ and dimension $p$, when the rank $r$ of $B$ is fixed. The network size $p$ is set to $50, 100, 150$ and $200$, respectively, while the rank $r$ is fixed to be 2 for all $p$. Sparsity $s$ and $\varepsilon$ are defined similarly as above, while the sample size is set to $N\in (150, 5500)$. The squared Frobenius norm error of estimation given by $\|S-\hat{S}\|_{\text{F}}^2+\|L-\hat{L}\|_{\text{F}}^2$ is depicted in the left panel of Figure \ref{fig:ErrorBound-SpL}, which displays the errors for different values of $p$, plotted against the sample
size $N$. As predicted by our theoretical result, the error is larger for larger $p$. The right panel of Figure \ref{fig:ErrorBound-SpL} displays the error against the rescaled sample size $N/(s \, log(p)+rp)$. it can be seen that the corresponding error curves for
different values of $p$ align well, which is consistent with the estimation error obtained in Proposition \ref{prop:main-result-low-rank}}.

\begin{figure}[!t]
    \centering
		\includegraphics[width=.45\linewidth]{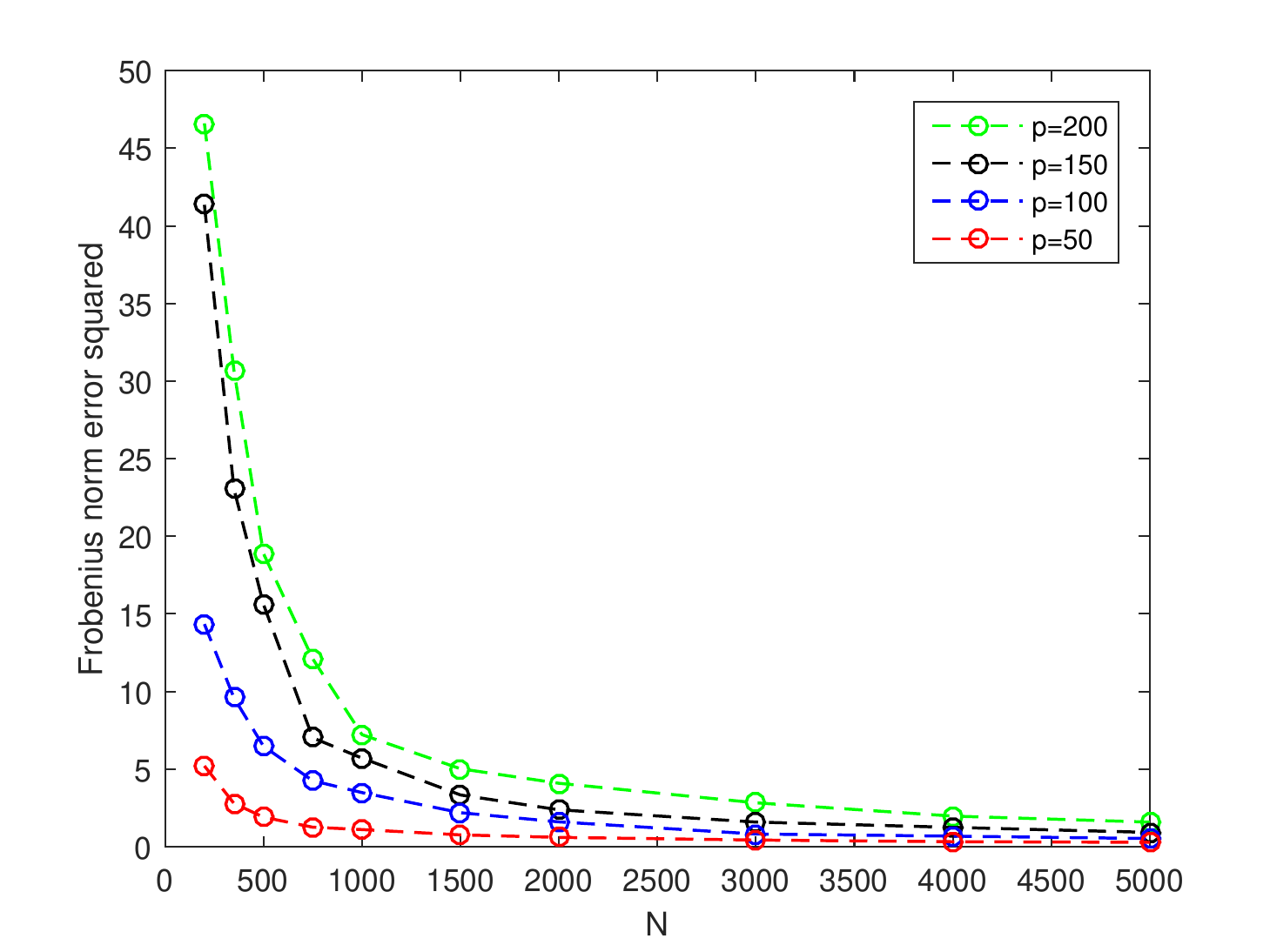}
		\includegraphics[width=.45\linewidth]{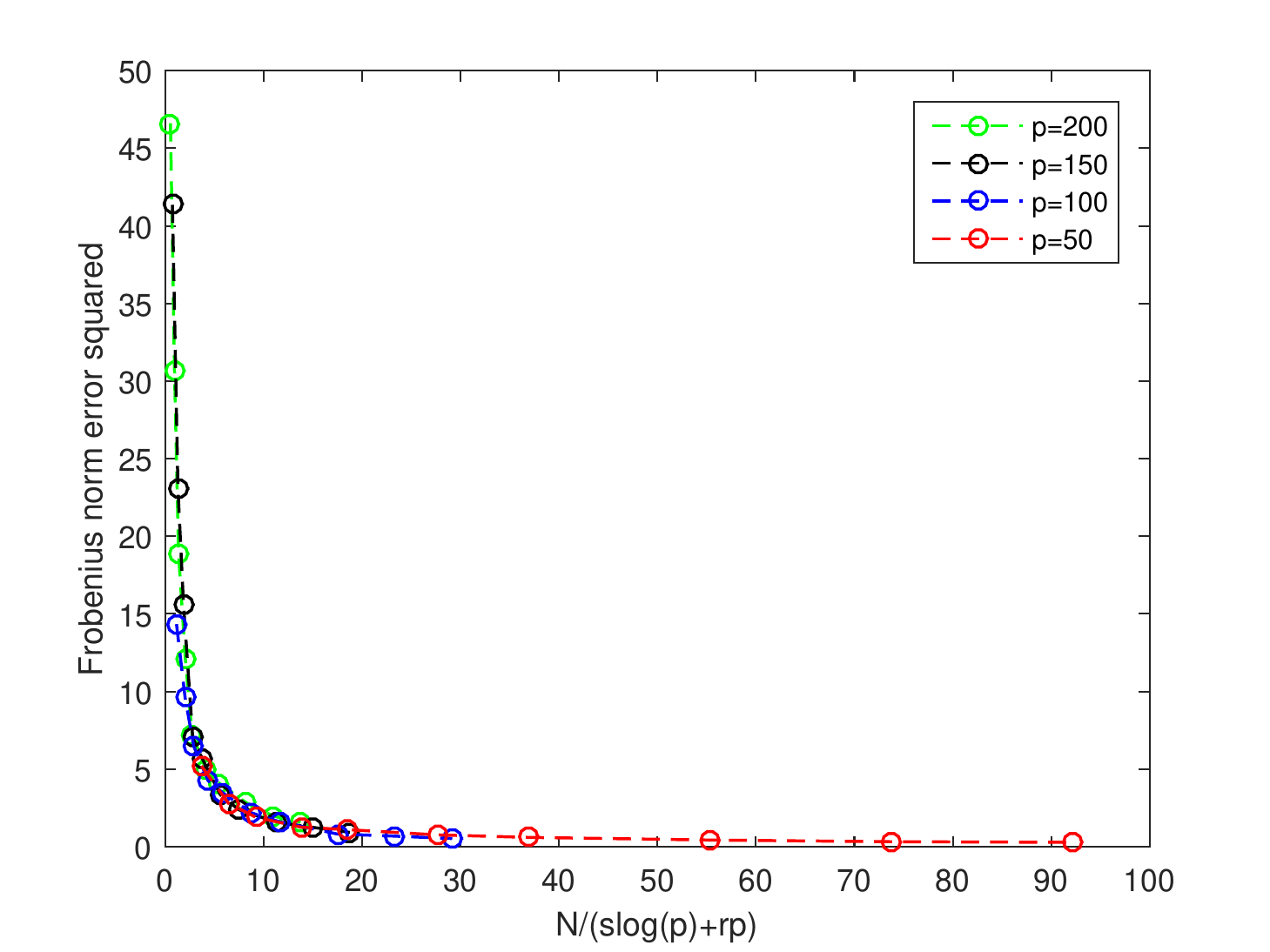}
	\caption{[left]: Estimation error of low rank plus sparse structure $\|S-\hat{S}\|_\text{F}^2 + \|L-\hat{L}\|_\text{F}^2$ with different network size $p$ and sample size $N$. [right]: Estimation error of low rank plus sparse structure $\|S-\hat{S}\|_\text{F}^2 + \|L-\hat{L}\|_\text{F}^2$ with rescaled sample size $N/(slog(p)+rp)$.}
    \label{fig:ErrorBound-SpL}
\end{figure}

In addition to its improved estimation and prediction performance, the low-rank plus
sparse modeling strategy aids in recovering the underlying Granger causal network after accounting
for the latent structure. In Figures \ref{fig:transition-matrix-lr-sparse}, we demonstrate this using a VAR(1) model with
$p = 50$ and $n = 200$. The top panel of the Figures \ref{fig:transition-matrix-lr-sparse} displays the true transition matrix $B$, its low-rank
component $L$ and the structure of its sparse component $S$. The bottom panel of the Figures \ref{fig:transition-matrix-lr-sparse} displays the
structure of the Granger causal networks estimated by the method of Lasso and the low-rank plus sparse
modeling strategy. As predicted by the theory, it can be that the lasso estimate of the Granger causal network selects many false positives due to its failure to account for the latent structure. On the other hand, the sparse component $S$ provides
an estimate exhibiting significantly fewer false positives entries.
\begin{figure}[!t]
      \centering
      {\includegraphics[width=\textwidth]{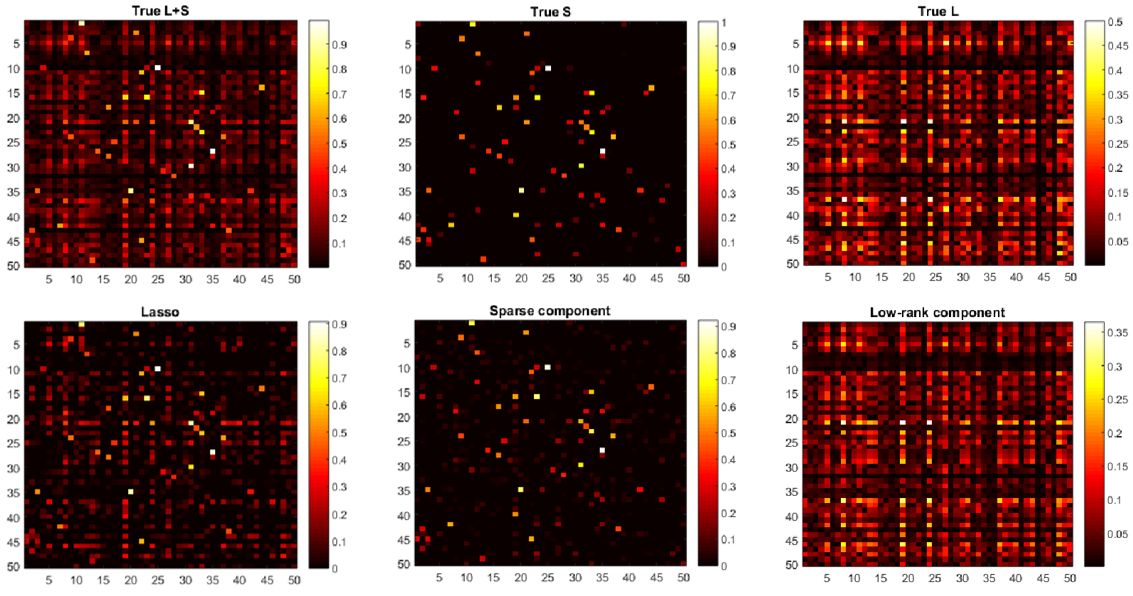}} 
\caption{Estimated Granger causal networks using lasso and low-rank plus sparse VAR estimates. The top panel displays the true transition matrix $B$, the structure of its sparse component $S$ and its low-rank component $L$.  The bottom panel displays the structure of the Granger causal networks estimated by lasso ($\hat{B}_{Lasso}$),  the low-rank plus sparse modeling strategy ($\hat{S}$) and the estimated low-rank component ($\hat{L}$).}
\label{fig:transition-matrix-lr-sparse}
\end{figure}

It is interesting to note that the estimation performance of the regularized estimates in low-rank plus sparse VAR models is worse than the performance of lasso in sparse VAR models of similar dimension \cite{basu2015regularized}, even for the same sample sizes. This is in line with the error bounds presented in Proposition \ref{prop:main-result-low-rank}. The estimation error in low-rank plus sparse models is of the order of $O(rp+s \log p)/N$, while the error  of lasso in sparse VAR models scales at a faster rate of $O(s \log p/N)$. Further note that a $s$-sparse VAR requires estimating $s$ parameters in $S$, while the presence of $r$ factors introduces an additional $rp$  parameters in the loading matrix $\Lambda$.

\subsection{Sparse plus group-sparse plus low-rank network learning}\label{sec:spluslplusg}

Finally, we conduct numerical experiments to assess the performance of
low-rank plus sparse plus group-sparse modeling in VAR analysis and compare it to the performance of sparse plus group-sparse and low-rank plus sparse estimates.

We consider three different VAR(1) models with $p=50, \, 100$ and $150$ variables. For each of these models, we generate $N=200$ and $300$ observations from the Gaussian VAR(1) process defined in \eqref{eqn:model-sparse-lowrank}, where $B$ can be decomposed into a low-rank matrix $L$ of rank $\lfloor p/25 \rfloor+1$, a sparse matrix $S$ with $2-4\%$ non-zero entries, and a group-sparse matrix $G$ with each column corresponding to a different group for a total of $p$ groups. We rescale the entries of $B$ to ensure stability of the process (the spectral radius is set to $\rho(B)=0.7$) and compare the estimation and out-of-sample prediction errors, with the number of out-samples set to 10.

\begin{table}[t!]
\centering
\begin{tabular}{|c|c|c|c|c|c|}
\hline
p                 & N                 & method  &(TPR, FAR)(\%) & EE & PE  \\ \hline
\multirow{6}{*}{50} & \multirow{3}{*}{200} & S+G  & (85.5, 34.1) & 0.46 &  0.53   \\ \cline{3-6}
                  &                   & L+S & (82.6, 26.4)  &  0.41 &  0.51  \\ \cline{3-6}
                  &                   & L+S+G  & ($\mathbf{83.3}$, $\mathbf{26.9}$)  &  $\mathbf{0.41}$ &  $\mathbf{0.51}$  \\ \cline{2-6}
                  & \multirow{3}{*}{300} & S+G  & (91.7, 47.0)  & 0.37 &   0.58   \\ \cline{3-6}
                  &                   & L+S  & (88.6, 24.4)  & 0.31  &  0.56  \\ \cline{3-6}
                  &                   & L+S+G  & ($\mathbf{90.6}$, $\mathbf{24.9}$)  & $\mathbf{0.30}$  & $\mathbf{0.56}$   \\ \hline
\multirow{6}{*}{100} & \multirow{3}{*}{200} & S+G  & (92.3, 49.9)  & 0.48  &    0.73   \\ \cline{3-6}
                  &                   & L+S & (84.3, 28.4) & 0.44   &  0.72    \\ \cline{3-6}
                  &                   & L+S+G & ($\mathbf{85.3}$, $\mathbf{27.3}$)  & $\mathbf{0.44}$   &  $\mathbf{ 0.72}$  \\ \cline{2-6}
                  & \multirow{3}{*}{300} & S+G & (94.8, 49.0)   & 0.44  &   0.72    \\ \cline{3-6}
                  &                   & L+S & (89.6, 25.4) & 0.37   &   0.70   \\ \cline{3-6}
                  &                   & L+S+G  & ($\mathbf{90.0}$, $\mathbf{25.1}$) &  $\mathbf{0.36}$  & $\mathbf{ 0.70}$  \\ \hline
\multirow{6}{*}{150} & \multirow{3}{*}{200} & S+G  & (92.0, 50.5)  & 0.64 &   0.73   \\ \cline{3-6}
                  &                   & L+S & (83.3, 28.8) & 0.57  &   0.71  \\ \cline{3-6}
                  &                   & L+S+G & ($\mathbf{84.0}$, $\mathbf{28.1}$)  & $\mathbf{0.55}$  & $\mathbf{0.70}$   \\ \cline{2-6}
                  & \multirow{3}{*}{300} & S+G & (93.6, 50.2)   & 0.55 &  0.71    \\ \cline{3-6}
                  &                   & L+S &  (85.6, 27.4) &  0.46 &   0.68  \\ \cline{3-6}
                  &                   & L+S+G & ($\mathbf{86.4}$, $\mathbf{27.6}$)   & $\mathbf{0.46}$  &  $\mathbf{0.68}$  \\ \hline
\end{tabular}
\caption{Performance comparison of L+S+G with S+G and L+S.}
\label{my-tabelLSG}
\end{table} 
The corresponding estimation errors are reported in Table \ref{my-tabelLSG}. In those three settings, we find that the low-rank plus sparse plus group-sparse VAR estimates performs only slightly better than low-rank plus sparse VAR estimates. One of the reasons lie in that the ability of the identification will degrade as more structures are involved. The other one is that multiple-times shrinkage for the  multiple structures lead to severe bias estimation. Even though the group structures can be recovered completely, some non-zero elements in sparse component vanished. An ad-hoc way to improve the performance for this case is to combine these two methods together. However,  both methods outperform the estimates using sparse plus group-sparse VAR. We also observe that, as the ratio of $N/p$ increases,  the estimation errors of all three methods decrease with increasing sample sizes as expected and predicted by theory.

\subsection{Structured network learning of asset pricing data}\label{sec:splusl_real_data}
Finally, we employ the proposed framework to learn Granger causal networks of asset pricing data obtained from the University of Chicago's Center for Research in Security Prices (CRSP) and
retrieved from the Wharton Research Data Service (WRDS). Specifically, we examine the network structure of realized volatilities of financial institutions representing banks (BA), primary broker/dealers (PB) and insurance companies (INS).
The analysis was performed across the following time periods: September 2002 - August 2005, September 2006 - August 2008 and September 2010 - August 2012 that correspond to instances before the financial crisis of 2008 (pre-crisis period), the build-up and apex of the crisis and the post-crisis period, respectively. For each period, we collected data on 75 firms with 25 companies in each of the three categories - BA, PB and INS based on the size of the average market capitalization of each firm, but dropped a few due to duplicate/missing observations.
The final form of the variables used are based on the $\log$ transformation of the difference between the highest and lowest stock price during a day that acts as a proxy for realized volatility and subsequently detrending it.

To select the tuning parameters, we employ the following forward cross-validation procedure: (1) We use a time window of length $W$, the available number of time points in the data. Then, starting from time $t$, we use the most recent $W$ observations to estimate $B$ and denote the transition matrix estimate by $\hat{B}_t$. We use the next $W'$ observations (right after $W$ observations) to validate $\hat{B}_t$. (2) We select the optimal tuning parameters
$(\lambda_N^{'}, \mu_N^{'})$ so that
$$
(\lambda_N^{'}, \mu_N^{'}) = \arg \min \left\{\frac{1}{\lfloor \frac{m-500}{25} \rfloor}\sum_{t=500+25*i}^{m}Err(\hat{B}_t)\right\}
$$
where $Err(\hat{B}_t)=\|\mathcal{Y}_{W'}-\mathcal{X}_{W'}\hat{B}_t\|_F^2$ and $i=0, ...$

In our analysis, we set $W=500$ and $W'=50$. Further, to separate the sparse component from the low-rank component as much as possible, we set $\alpha = p/10$. The learned Granger causal network structures (the sparse component $\hat{S}$) estimated by a sparse plus low-rank model are depicted in Figure \ref{fig:transition-matrix-splusl-graph}. It can be seen that even in the presence of a low-rank component, the sparse component exhibits a certain density (about 5\% in
the pre-crisis period, rising to 10\% during the crisis and dropping down to about 6\% in the post-crisis period). This increased connectivity during the crisis period has been observed in Granger causal networks for log-returns
as well \cite{basu2017system}. In the Supplement, we also provide the Granger causal network structures estimated by only assuming sparsity of the transition matrix $B$ (see Supplement Figure 5). A similar increased connectivity pattern
is observed during the crisis period. Also note that after accounting for the low-rank component, the estimated sparse component is significantly more sparse than that estimated by a lasso approach, thus enabling us to better examine
specific firms that are key drivers in the volatility network.

\begin{figure}[!t]
\centering
\begin{subfigure}{.33\textwidth}
  \centering
  \includegraphics[page=2,width=\linewidth]{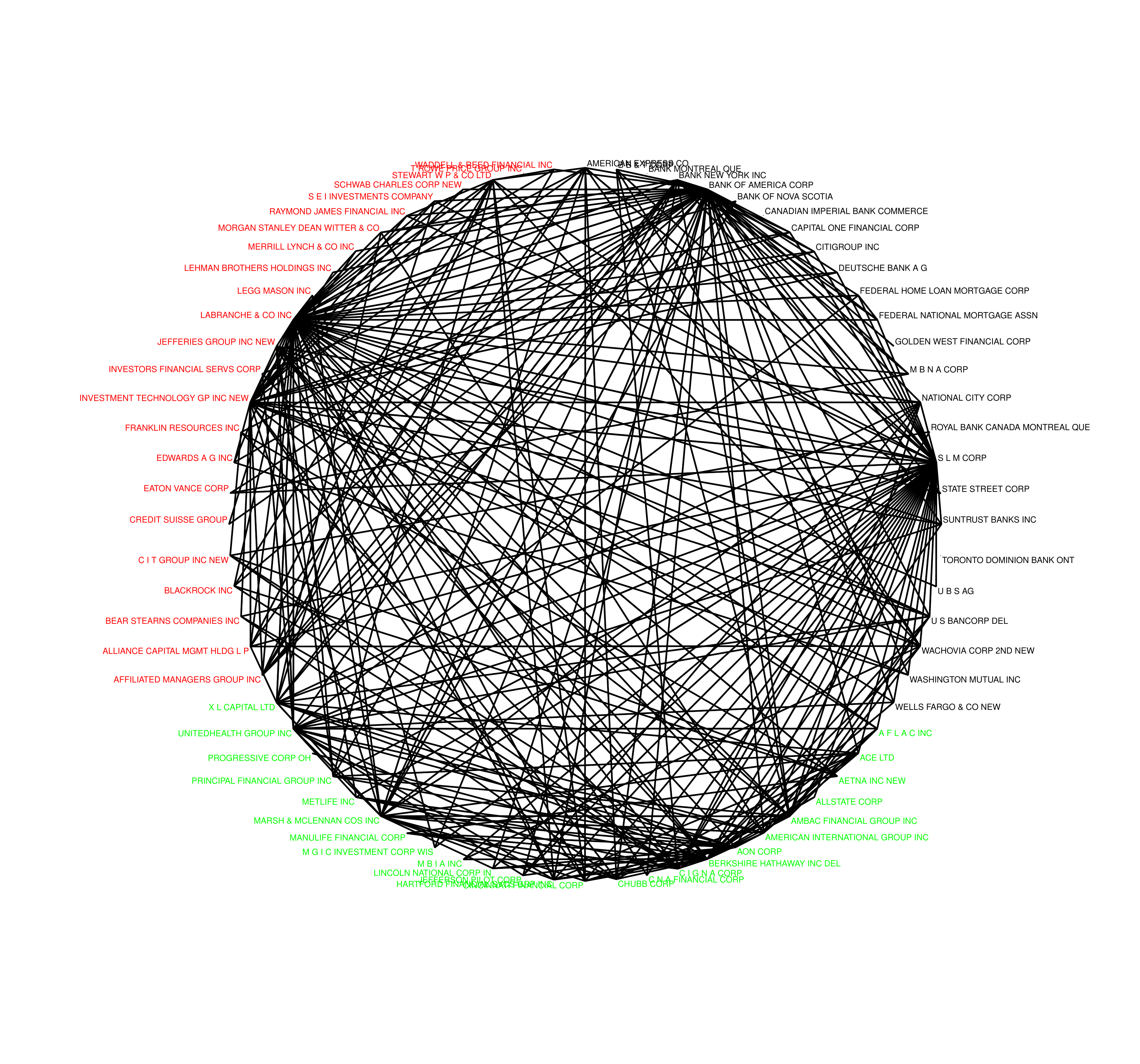}
\end{subfigure}%
\begin{subfigure}{.33\textwidth}
  \centering
  \includegraphics[page=2,width=\linewidth]{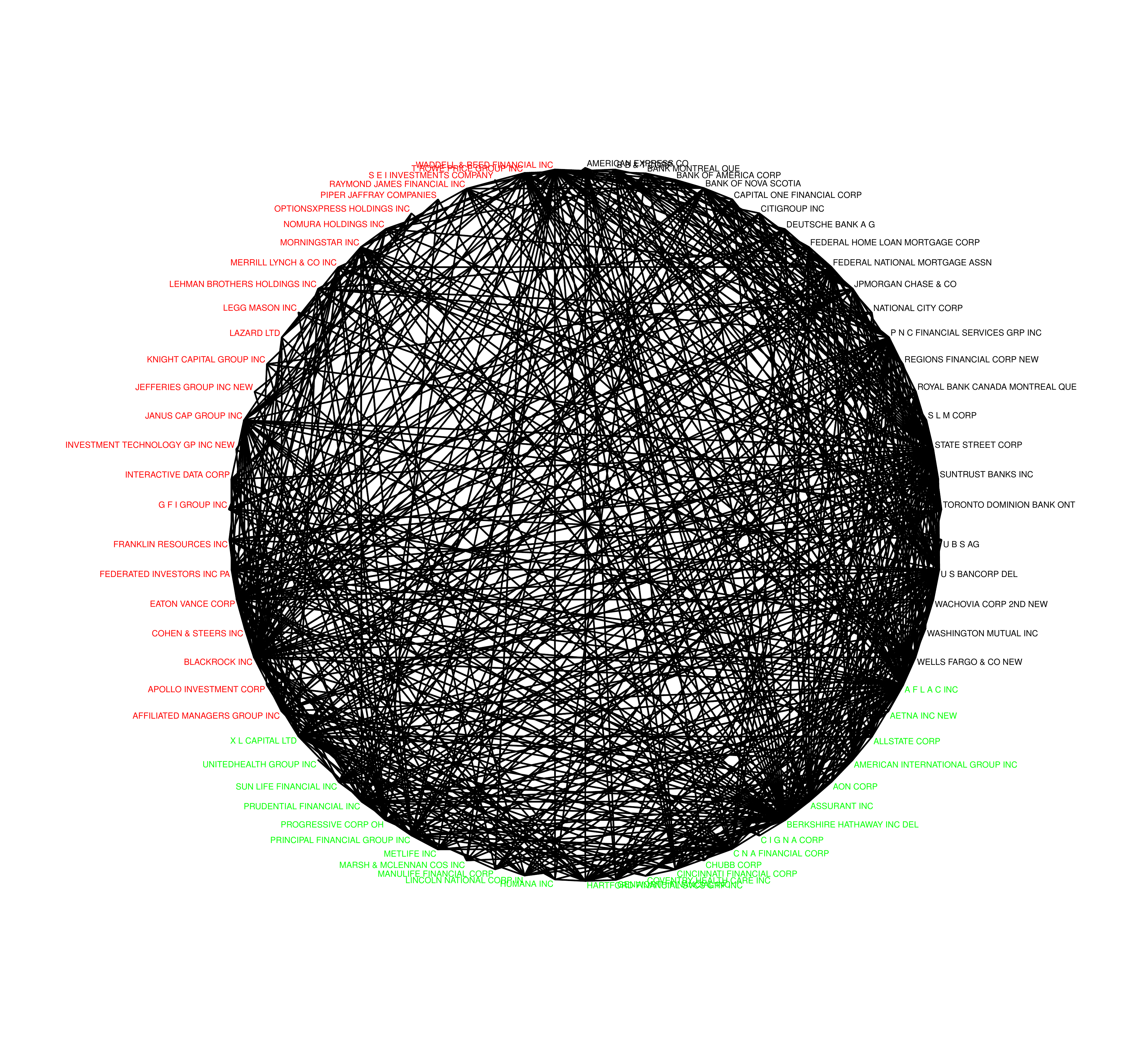}
\end{subfigure}
\begin{subfigure}{.33\textwidth}
  \centering
  \includegraphics[page=2,width=\linewidth]{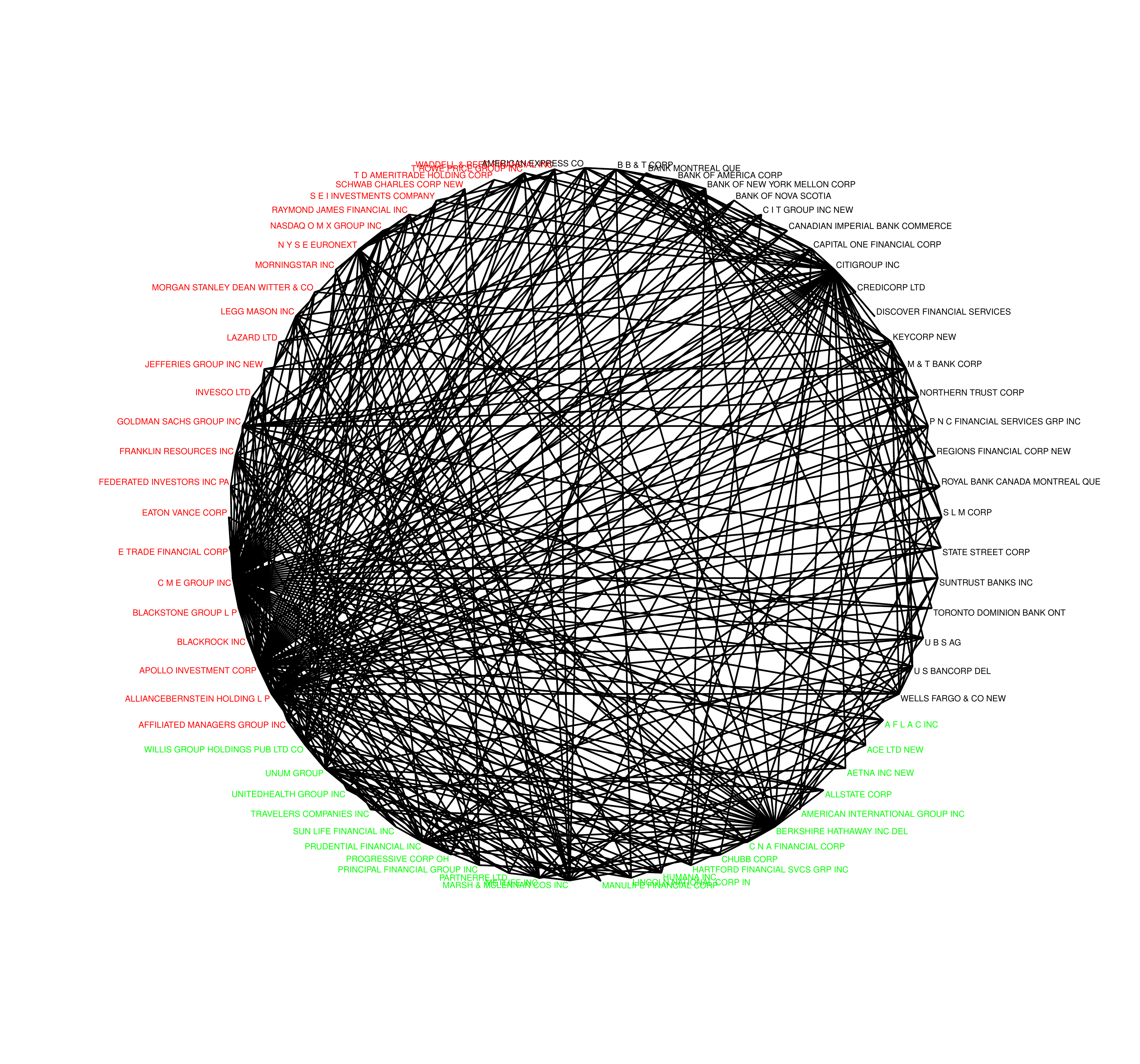}
\end{subfigure}
\caption{Estimated Granger causal networks based on L+S estimates. The left panel depicts the structure of the Granger causal network of the estimated sparse component $\hat{S}$ in the pre-crisis period and has 298 edges, the
middle during the crisis and has 592 edges, while the right panel in the post-crisis periods and has 345 edges.}
\label{fig:transition-matrix-splusl-graph}
\end{figure}

\section{Discussion}

Our modeling and technical developments were based on a VAR(1) model. However, it can be generalized to VAR(d) models in different ways, depending on the context of the application. One possible formulation where the low rank component stays the same across lags can be expressed as
\begin{equation*}
    X_t = \sum_{\ell=1}^d (L+S_\ell) X_{t-\ell} + v_t.
\end{equation*}
This model can be posed as a 1-order (L+S) model on the concatenated process \\
$ \tilde{X}_t = \left[X_t^\top, X_{t-1}^\top, \ldots, X_{t-d+1}^\top \right]^\top$ using the standard transformation \cite{lutkepohl2005new}:
\begin{eqnarray*}
    \underbrace{\left[
    \begin{array}{c}
    X_t \\ X_{t-1} \\ \vdots \\ X_{t-d}
    \end{array}
    \right]
    }_{\tilde{X}_t}
    =
    \underbrace{\left[
    \begin{array}{cccc}
    L & L & \ldots & L \\
    0 & 0 & \ldots & 0 \\
    \vdots & \vdots & \ddots & \vdots \\
    0 & 0 & \ldots & 0
    \end{array}
    \right]
    }_{\tilde{L}}
    \underbrace{\left[
    \begin{array}{c}
    X_{t-1} \\ X_{t-2} \\ \vdots \\ X_{t-d-1}
    \end{array}
    \right]
    }_{\tilde{X}_{t-1}}
    +
    \underbrace{\left[
    \begin{array}{cccc}
    S_1 & S_2 & \ldots & S_d \\
    I & 0 & \ldots & 0 \\
    \vdots & \vdots & \ddots & \vdots \\
    0 & \ldots & I & 0
    \end{array}
    \right]
    }_{\tilde{S}}
    \underbrace{\left[
    \begin{array}{c}
    X_{t-1} \\ X_{t-2} \\ \vdots \\ X_{t-d-1}
    \end{array}
    \right]
    }_{\tilde{X}_{t-1}}
    +
    \underbrace{\left[
    \begin{array}{c}
    v_t  \\
    0 \\
    \vdots \\
    0
    \end{array}
    \right]
    }_{\tilde{v}_t}.
\end{eqnarray*}

Since the matrices $\tilde{L}$ and $\tilde{S}$ are respectively low-rank and sparse, our proposed algorithm can be used and the error bounds will be applicable. Further, to maintain the special structure of these new matrices, additional constraints can be imposed to the posited objective function or the final estimates can be projected to this space.

Finally, motivated by \cite{zorzi2016ar, zorzi2017sparse}, it would be of interest to consider analogous developments in the frequency domain and address identifiability issues, as well as establish finite sample error bounds.

\section*{Acknowledgment}

This work was supported by a UF Informatics Institute Fellowship to XL, NSF grant DMS-1812128 to SB,
and NSF grants IIS 1632730, CCF 1540093, DMS 1545277 to GM.

\appendix

\section{Estimation Error Bounds}
The proof of Proposition \ref{CLplusS} and part (a) of Proposition \ref{CSplusGS} can be easily obtained by the following proof for
part (b) of Corollary \ref{CSplusGS}.

\begin{proof}[Proof of part (b) of Corollary \ref{CSplusGS}]
By the optimality of $(\hat{L}, \hat{S}, \hat{G})$ and the feasibility of $(L^*, S^*, G^*)$, we have
\begin{equation}\label{ee1}
\begin{array}{lr}
\frac{1}{2}\|\mathcal{Y}-\mathcal{X}(\hat{L}+\hat{S}+\hat{G})\|_{F}^2+\lambda_N\|\hat{L}\|_{\ast}+\mu_N\|\hat{S}\|_1\\+\nu_N\|\hat{G}\|_{2,1}
\leq \frac{1}{2}\|\mathcal{Y}-\mathcal{X}(L^*+S^*+G^*)\|_{F}^2\\+\lambda_1\|L^*\|_{\ast}+\mu_N\|S^*\|_1+\nu_N\|G^*\|_{2,1}
\end{array}
\end{equation}
By setting $\hat{\Delta}^{L}=\hat{L}-L^*$, $\hat{\Delta}^{S}=\hat{S}-S^*$, and $\hat{\Delta}^{G}=\hat{G}-G^*$ and combining with $\mathcal{Y} = \mathcal{X}(L^*+S^*+G^*)+E$, we have
$$
\begin{array}{lr}
\frac{1}{2}\|\mathcal{X}(\hat{\Delta}^{L}+\hat{\Delta}^{S}+\hat{\Delta}^{G})\|_{F}^2 \leq \langle \hat{\Delta}^{L}+\hat{\Delta}^{S}+\hat{\Delta}^{G}, \mathcal{X}^{'}E\rangle +\lambda_N\|L^*\|_{\ast}+\mu_N\|S^*\|_1+\nu_N\|G^*\|_{2,1}\\
-\lambda_N\|L+\hat{\Delta}^{L}\|_{\ast}-\mu_N\|S^*+\hat{\Delta}^{S}\|_1-\nu_N\|G^*+\hat{\Delta}^{G}\|_{2,1}
\end{array}
$$
By Lemma 1 in \cite{agarwal2012} and lemma 2.3 in \cite{recht2010guaranteed}, we obtain
\begin{equation}\label{ee2}
\begin{array}{lr}
\frac{1}{2}\|\mathcal{X}(\hat{\Delta}^{L}+\hat{\Delta}^{S}+\hat{\Delta}^{G})\|_{F}^2 \leq \langle \hat{\Delta}^{L}+\hat{\Delta}^{S}+\hat{\Delta}^{G},  \mathcal{X}^{'}E\rangle+\lambda_N(\|\hat{\Delta}^{L}_{A}\|_{\ast}-\|\hat{\Delta}^{L}_{B}\|_{\ast})\\+2\lambda_N\sum_{j=r+1}^{d}\sigma_{j}(L^*)+
\mu_N(\|\hat{\Delta}^{S}_{M}\|_1-\|\hat{\Delta}^{S}_{M^{\bot}}\|_1)+2\mu_N\|S^*_{M^{\bot}}\|_1
\\+\nu_N(\|\hat{\Delta}^{G}_{N}\|_{2,1}-\|\hat{\Delta}^{G}_{N^{\bot}}\|_{2,1})+2\nu_N\|G^*_{N^{\bot}}\|_{2,1}
\end{array}
\end{equation}
where the matrices $(A, B) \in \left\{(\bar{A},\bar{B}): \bar{A}\bar{B}'=0 ~~ \& ~~ \bar{A}'\bar{B}=0 \right\}$, $(M, M^{\bot})$ and $(N, N^{\bot})$ denote an arbitrary subspace pair for which $\|S\|_1$ and $\|G\|_{2,1}$ are decomposable, respectively.
Since
\begin{equation}\label{ee3}
\begin{array}{lr}
\langle \hat{\Delta}^{L}+\hat{\Delta}^{S}+\hat{\Delta}^{G}, \mathcal{X}^{'}E\rangle\leq \|\hat{\Delta}^{L}\|_{\ast}\|\mathcal{X}^{'}E\|_{2}+\|\hat{\Delta}^{S}\|_1\|\mathcal{X}^{'}E\|_{\max}
+\|\hat{\Delta}^{G}\|_{2,1}\|\mathcal{X}^{'}E\|_{2,\max} \\ \leq  (\|\hat{\Delta}^{L}_{A}\|_{\ast}+\|\hat{\Delta}^{L}_{B}\|_{\ast})\|\mathcal{X}^{'}E\|_{2}+
(\|\hat{\Delta}^{S}_{M}\|_1+\|\hat{\Delta}^{S}_{M^{\bot}}\|_1) \|\mathcal{X}^{'}E\|_{\max}
+(\|\hat{\Delta}^{G}_{N}\|_{2,1}\\+\|\hat{\Delta}^{G}_{N^{\bot}}\|_{2,1})\|\mathcal{X}^{'}E\|_{2,\max}
\end{array}
\end{equation}
Substituting \eqref{ee3} into \eqref{ee2} and recalling conditions for $\lambda_N$, $\mu_N$ and $\nu_N$, we have
\begin{equation}\label{ee30}
\begin{array}{lr}
\frac{1}{2}\|\mathcal{X}(\hat{\Delta}^{L}+\hat{\Delta}^{S}+\hat{\Delta}^{G})\|_{F}^2 \leq \frac{3}{2}\lambda_N\|\hat{\Delta}^{L}_{A}\|_{\ast}+\frac{3}{2}\mu_N\|\hat{\Delta}^{S}_{M}\|_1+
+\frac{3}{2}\nu_N\|\hat{\Delta}^{G}_{N}\|_{2,1}+2\lambda_N\sum_{j=r+1}^{d} \sigma_{j}(L^*)
\\+2\mu_N\|S^*_{M^{\bot}}\|_1+2\nu_N\|G^*_{N^{\bot}}\|_{2,1}
\end{array}
\end{equation}
By the RSC condition, the constraints on $L$ and $G$ in \eqref{eqn:opt-sparse-lowrank}, and the definition of $\mu_N$ and $\nu_N$, we have
$$
\begin{array}{lr}
\frac{1}{2}\|\mathcal{X}(\hat{\Delta}^{L}+\hat{\Delta}^{S}+\hat{\Delta}^{G})\|_{F}^2 \geq
\frac{\zeta}{2}\|\hat{\Delta}^{L}+\hat{\Delta}^{S}+\hat{\Delta}^{G}\|_{F}^2 \\
\geq \frac{\zeta}{2}\|\hat{\Delta}^{L}\|_{F}^2+\frac{\zeta}{2}\|\hat{\Delta}^{S}\|_{F}^2 +\frac{\zeta}{2}\|\hat{\Delta}^{G}\|_{F}^2 -\zeta|\langle \hat{\Delta}^{L}, \hat{\Delta}^{S} \rangle|\\ ~~~~-\zeta|\langle \hat{\Delta}^{L}, \hat{\Delta}^{G} \rangle|-\zeta|\langle \hat{\Delta}^{G}, \hat{\Delta}^{S} \rangle| \\
\geq \frac{\zeta}{2}\|\hat{\Delta}^{L}\|_{F}^2+\frac{\zeta}{2}\|\hat{\Delta}^{S}\|_{F}^2 +\frac{\zeta}{2}\|\hat{\Delta}^{G}\|_{F}^2-\zeta\|\hat{\Delta}^{L}\|_{\max}
\|\hat{\Delta}^{S}\|_{1}\\-\zeta\|\hat{\Delta}^{L}\|_{2,\max}\|\hat{\Delta}^{G}\|_{2,1}
~~~~ -\zeta\|\hat{\Delta}^{G}\|_{\max}\|\hat{\Delta}^{S}\|_{1}\\
\geq \frac{\zeta}{2}\|\hat{\Delta}^{L}\|_{F}^2+\frac{\zeta}{2}\|\hat{\Delta}^{S}\|_{F}^2 +\frac{\zeta}{2}\|\hat{\Delta}^{G}\|_{F}^2-\frac{\mu_N}{2}\|\hat{\Delta}^{S}\|_{1}\\ ~~~~-\frac{\nu_N}{2}\|\hat{\Delta}^{G}\|_{2,1}-
\frac{\mu_N}{2}\|\hat{\Delta}^{S}\|_{1}
\end{array}
$$
Inserting the above inequality into \eqref{ee30}, we have
$$
\begin{array}{lr}
\frac{\zeta}{2}(\|\hat{\Delta}^{L}\|_{F}^2+\|\hat{\Delta}^{S}\|_{F}^2 +\|\hat{\Delta}^{G}\|_{F}^2) \leq \frac{3}{2}\lambda_N\|\hat{\Delta}^{L}_{A}\|_{\ast}\\+\frac{3}{2}\mu_N\|\hat{\Delta}^{S}_{M}\|_1
+\frac{3}{2}\nu_N\|\hat{\Delta}^{G}_{N}\|_{2,1}+\mu_N\|\hat{\Delta}^{S}\|_{1}
+\frac{\nu_N}{2}\|\hat{\Delta}^{G}\|_{2,1}\\
+2\lambda_N\sum_{j=r+1}^{d}\sigma_{j}(L^*)+2\lambda_N\|S^*_{M^{\bot}}\|_1+2\mu_N\|G^*_{N^{\bot}}\|_{2,1}
\end{array}
$$
By the compatibility constant in \cite{agarwal2012},  we have
$$
\begin{array}{lr}
\frac{\zeta}{2}(\|\hat{\Delta}^{L}\|_{\text{F}}^2+\|\hat{\Delta}^{S}\|_{\text{F}}^2 +\|\hat{\Delta}^{G}\|_{\text{F}}^2) \leq (\frac{3}{2}\lambda_N\sqrt{2r})\|\hat{\Delta}^{L}\|_{\text{F}}+(\frac{5}{2}\mu_N)\sqrt{s}\|\hat{\Delta}^{S}\|_{\text{F}}
+2\nu_N\sqrt{g}\|\hat{\Delta}^{G}\|_{\text{F}}
\\+2\lambda_N\sum_{j=r+1}^{d}\sigma_{j}(L^*)+2\mu_N\|S^*_{M^{\bot}}\|_1+2\nu_N\|G^*_{N^{\bot}}\|_{2,1}
\end{array}
$$
By our assumptions, we have
$$
\begin{array}{lr}
\frac{\zeta}{4}(\|\hat{\Delta}^{L}\|_{F}^2+\|\hat{\Delta}^{S}\|_{F}^2 +\|\hat{\Delta}^{G}\|_{F}^2)
\leq\sqrt{(\frac{3}{2}\lambda_1\sqrt{2r})^2+(\frac{5}{2}\lambda_2\sqrt{s})^2+(2\lambda_3\sqrt{g})^2}
\sqrt{\|\hat{\Delta}^{L}\|_{F}^2+\|\hat{\Delta}^{S}\|_{F}^2+\|\hat{\Delta}^{G}\|_{F}^2}
\end{array}
$$
Combining with the inequality  $\|\hat{\Delta}^{L}\|_{F}^2+\|\hat{\Delta}^{S}+\hat{\Delta}^{G}\|_{F}^2
\leq 2(\|\hat{\Delta}^{L}\|_{F}^2+\|\hat{\Delta}^{S}\|_{F}^2 +\|\hat{\Delta}^{G}\|_{F}^2)$, we conclude part (b) of Corollary \ref{CSplusGS}.
\end{proof}

\section{Deviation Bounds}
\begin{proof}[Proof of Proposition \ref{prop:conc-sparse-lowrank}]
\begin{enumerate}
\item We want to find upper bounds on $\| \mathcal{X}'E/N\|_{\max}$, $\| \mathcal{X}'E/N\|$ and $\|\mathcal{X}'E/N \|_{2,\max}$ that hold with high probability. Note that such an upper bound for $\|\mathcal{X}'E/N\|_{\max}$ has already been derived in \cite{basu2015regularized}. Here we adopt a different technique that takes a unified approach to provide upper bounds on both  quantities. To this end, note that the two norms have the following representations
\begin{equation*}
\begin{array}{lr}
\frac{1}{N}\|\mathcal{X}'E\| = \sup_{u, v \in \mathbb{S}^{p-1}} \frac{1}{N} u'\mathcal{X}'Ev , \\ 
\frac{1}{N}\|\mathcal{X}'E\|_{\max} = \sup_{u, v \in \{e_1, \ldots, e_p\}} \frac{1}{N} u'\mathcal{X}'Ev
\end{array}
\end{equation*}
For any given $u,v \in \mathbb{S}^{p-1}$, we first provide a bound on $u'(\mathcal{X}'E/N)v$.

Using Proposition 2.3 of \cite{basu2015regularized}, we obtain
\begin{equation}\label{eqn:single-dev-bound}
\mathbb{P}\left[|u'(\mathcal{X}'E/N)v| > 2\pi \eta \phi(B, \Sigma_{\epsilon}) \right] \le 6 \exp\left[-cN \min \{\eta, \eta^2 \} \right]
\end{equation}
for any $u,v \in \mathbb{S}^{p-1}$ and any $\eta > 0$.

To derive the deviation bound on $\|\mathcal{X}'E/N \|_{\max}$, we simply take a union bound over the $p^2$ possible choices of $u, v \in \{e_1, e_2, \ldots, e_p \}$. This leads to
\begin{equation*}
\begin{array}{lr}
\mathbb{P}\left[\|\mathcal{X}'E/N \|_{\max} > 2\pi \eta \phi(B, \Sigma_{\epsilon}) \right]  \le 6 \exp\left[-cN \min \{\eta, \eta^2 \} +2 \log p\right]
\end{array}
\end{equation*}
Since $N \succsim p$, we can set $\eta = \sqrt{(2+c_1) \log p / cN}$ so that $\eta < 1$ (i.e., $\eta^2 < \eta$) will be satisfied for large enough $N$. This implies that
\begin{equation*}
\mathbb{P}\left[\|\mathcal{X}'E/N \|_{\max} > c_0 \phi(B, \Sigma_{\epsilon}) \right] \le c_1 \exp \left[-c_2 \log p\right]
\end{equation*}
for some universal constants $c_i > 0$.

{Similarly, for any group $G_i$ of size $m_i$, we have
\begin{equation}\label{eqn:single-grp-dev-bound}
\begin{array}{lr}
\mathbb{P}\left[ \left\|vec(\mathcal{X}_r'E_s/N, ~~ (r,s) \in G_i) \right\| > 2\pi \sqrt{m_i}  \eta \phi(B, \Sigma_{\epsilon}) \right]  \le 6 \exp\left[-cN \min \{\eta, \eta^2 \} + \log m_i \right].
\end{array}
\end{equation}
Taking a union bound over $K$ non-overlapping groups $G_i$ leads to
\begin{equation}\label{eqn:multiple-grp-dev-bound}
\begin{array}{lr}
\mathbb{P}\left[ \left\|\mathcal{X}'E/N \right\|_{2, max} > 2\pi \sqrt{m} \eta \phi(B, \Sigma_{\epsilon}) \right]  \le 6 \exp\left[-cN \min \{\eta, \eta^2 \} + \log p \right],
\end{array}
\end{equation}
where $m = \max_{i=1, \ldots, K} m_i$. As before, setting $\eta =
\sqrt{\log p /N}$ implies
\begin{equation}\label{eqn:max-grp-dev-bound}
\begin{array}{lr}
\mathbb{P}\left[ \left\|\mathcal{X}'E/N \right\|_{2,max} > 2\pi \sqrt{m \log p / N}  \phi(B, \Sigma_{\epsilon}) \right]  \le c_1 \exp\left[-c_2\log p \right]
\end{array}
\end{equation}
for some $c_i > 0$.
}

To derive the deviation bound on the spectral norm, we discretize the unit ball $S^{p-1}$ using an $\epsilon$-net $\mathcal{N}$ of cardinality at most $(1+2/\epsilon)^p$. An argument along the line of Supplementary Lemma F.2 in \cite{basu2015regularized} then shows that for a small enough $\epsilon>0$,
\begin{equation*}
\sup_{u,v \in \mathbb{S}^{p-1}} |u'(\mathcal{X}'E/N)v| \le
\kappa \sup_{u,v \in \mathcal{N}} |u'(\mathcal{X}'E/N)v|
\end{equation*}
for some constant $\kappa > 1$, possibly dependent on $\epsilon$. As before, taking a union bound over the $(1+2/\epsilon)^{2p}$ choices of $u$ and $v$, we get
\begin{equation*}
\begin{array}{lr}
\mathbb{P}\left[\|\mathcal{X}'E/N \| > 2\pi \kappa \eta \phi(B, \Sigma_{\epsilon}) \right] \le 6 \exp\left[-cN \min \{\eta, \eta^2 \} +2p \log(1+2/\epsilon)\right]
\end{array}
\end{equation*}
Since $N \succsim p$, choosing $\eta = \sqrt{(c_1+2\log(1+2/\epsilon))p/cN}$ ensures $\eta < 1$ for large enough $N$. Setting $\eta$ as above concludes the proof.

\item We want to obtain a lower bound on the  minimum eigenvalue of $\mathcal{X}'\mathcal{X}/N$ that holds with high probability.

Since $\Lambda_{\min}\left(\mathcal{X}'\mathcal{X}/N \right) = \inf_{v \in \mathbb{S}^{p-1}} v'(\mathcal{X}'\mathcal{X}/N)v$, we start with the single deviation bound of Proposition 2.3 in \cite{basu2015regularized},
\begin{equation*}
\begin{array}{lr}
\mathbb{P} \Big[\left|v'\left(\mathcal{X}'\mathcal{X}/N - \Gamma_X(0)\right)v\right| > 2\pi \eta \mathcal{M}(f_X) \Big]  \le 2\exp \left[ -cN \min\{ \eta, \eta^2\} \right]
\end{array}
\end{equation*}
for any $v \in \mathbb{S}^{p-1}$ and $\eta > 0$.

The next step is to extend this single deviation bound uniformly on the set $\mathbb{S}^{p-1}$. As in the proof of part 1, we construct a $\epsilon$-net of cardinality at most $(1+2/\epsilon)^p$ and approximate the quadratic form using its values on the net. This yields the following deviation bound
\begin{equation*}
\begin{array}{lr}
\mathbb{P} \Big[\sup_{v \in \mathbb{S}^{p-1}}\left|v'\left(\frac{\mathcal{X}'\mathcal{X}}{N} - \Gamma_X(0)\right)v\right| > 2\kappa \pi \eta \mathcal{M}(f_X) \Big]  \le 2\exp \left[ -cN \min\{ \eta, \eta^2\} + p \log\left(1+\frac{2}{\epsilon}\right) \right]
\end{array}
\end{equation*}
for some constant $\kappa >1$. Seting $\eta = \EuFrak{m}(f_X)/4 \kappa \pi \mathcal{M}(f_X) < 1$ and noting that $N \succsim  \mathcal{M}^2(f_X)/ \EuFrak{m}^2(f_X) p$, we conclude
\begin{equation*}
\begin{array}{lr}
\mathbb{P} \Big[\sup_{v \in \mathbb{S}^{p-1}}\left|v'\left(\mathcal{X}'\mathcal{X}/N - \Gamma_X(0)\right)v\right| > \EuFrak{m}(f_X)/2 \Big]  \le c_0\exp \left[ -c_1 \log p \right]
\end{array}
\end{equation*}
The result follows from  the lower bound on $\EuFrak{m}(f_X)$ presented in \eqref{eqn:bound-stability-measures} and the fact that $v'\Gamma_X(0)v \ge \EuFrak{m}(f_X)$ for all $v \in \mathbb{S}^{p-1}$.
\end{enumerate}
\end{proof}
\begin{proof}[Proof of Proposition \ref{prop:main-result-low-rank} and \ref{prop:main-result-low-rank+G}]
Clearly, setting $\zeta$ to the lower bound on $\EuFrak{m}(f_X)$ as in \eqref{eqn:bound-stability-measures} satisfies the RSC.
Combining the estimates of Proposition \ref{CLplusS} and \ref{prop:conc-sparse-lowrank} leads to proposition \ref{prop:main-result-low-rank} and \ref{prop:main-result-low-rank+G} after simple algebraic computation.
\end{proof}

\section{Convergence Analysis}
In the following proof, we denote $\frac{1}{2} \left\| \mathcal{Y} - \mathcal{X}B \right\|^2_F$  and the regularization term by $H(B)$ and $P_B(B,\lambda)$, respectively.
\begin{proof}[Proof of Proposition \ref{ACRNL}]
By the differentiability of $H$, we have
\begin{equation}\label{eq25}
H(B_{i+1}^{ag})=H(B_{i}^{md})+\int_0^1<\nabla H(B_{i}^{md}
 +\tau(B_{i+1}^{ag}-B_{i}^{md})), B_{i+1}^{ag}-B_{i}^{md}>d\tau
\end{equation}
Then,  by the definition of $H(B)$ and $B_{i+1}^{ag}$, and the relationship $B_{i+1}^{ag}-B_{i}^{md}=\alpha_i(B_{i+1}-B_i)$, we have
\begin{equation}\label{eq26a}
\begin{split}
l(B_{i+1}^{ag}) &=H(B_{i+1}^{ag})+P_B(B_{i+1}^{ag},\lambda)\\
&=H(B_{i}^{md})+\int_0^1\langle \mathcal{X}^{T}\mathcal{X}(B_{i}^{md}+\tau(B_{i+1}^{ag}-B_{i}^{md}))
-\mathcal{X}^{T}\mathcal{Y}, B_{i+1}^{ag}-B_{i}^{md}\rangle d\tau+ P_B(B_{i+1}^{ag},\lambda)\\
&=H(B_{i}^{md})+\int_0^1\langle \mathcal{X}^{T}(\mathcal{X}B_{i}^{md}-\mathcal{Y}), B_{i+1}^{ag}-B_{i}^{md}\rangle d\tau
 +\int_0^1\tau\|\mathcal{X}(B_{i+1}^{ag}-B_{i}^{md})\|^2d\tau+ P_B(B_{i+1}^{ag},\lambda)\\
&=H(B_{i}^{md})+\langle\nabla H(B_{i}^{md}), B_{i+1}^{ag}-B_{i}^{md}\rangle
 +\frac{1}{2}\|\mathcal{X}(B_{i+1}^{ag}-B_{i}^{md})\|^2+ P_B(B_{i+1}^{ag},\lambda)\\
&=H(B_{i}^{md})+(1-\alpha_i)\langle\nabla H(B_{i}^{md}), B_{i}^{ag}-B_{i}^{md}\rangle
+\alpha_i\langle\nabla H(B_{i}^{md}), B_{i+1}-B_{i}^{md}\rangle
+\frac{\alpha_i^2}{2}\|\mathcal{X}(B_{i+1}-B_{i})\|^2 \\
& \quad +(1-\alpha_i)P_B(B_{i}^{ag},\lambda)+\alpha_i P_B(B_{i+1},\lambda)\\
&=(1-\alpha_i)(H(B_{i}^{md})+\langle\nabla H(B_{i}^{md}), B_{i}^{ag}-B_{i}^{md}\rangle
+P_B(B_{i}^{ag},\lambda))+\alpha_i(H(B_{i}^{md})
 +\langle\nabla H(B_{i}^{md}), B_{i+1}-B_{i}^{md}\rangle)\\
& \quad+\frac{\alpha_i^2}{2}\|\mathcal{X}(B_{i+1}-B_{i})\|^2+\alpha_i P_B(B_{i+1},\lambda).
\end{split}
\end{equation}
By the convexity of $H(B)$ and \eqref{eq26a}, we have
\begin{equation}\label{eq26c}
\begin{split}
l(B_{i+1}^{ag})&=(1-\alpha_i)(H(B_{i}^{md})+\langle\nabla H(B_{i}^{md}),
 B_{i}^{ag}-B_{i}^{md}\rangle+P_B(B_{i}^{ag}))\\
&\qquad  +\alpha_i(H(B_{i}^{md})+\langle\nabla H(B_{i}^{md}), B-B_{i}^{md}\rangle)
+\alpha_i\langle\nabla H(B_{i}^{md}), B_{i+1}-B\rangle\\
&\qquad +\frac{\alpha_i^2}{2}\|\mathcal{X}(B_{i+1}-B_{i})\|^2+\alpha_i P_B(B_{i+1},\lambda)\\
& \leq (1-\alpha_i)L(B_i^{ag})+\alpha_i L(B)+\alpha_i\langle\nabla H(B_{i}^{md}), B_{i+1}-B\rangle\\
&\qquad +\frac{\alpha_i^2}{2}\|\mathcal{X}(B_{i+1}-B_{i})\|^2
 +\alpha_i P_B(B_{i+1},\lambda)-\alpha_i P_B(B,\lambda)
\end{split}
\end{equation}
Subtracting $l(B)$ from both sides of \eqref{eq26c} and rearranging some terms, we have
\begin{equation}\label{eqoptm00}
\begin{split}
&[l(B_{i+1}^{ag})-l(B)]-(1-\alpha_i)[l(B_{i}^{ag})-l(B)]\\
\leq& \alpha_i\langle\nabla H(B_{i}^{md}), B_{i+1}-B\rangle+\frac{\alpha_i^2}{2}\|\mathcal{X}(B_{i+1}-B_{i})\|^2
+\alpha_i\langle \xi, B_{i+1}-B\rangle
\end{split}
\end{equation}
where $\xi \in \partial P_B(B_{i+1},\lambda)$.
On the other hand, by the first-order optimality conditions for the sequence $B_{i+1}$ in Algorithm \ref{alg:fnsl},
we have
\begin{equation}\label{eqoptm0}
\langle \nabla H(B_{i}^{md}), B_{i+1}^e\rangle+\eta_i\langle B_{i+1}-B_i,B_{i+1}^e \rangle
+ \langle \partial P_B(B_{i+1},\lambda), B_{i+1}-B\rangle \leq 0
\end{equation}
Combining \eqref{eqoptm00} and \eqref{eqoptm0}, we obtain
\begin{equation}\label{eqoptm01}
\begin{split}
&[l(B_{i+1}^{ag})-l(B)]-(1-\alpha_i)[l(B_{i}^{ag})-l(B)]\\
\leq& \alpha_i\left\{\eta_i\langle B_{i}-B_{i+1},B_{i+1}^e \rangle+\frac{\alpha_i}{2}\|\mathcal{X}(B_{i+1}-B_{i})\|^2\right\}\\
\leq& \alpha_i\Big\{\frac{\eta_i}{2}(\|B_{i}^e\|^2-\|B_{i+1}^e\|^2-\|B_{i+1}-B_i\|^2)
 +\frac{\alpha_i}{2}\|\mathcal{X}(B_{i+1}-B_{i})\|^2\Big\}
\end{split}
\end{equation}
where we used the relationship $2\langle a-b,a-c\rangle=-\|b-c\|^2+\|a-c\|^2+\|a-b\|^2$ and the definition of $B_{i+1}^e$.

Dividing both sides of \eqref{eqoptm01} by $\alpha_i\eta_i$, we have
\begin{equation}\label{eq210}
\begin{aligned}
&\frac{1}{\alpha_i\eta_i}[l(B_{i+1}^{ag})-l(B)]
-\frac{(1-\alpha_i)}{\alpha_i\eta_i}[L(B_{i}^{ag})-L(B)]\\
\leq& \frac{1}{2}(\|B_i^e\|^2-\|B_{i+1}^e\|^2)-\frac{1}{2}(\|B_{i+1}-B_i\|^2
 -\frac{\alpha_i}{\eta_i}\|\mathcal{X}(B_{i+1}-B_{i})\|^2)\\
\leq& \frac{1}{2}(\|B_i^e\|^2-\|B_{i+1}^e\|^2)-\frac{1}{2}\Gamma_i
\end{aligned}
\end{equation}
Adding $\frac{(\beta_iQ_i+\Gamma_i)}{2}$ to both sides of \eqref{eq210}, we have
\begin{equation}\label{eq2100}
\begin{aligned}
&\frac{1}{\alpha_i\eta_i}[l(B_{i+1}^{ag})-l(B)]
-\frac{(1-\alpha_i)}{\alpha_i\eta_i}[l(B_{i}^{ag})-l(B)]
+\frac{(\beta_iQ_i+\Gamma_i)}{2}\\
\leq& \frac{1}{2}(\|B_i^e\|^2-\|B_{i+1}^e\|^2)+\frac{(\beta_i-1)Q_i}{2}+\frac{Q_i}{2}
\end{aligned}
\end{equation}
Since $Q_{i+1}=\beta_iQ_i+\Gamma_i$, $0\leq \beta_i \leq (1-\frac{1}{i})^2$, and $Q_{i}\geq -\frac{C}{(i-1)^2}$, we obtain
\begin{equation}\label{eq2101}
\begin{aligned}
&\frac{1}{\alpha_i\eta_i}[l(B_{i+1}^{ag})-l(B)]
-\frac{(1-\alpha_i)}{\alpha_i\eta_i}l(B_{i}^{ag})-l(B)]
+\frac{Q_{i+1}}{2}\\
\leq& \frac{1}{2}(\|B_i^e\|^2-\|B_{i+1}^e\|^2)+\frac{(1-\beta_i)C}{2(i-1)^2}+\frac{Q_i}{2}
\end{aligned}
\end{equation}
Setting $B=\hat{B}$, by
the relationship $\frac{1}{\alpha_i\eta_i}=\frac{1-\alpha_{i+1}}{\alpha_{i+1}\eta_{i+1}}$,
and $\alpha_1=1$, we obtain
\begin{equation}\label{eq2102}
\frac{1}{\alpha_i\eta_i}[l(B_{i+1}^{ag})-l(B)]
\leq \frac{1}{2}\|B_0-\hat{B}\|^2+\sum_{i=2}^k\frac{(1-\beta_i)C}{(i-1)^2}+\frac{C}{k^2}
\end{equation}
after summing \eqref{eq2101} from $i = 1$ to $k$.

Next we show the upper bound of $\alpha_k\eta_k$.
Since $\eta_{\min}\leq \eta_{0,1}$, we have $\eta_{\min}\leq \|\mathcal{X}^T\mathcal{X}\|_2$. Then, by definition of
$\eta_{0,i}$, we get
\begin{equation}\label{eq1a00}
\eta_{\min}\leq \eta_{0,i}\leq ||\mathcal{X}^T\mathcal{X}||_2.
\end{equation}
Denote $\sigma^l\eta_{0,i}$ by $\eta_i^{'}$, where $l$ is the number of line search in Step 3 of Algorithm \ref{alg:fnsl}. By $\frac{1}{\alpha_i\eta_i}=\frac{1-\alpha_{i+1}}{\alpha_{i+1}\eta_{i+1}}$ and the definition of $\eta_{i}$,
we have
\begin{equation}\label{eq1a0}
\frac{1}{\alpha_i\sqrt{\eta_i^{'}}}=\frac{\sqrt{1-\alpha_{i+1}}}{\alpha_{i+1}\sqrt{\eta_{i+1}^{'}}}
\leq  \frac{1}{\alpha_{i+1}\sqrt{\eta_{i+1}^{'}}}-\frac{1}{2\eta_{i+1}^{'}}  \ \  \text{for}\ \ i\geq 1
\end{equation}
Then, by induction we can get, with $\alpha_1=1$,
$$(\frac{1}{\sqrt{\eta_1^{'}}}+\frac{1}{2}\sum_{i=2}^{k}\frac{1}{\sqrt{\eta_k^{'}}})^2\leq \frac{1}{\alpha_k^2\eta_k^{'}}$$
which implies
\begin{equation}\label{eq1a}
\alpha_k\eta_k\leq \frac{1}{(\frac{1}{\sqrt{\eta_1^{'}}}+\frac{1}{2}\sum_{i=2}^{k}\frac{1}{\sqrt{\eta_k^{'}}})^2}
\leq  \frac{4\sigma||\mathcal{X}^T\mathcal{X}||_2}{(k+1)^2}   \ \ \text{for} \ \ k\geq 1
\end{equation}
where we used \eqref{eq1a0} and the definition of $\eta_i^{'}$.

Combining \eqref{eq2102} and \eqref{eq1a}, we obtain \eqref{TR1}.
\end{proof}

\section{Minimizer of optimization problems}

{For convenience, we provide solutions for the minimization problems in Algorithm 1 when the penalty term has different norms.

When $P(B,\lambda) = \lambda\|B\|_1$,
$$B_{i+1} = S(B_i - \frac{\lambda}{\eta_i}\mathcal{X}^T(\mathcal{X}B_i^{md}-\mathcal{Y}))$$
where $S(B)_{kl} = \max(0, |b_{kl}|-\frac{\lambda}{\eta_i})\text{sgn}(b_{kl})$ and $b_{kl}$ is the entry of matrix $B$ from $k$th row and $l$th column.

When $P(B,\lambda) = \lambda\|B\|_{2,1}$,
$$B_{i+1} = GS(B_i - \frac{\lambda}{\eta_i}\mathcal{X}^T(\mathcal{X}B_i^{md}-\mathcal{Y}))$$
where $GS((B)_{G_k}) = \max(0, \frac{\lambda/\eta_i}{\|(B)_{G_k}\|_2})((B)_{G_k})$

When $P(B,\lambda) = \lambda\|B\|_{\ast}$,
$$B_{i+1} = D_{\tau}(B_i - \frac{\lambda}{\eta_i}\mathcal{X}^T(\mathcal{X}B_i^{md}-\mathcal{Y}))$$
where $D_{\tau}(Z) = UD_{\tau}V^{\ast}$, which is singular value decomposition. $D_{\tau}=\text{diag}(\max(0,\sigma_i-\lambda/\eta_i))$
and $\{\sigma_i\}_{i=1}^{k}$ are the singular values of matrix $Z$}.

\section{Algorithm 2}
Due to space limitations, the detailed Algorithm 2 is
given next.
\begin{algorithm*}
	\caption{Adaptive Fast Network Structure Learning (AFNSL) method}
    \label{alg:AFNSL}
	\begin{algorithmic}
		\STATE Choose $C\geq 0, \sigma > 1, \eta_{0,1} \geq\eta_{\min}$.
               Set $\alpha_1=1, L_{1}^{ag}=L_1,R_{1}^{ag}=R_1$, and $Q_1=0$.
		\STATE \textbf{For} $i = 1,2, \ldots, k$,
		\begin{enumerate}[\hspace{.2cm}1.\hspace{.5cm}]
			\item[] \slash\slash {\it { Backtracking }}
			\item Set $\eta_i=\alpha_i\eta_{0,i}$, where $\eta_{0,i}$ is from 14. Solve $\alpha_{i}$ from
                 $\frac{1}{\alpha_{i-1}\eta_{i-1}}=\frac{1-\alpha_{i}}{\alpha_{i}\eta_{i}}$ for $i>1$. Compute
			\begin{align*}
            L_i^{md}     = & (1 - \alpha_i)L_i^{ag} + \alpha_i L_i,
            \\
			R_i^{md}     = & (1 - \alpha_i)R_i^{ag} + \alpha_i R_i,
			\\
			L_{i+1} = & \underset{L \in \Omega}{\arg\min}\left\{\langle \nabla l(L_i^{md},R_i^{md}), L\rangle+\frac{\eta_i}{2}\|L-L_i\|^{2}_F
			+ \lambda_N\|L\|_{*}\right\},
            \\
			R_{i+1} = & \underset{R}{\arg\min}\left\{\langle \nabla l(L_i^{md},R_i^{md}), R\rangle+\frac{\eta_i}{2}\|R-R_i\|^{2}_F
			+ \mu_N\|R\|_{\diamond} \right\},
            \\
			\Gamma_i=&\|L_{i+1}+R_{i+1}-L_{i}-R_{i}\|^2-\frac{\alpha_i}{\eta_i}(\|\mathcal{X}(L_{i+1}+R_{i+1}-L_{i}-R_{i}\|^2_F),
			\\
			Q_{i+1}=& \beta_iQ_i+\Gamma_i,\ \text{ where }0\leq \beta_i \leq (1-\frac{1}{i})^2.
			\end{align*}
			\item
			If $Q_{i+1}< -{C}/{i^2}$, then replace $\eta_{0,i}$ by $\sigma\eta_{0,i}$ and return to step 1.
			
			\item[] \slash\slash {\it {Updating iterates}}
			
			\item Compute
			\begin{align*}
            L_{i+1}^{ag} = &(1-\alpha_i)L_i^{ag}+\alpha_iL_{i+1},
            \\
            R_{i+1}^{ag} = &(1-\alpha_i)R_i^{ag}+\alpha_iR_{i+1}.
            \end{align*}
		\end{enumerate}
		\STATE \textbf{EndFor}
		\STATE \textbf{Output} $(L_{k+1}^{ag}, R_{k+1}^{ag})$.
	\end{algorithmic}
\end{algorithm*}

\section{Additional Numerical Experiments}

\subsection*{Sparse plus group-sparse network learning problem}\label{subsec:splusgs}
\begin{table}[t!]
\centering
\begin{tabular}{|c|c|c|c|c|c|}
\hline
p                 & N                 & model  & (TPR, FAR)(\%) & EE \\ \hline
\multirow{6}{*}{50} & \multirow{3}{*}{100} & Lasso   &  (67.9, 16.0)  & 0.40  \\ \cline{3-5}
                 &                   & SGL & (71.1, 18.4)   & 0.41 \\ \cline{3-5}
                  &                   & S+G   &   ($\mathbf{86.7}$, $\mathbf{16.5}$) &  $\mathbf{0.39}$    \\ \cline{2-5}
                  & \multirow{3}{*}{200} & Lasso   &  (69.3, 10.4)    & 0.32  \\ \cline{3-5}
                  &                   & SGL & (71.1, 11.6) & 0.33 \\  \cline{3-5}
                  &                   & S+G   & ($\mathbf{90.4}$, $\mathbf{10.4}$)  & $\mathbf{0.28}$   \\ \hline
\multirow{6}{*}{100} & \multirow{3}{*}{100} & Lasso   &  (70.2, 21.2) & 0.44  \\ \cline{3-5}
                 &                   & SGL  & (71.6, 23.5)&  0.46 \\ \cline{3-5}
                  &                   & S+G  &  ($\mathbf{84.7}$, $\mathbf{21.2}$) & $\mathbf{0.43}$ \\ \cline{2-5}
                  & \multirow{3}{*}{200} & Lasso   & (87.8, 17.4) & 0.34  \\ \cline{3-5}
                  &                   & SGL & (77.7, 19.7)   &  0.35\\ \cline{3-5}
                  &                   & S+G  & ($\mathbf{87.8}$, $\mathbf{17.4}$)   & $\mathbf{0.32}$ \\ \hline
\multirow{6}{*}{200} & \multirow{3}{*}{100} & Lasso   &  (75.2, 43.4)  & 0.76  \\ \cline{3-5}
                 &                   & SGL & (76.7,  43.3)  &  0.76\\ \cline{3-5}
                  &                   & S+G   & ($\mathbf{84.3}$, $\mathbf{43.7}$)   & $\mathbf{0.74}$   \\ \cline{2-5}
                  & \multirow{3}{*}{200} & Lasso   & (70.1, 23.1)  & 0.55  \\ \cline{3-5}
                 &                   & SGL &   (71.4, 23.5)  &  0.55 \\ \cline{3-5}
                  &                   & S+G   & ($\mathbf{78.5}$, $\mathbf{23.5}$)  & $\mathbf{0.54}$ \\ \hline
\end{tabular}
\caption{Performance comparison of S+G with Lasso and SGL on sparse plus group-sparse network identification problem.}
\label{my-tabelsgs}
\end{table}

We discuss the experimental setting for a
sparse plus group-sparse transition matrix and compare the performance with methods that either assume pure sparsity (lasso) or pure group sparsity (group lasso, SGL).

We consider three different VAR(1) models with $p=50, \, 100$ and $200$ variables. For each of these models, we generate $N=100$ and $200$ observations from a Gaussian VAR(1) process (see equation (9) in the main file) where $B$ can be decomposed into a sparse matrix $S$ with $5\%$ non-zero entries and a group-sparse matrix $G$ with each column corresponding to a different group (hence we have $p$ groups in $G$).
We randomly select two columns (groups) consisting of two super hubs, in which the strength of the edges is generated independently from a Gaussian distribution. To be more consistent with our error bound analysis, we set $\gamma$ to $p/2$ in \eqref{eqn:opt-sparse-lowrank}. The network topology of $S$ is generated the same way as that in subsection \ref{subsec:sparse} except that the occurring probability of the edge from one node to another node $\xi$ is set to be $0.05$. Subsequently, the entries of the corresponding two columns in $G$ are set to be zero. Finally, we rescale the entries of $B$ so that a desired spectral radius is reached. We employ the TPR, FAR, and EE metrics in the comparisons.

%
Table \ref{my-tabelsgs} shows the experimental results for different network size and number of samples. It can be seen that utilizing an S+G model enables us to identify a larger portion of correct nonzero numbers in $B$, while achieving almost the same false alarm rate compared to lasso and SGL. Particularly, the S+G model can recover the group information perfectly while lasso and SGL miss a number of edges, as expected since they correspond to misspecified structures. Further, the
S+G model exhibits the lowest estimation error amongst them. It should be noted that the advantage of the S+G model will be more evident if the strength of the edges within the groups is weaker.

In Supplement Figure 1 the true network structure of S+G, S and G with $p=50$ and $N=200$ is depicted. The recovered network structures by the S+G model are given in Supplement Figure 2 (top) and the group-sparse components are recovered perfectly. We also compare the recovered network structures by S+G, SGL and lasso models, see Supplement Figure 2 (bottom), from which we can see that S+G performs best.


\begin{figure}[!tbp]
      \centering
      {\includegraphics[width=\textwidth]{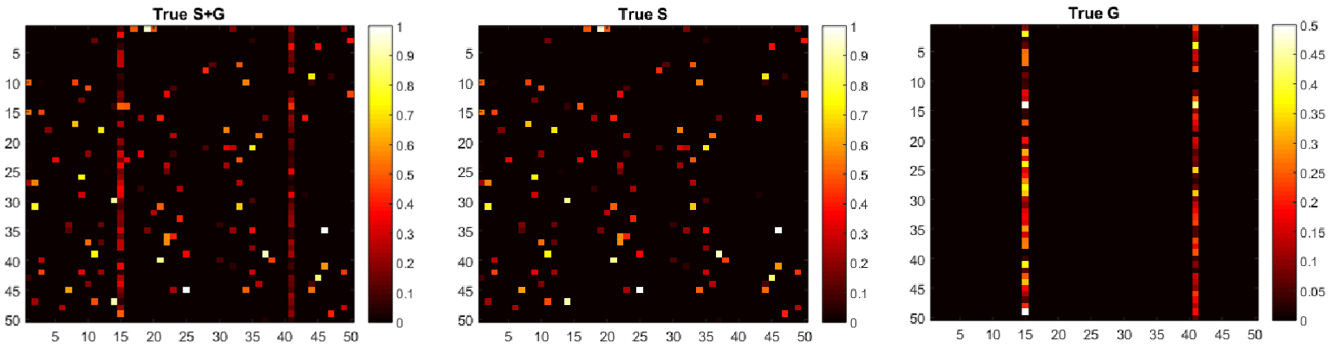}} 
\caption{True network structure of S+G, S and G with $p=50$ and $N=200$.}
\label{fig:true-transition-matrix-splusgs}
\end{figure}

\begin{figure}[!tbp]
      \centering
      {\includegraphics[width=\textwidth]{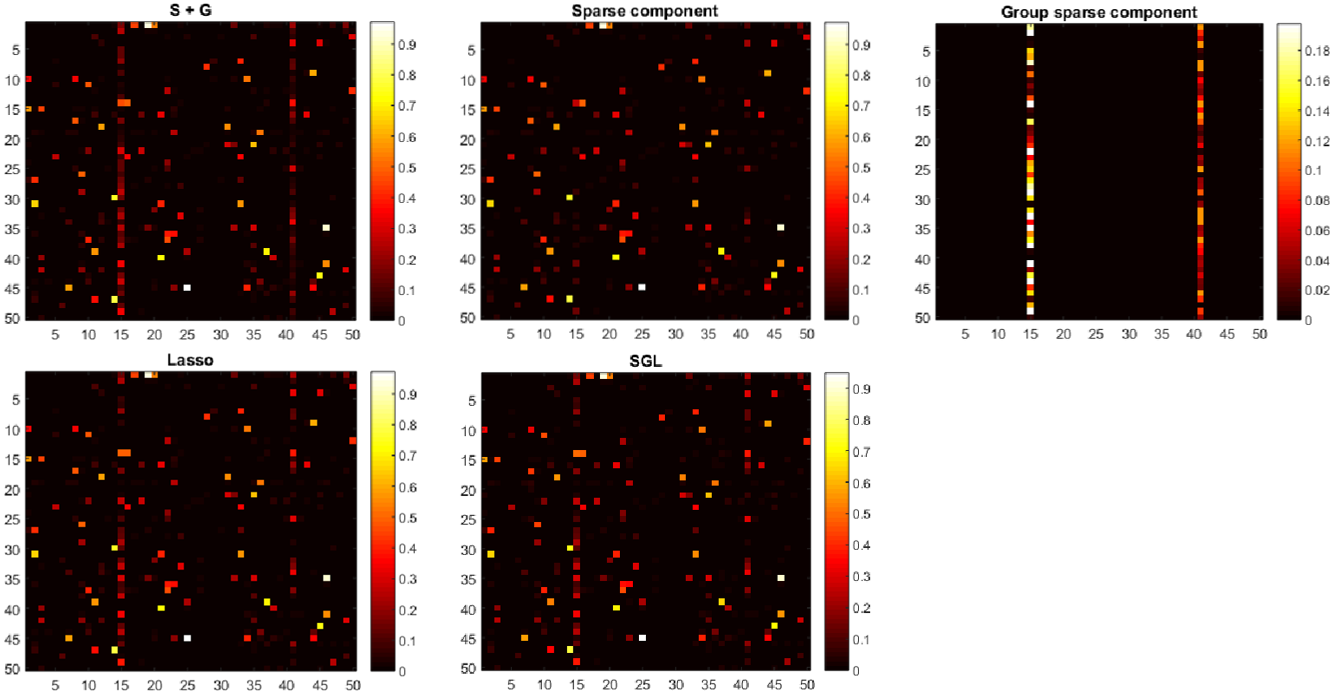}} 
\caption{Network structure identified by $\hat{S}+\hat{G}$ (top) (with its sparse and group-sparse components), Lasso and SGL (bottom), respectively.}
\label{fig:Est-transition-matrix-splusgs}
\end{figure}

Figure \ref{fig:true-transition-matrix-splusgs} shows the true network structure
S+G, S and G with $p=50$ and $N=200$. The recovered network structures by S+G model are given in Figure \ref{fig:Est-transition-matrix-splusgs} (top). Clearly, the group-sparse components are recovered perfectly. We also compare the recovered network structures by S+G, SGL and lasso models, see Figure \ref{fig:Est-transition-matrix-splusgs} (bottom), from which we can see that S+G performs best.

Figures \ref{fig:transition-matrix-lr-sparse-groupsparse} shows the estimated Granger causal network using low-rank plus sparse plus group-sparse VAR estimates using a VAR(1) model with $p = 50$ and $n = 300$. The top panel of the Figures \ref{fig:transition-matrix-lr-sparse-groupsparse} displays the true the structure of sparse plus group-sparse components $S+G$, the structure of group-sparse component $G$, and the structure of low-rank component $L$. The bottom panel of the Figures \ref{fig:transition-matrix-lr-sparse-groupsparse} displays the
structure of the Granger causal networks estimated by the method of $L+S$ and $S+G$ modeling strategy. It can be seen that the $S+G$ estimate selects many false positives due to its failure to account for the latent structure. On the other hand, the $L+S$ method provides
an estimate exhibiting significantly fewer false positives entries as that by $L+S+G$.

\begin{figure}[!tbp]
      \centering
      \includegraphics[width=\textwidth]{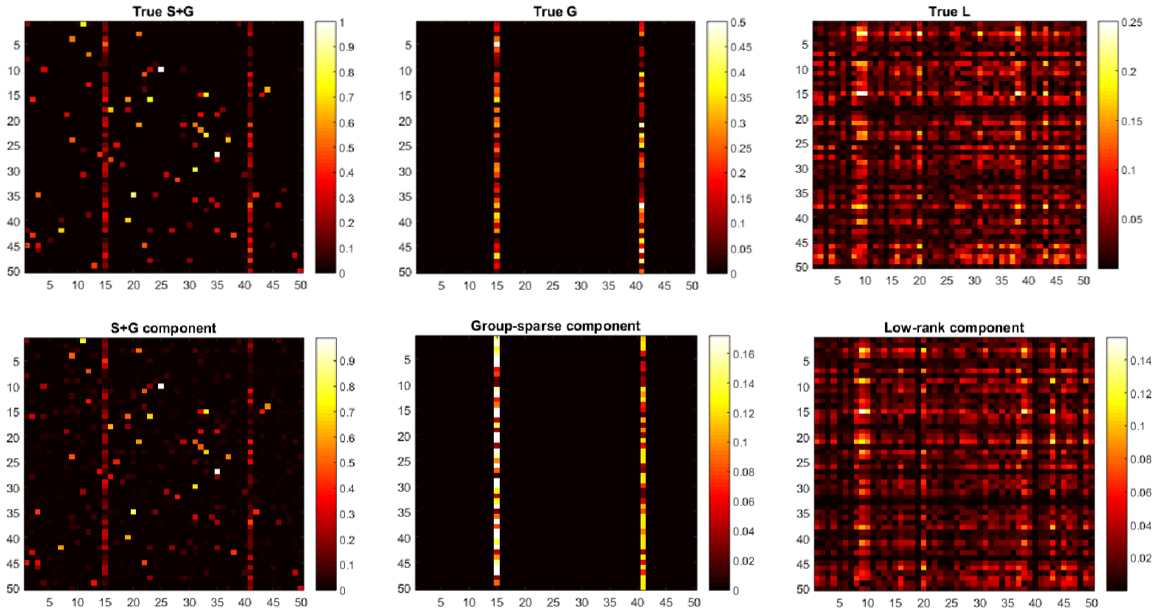} 
\caption{Estimated Granger causal networks using low-rank plus sparse plus group-sparse VAR estimates. The top panel displays the true structure of the sparse plus group-sparse components $S+G$, the group-sparse component $G$, and low-rank component $L$ of the true transition matrix $B$.  The bottom panel displays the structure of the Granger causal networks estimated by L+S+G, the estimated sparse plus group-sparse components ($\hat{S}+\hat{G}$), the estimated group-sparse component ($\hat{G}$), and the estimated low-rank component ($\hat{L}$).}
\label{fig:transition-matrix-lr-sparse-groupsparse}
\end{figure}

\begin{figure}[!tbp]
      \centering
      \includegraphics[width=0.7\textwidth]{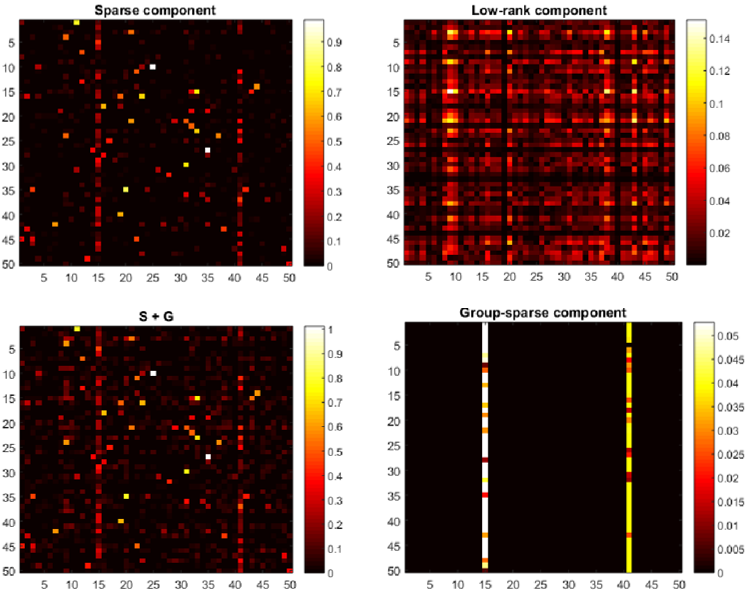} 
\caption{Estimated Granger causal networks using low-rank plus sparse and  group-sparse VAR estimates. The top panel displays the structure of the Granger causal networks estimated by L+S, the estimated sparse component ($\hat{S}$) and the estimated low-rank component ($\hat{L}$). The bottom panel displays the structure of the Granger causal networks estimated by S+G, the estimated sparse plus group-sparse components ($\hat{S}+\hat{G}$) and the estimated group-sparse component ($\hat{G}$).}
\label{fig:transition-matrix-lr-sparse-groupsparse}
\end{figure}

The learned Granger causal network structures estimated by lasso are given in figure \ref{fig:transition-matrix-lasso-graph}, which correspond to the case of pre-crisis, crisis and post-crisis, respectively. From the network structures by Lasso and S+L strategy, we can see that companies are highly connected when financial crisis is coming. After accounting for the low-rank component, the estimated sparse components by L+S strategy are usually more sparse than that by Lasso.

\begin{figure}[!tbp]
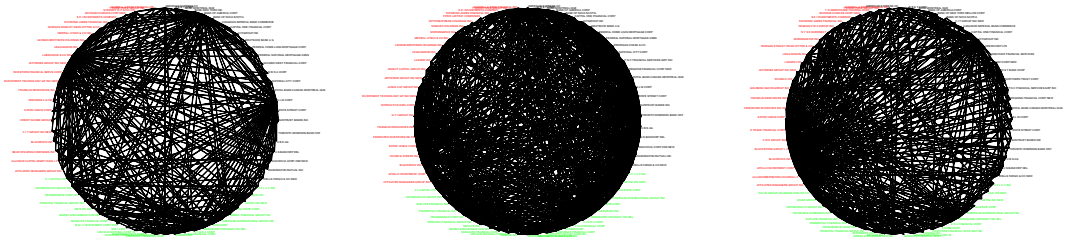

\centering
\begin{subfigure}{.33\textwidth}
  \centering
  \includegraphics[page=1,width=\linewidth]{preCrisisS2.pdf}
\end{subfigure}%
\begin{subfigure}{.33\textwidth}
  \centering
  \includegraphics[page=1,width=\linewidth]{CrisisS2.pdf}
\end{subfigure}
\begin{subfigure}{.33\textwidth}
  \centering
  \includegraphics[page=1,width=\linewidth]{postCrisisS2.pdf}
\end{subfigure}
\caption{Graph layout of Pre-crisis case, Crisis case and Post-crisis estimated by lasso VAR estimates. The left displays the structure of the Granger causal networks of the estimated $\hat{B}$ with 383 nonzero entries. The middle displays that of the estimated $\hat{B}$ with 801 nonzero entries. The right displays that of the estimated $\hat{B}$ with 477 nonzero entries.}
\label{fig:transition-matrix-lasso-graph}
\end{figure}

\newpage



\begin{thebibliography}{1}

\bibitem{Swindlehurst1998likelihood}
AL.~Swindlehurst and P.~Stoica, ``Maximum likelihood methods in radar array signal processing,'' \emph{Proceedings of the IEEE}, vol.~86, no.~2, pp. 421--441, 1998.

\bibitem{Roman2000radar}
JR.~Roman, M.~Rangaswamy, DW.~ Davis, Q.~Zhang, B.~Himed, and J.~Michels, ``Parametric adaptive matched filter for airborne radar applications,'' \emph{IEEE Transactions on Aerospace and Electronic Systems}, vol.~36, no.~2, pp. 677--692, 2000.


\bibitem{lin2017regularized}
J.~Lin and G.~Michailidis, ``Regularized estimation and testing for
  high-dimensional multi-block vector-autoregressive models,'' \emph{Journal
  of Machine Learning Research}, 18(1), 4188-4236, 2017.

\bibitem{basu2017system}
S.~Basu, S.~Das, G.~Michailidis, and A.~K. Purnanandam, ``A system-wide
  approach to measure connectivity in the financial sector,'' Social Science
  Research Network, \texttt{http://dx.doi.org/10.2139/ssrn.2816137}, 2017.

\bibitem{michailidis2013autoregressive}
G.~Michailidis and F.~d’Alch{\'e} Buc, ``Autoregressive models for gene
  regulatory network inference: Sparsity, stability and causality issues,''
  \emph{Mathematical biosciences}, vol. 246, no.~2, pp. 326--334, 2013.

\bibitem{van2010exploring}
M.~P. Van Den~Heuvel and H.~E.~H. Pol, ``Exploring the brain network: a review
  on resting-state fmri functional connectivity,'' \emph{European
  neuropsychopharmacology}, vol.~20, no.~8, pp. 519--534, 2010.

\bibitem{lutkepohl2005new}
H.~L{\"u}tkepohl, \emph{{New introduction to multiple time series
  analysis}}.\hskip 1em plus 0.5em minus 0.4em\relax Springer, 2005.

\bibitem{basu2015regularized}
S.~Basu and G.~Michailidis, ``Regularized estimation in sparse high-dimensional
  time series models,'' \emph{The Annals of Statistics}, vol.~43, no.~4, pp.
  1535--1567, 2015.

 \bibitem{he2013network}
Y.~He, Y.~She, and D.~Wu, ``Stationary-sparse causality network learning,''
 \emph{The Journal of Machine Learning Research}, vol.~14, no.~1, pp. 3073--3104, 2013.

\bibitem{she2015network}
Y.~She, Y.~He, S.~Li, and D.~Wu, ``Joint association graph screening and decomposition for large-scale linear dynamical systems,''
 \emph{IEEE Transactions on Signal Processing}, vol.~63, no.~2, pp. 389--401, 2015.


\bibitem{jiang2017sparse}
J. He, ``Sparse estimation based on square root nonconvex optimization in high-dimensional data,"
\emph{Neurocomputing}, 2017


\bibitem{billio2012econometric}
M.~Billio, M.~Getmansky, A.~W. Lo, and L.~Pelizzon, ``Econometric measures of
  connectedness and systemic risk in the finance and insurance sectors,''
  \emph{Journal of Financial Economics}, vol. 104, no.~3, pp. 535--559, 2012.

\bibitem{tan2014learning}
K.~M. Tan, P.~London, K.~Mohan, S.-I. Lee, M.~Fazel, and D.~M. Witten,
  ``Learning graphical models with hubs.'' \emph{Journal of Machine Learning
  Research}, vol.~15, no.~1, pp. 3297--3331, 2014.

\bibitem{sharaev2016effective}
M.~G. Sharaev, V.~V. Zavyalova, V.~L. Ushakov, S.~I. Kartashov, and B.~M.
  Velichkovsky, ``Effective connectivity within the default mode network:
  dynamic causal modeling of resting-state fmri data,'' \emph{Frontiers in
  human neuroscience}, vol.~10, 2016.


\bibitem{velu1986reduced}
R.~P. Velu, G.~C. Reinsel, and D.~W. Wichern, ``Reduced rank models for
  multiple time series,'' \emph{Biometrika}, vol.~73, no.~1, pp. 105--118,
  1986.

\bibitem{box1977canonical}
G.~E. Box and G.~C. Tiao, ``A canonical analysis of multiple time series,''
  \emph{Biometrika}, vol.~64, no.~2, pp. 355--365, 1977.

\bibitem{fazel2003log}
M.~Fazel, H.~Hindi, and S.~P. Boyd, ``Log-det heuristic for matrix rank
  minimization with applications to hankel and euclidean distance matrices,''
  in \emph{American Control Conference, 2003. Proceedings of the 2003},
  vol.~3.\hskip 1em plus 0.5em minus 0.4em\relax IEEE, 2003, pp. 2156--2162.

\bibitem{chandrasekaran2011rank}
V.~Chandrasekaran, S.~Sanghavi, P.~A. Parrilo, and A.~S. Willsky,
  ``Rank-sparsity incoherence for matrix decomposition,'' \emph{SIAM Journal on
  Optimization}, vol.~21, no.~2, pp. 572--596, 2011.

\bibitem{agarwal2012}
A.~Agarwal, S.~Negahban, and M.~J. Wainwright, ``Noisy matrix decomposition via
  convex relaxation: Optimal rates in high dimensions,'' \emph{The Annals of
  Statistics}, vol.~40, no.~2, pp. 1171--1197, 04 2012. [Online]. Available:
  \url{http://dx.doi.org/10.1214/12-AOS1000}



\bibitem{yang2017sparse+}
E.~Yang and A.~C. Lozano, ``Sparse+ group-sparse dirty models: Statistical
  guarantees without unreasonable conditions and a case for non-convexity,'' in
  \emph{International Conference on Machine Learning}, 2017, pp. 3911--3920.



\bibitem{melnyk2016estimating}
I.~Melnyk and A.~Banerjee, ``Estimating structured vector autoregressive
  models,'' in \emph{International Conference on Machine Learning}, 2016, pp.
  830--839.


\bibitem{basu2015network}
S.~Basu, A.~Shojaie, and G.~Michailidis, ``Network granger causality with
  inherent grouping structure,'' \emph{The Journal of Machine Learning
  Research}, vol.~16, no.~1, pp. 417--453, 2015.







\bibitem{fazel2002matrix}
M.~Fazel, ``Matrix rank minimization with applications,'' Ph.D. dissertation,
  PhD thesis, Stanford University, 2002.


\bibitem{yuan2009sparse}
X.~Yuan and J.~Yang, ``Sparse and low-rank matrix decomposition via alternating
  direction methods,'' \emph{preprint}, 2009.

\bibitem{barzilai1988two}
J.~ Barzilai, J.M.~Borwein, ``Two-point step size gradient methods,'' \emph{IMA journal of numerical analysis}, 8(1), pp.141-148, 1988.


\bibitem{chandrasekaran2012latent}
V.~Chandrasekaran, P.~A. Parrilo, A.~S. Willsky \emph{et~al.}, ``Latent
  variable graphical model selection via convex optimization,'' \emph{The
  Annals of Statistics}, vol.~40, no.~4, pp. 1935--1967, 2012.

\bibitem{daubechies2004iterative}
I.~Daubechies, M.~Defrise, and C.~De~Mol, ``An iterative thresholding algorithm
  for linear inverse problems with a sparsity constraint,''
  \emph{Communications on pure and applied mathematics}, vol.~57, no.~11, pp.
  1413--1457, 2004.

\bibitem{cai1956singular}
J.-F. Cai, E.~J. Cand{\`e}s, and Z.~Shen, ``A singular value thresholding
  algorithm for matrix completion,'' \emph{SIAM Journal on Optimization},
  vol.~20, no.~4, pp. 1956-1982,  2010

\bibitem{nesterov2004introductory}
Y.~E. Nesterov, \emph{Introductory Lectures on Convex Optimization: A Basic
  Course}.\hskip 1em plus 0.5em minus 0.4em\relax Massachusetts: Kluwer
  Academic Publishers, 2004.

\bibitem{tseng2008accelerated}
P.~Tseng, ``On accelerated proximal gradient methods for convex-concave
  optimization,'' \emph{submitted to SIAM Journal on Optimization}, 2008.

\bibitem{chen2017accelerated}
Y.~Chen, X.~Li, Y.~Ouyang, and E.~Pasiliao, ``Accelerated Bregman Operator Splitting with Backtracking,''
\emph{Inverse Problems and Imaging}, vol.~11, no.~6, 2017.


\bibitem{lin2011some}
Z.~Lin, ``Some software packages for partial svd computation,'' \emph{arXiv
  preprint arXiv:1108.1548}, 2011.

\bibitem{beck2009fast}
A.~Beck and M.~Teboulle, ``A fast iterative shrinkage-thresholding algorithm
  for linear inverse problems,'' \emph{SIAM journal on imaging sciences},
  vol.~2, no.~1, pp. 183--202, 2009.

\bibitem{ji2009accelerated}
S.~Ji and J.~Ye, ``An accelerated gradient method for trace norm
  minimization,'' in \emph{Proceedings of the 26th annual international
  conference on machine learning}.\hskip 1em plus 0.5em minus 0.4em\relax ACM,
  2009, pp. 457--464.

\bibitem{recht2010guaranteed}
B.~Recht, M.~Fazel, and P.~A. Parrilo, ``Guaranteed minimum-rank solutions of
  linear matrix equations via nuclear norm minimization,'' \emph{SIAM review},
  vol.~52, no.~3, pp. 471--501, 2010.

\bibitem{negahban2009unified}
S.~Negahban, B.~Yu, M.~J. Wainwright, and P.~K. Ravikumar, ``A unified
  framework for high-dimensional analysis of $m$-estimators with decomposable
  regularizers,'' in \emph{Advances in Neural Information Processing Systems},
  2009, pp. 1348--1356.

\bibitem{zorzi2016ar}
M.~Zorzi, and R.~Sepulchre, 2016, ``AR identification of latent-variable graphical models'', \emph{IEEE Transactions on Automatic Control}, 61(9), pp.2327-2340.

\bibitem{zorzi2017sparse}
M.~Zorzi, and A.~Chiuso, 2017, ``Sparse plus low rank network identification: A nonparametric approach,'' \emph{Automatica}, 76, pp.355-366.

\bibitem{basu2014modeling}
S.~Basu, 2014. ``Modeling and Estimation of High-dimensional Vector Autoregressions,'' \emph{PhD Thesis, University of Michigan}.


\bibitem{giannone2015prior}
D.~Giannone, M.~Lenza, and G.E.~Primiceri, 2015, ``Prior selection for vector autoregressions,'' \emph{Review of Economics and Statistics}, 97(2), pp.436-451.

\bibitem{banburra2010large}
M.~Banbura, D.~Giannone, and L.~Reichlin, 2010, ``Large Bayesian vector auto regressions,'' \emph{Journal of Applied Econometrics}, 25(1), pp.71-92.

\bibitem{higham2002computing}
N.J.~Higham, 2002, ``Computing the nearest correlation matrix—a problem from finance,'' \emph{IMA journal of Numerical Analysis}, 22(3), pp.329-343.



\end{thebibliography}
\end{document}